\providecommand{\journal}{0}

\documentclass[a4paper, twoside, reqno]{amsart}

\usepackage{amsthm}
\usepackage{amssymb}
\usepackage[foot]{amsaddr}
\usepackage[
    backend=biber,
    natbib=true,
    citestyle=authoryear-comp,
    bibstyle=authoryear,
    maxcitenames=2,
    maxbibnames=9,
    uniquelist=false,
    uniquename=false,
    dashed=false,
    firstinits=true,
    doi=false,
    isbn=false,
    date=year]{biblatex}
\usepackage[
    hidelinks,
    colorlinks=true,
    allcolors=blue,
    pdfproducer={Latex with hyperref},
    pdfcreator={pdflatex}]{hyperref}
\usepackage[font=footnotesize]{caption}
\usepackage{graphicx}
\usepackage{float}
\usepackage{subcaption}
\usepackage{marvosym}
\usepackage{textcase}
\usepackage{mathtools}
\usepackage{glossaries-prefix}
\usepackage{bbm} %
\usepackage{enumitem}
\usepackage{upgreek}
\usepackage{booktabs}
\usepackage{multicol}
\usepackage{multirow}

\newcommand{\papertitle}{Mode-based estimation of the center of symmetry}

\newcommand{\myorcid}[1]{%
    \href{https://orcid.org/#1}{%
        \includegraphics[height=2ex]{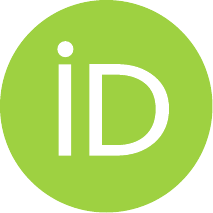}%
        \url{https://orcid.org/#1}
    }%
}

\newcommand{\joseechacon}{Jos\'e E. Chac\'on}

\ifnum\journal=0
    \newcommand{\addressjoseechaon}{Departamento de Matem\'aticas, Universidad de Extremadura, Badajoz, Spain.}
\fi

\newcommand{\emailjoseechacon}{jechacon@unex.es}
\newcommand{\orcidjoseechacon}{0000-0002-3675-1960}
\newcommand{\footjoseechacon}{$^{\dagger}$}
\newcommand{\footaddressjoseechaon}{\footjoseechacon\addressjoseechaon}
\newcommand{\footorcidjoseechacon}{\footjoseechacon\myorcid{\orcidjoseechacon}}

\newcommand{\javierfdezserrano}{Javier Fern\'andez Serrano}

\ifnum\journal=0
    \newcommand{\addressjavierfdezserrano}{Departamento de Matem\'aticas, Universidad Aut\'onoma de Madrid, Madrid, Spain.}
\fi

\newcommand{\emailjavierfdezserrano}{javier.fernandezs01@estudiante.uam.es}
\newcommand{\orcidjavierfdezserrano}{0000-0001-5270-9941}
\newcommand{\footjavierfdezserrano}{$^{\ddagger}$}
\newcommand{\footaddressjavierfdezserrano}{\footjavierfdezserrano\addressjavierfdezserrano}
\newcommand{\footorcidjavierfdezserrano}{\footjavierfdezserrano\myorcid{\orcidjavierfdezserrano}}

\newcommand{\keywordkme}{kernel mode estimator}
\newcommand{\keywordcenterofsymmetry}{center of symmetry}
\newcommand{\keywordunimodality}{unimodality}
\newcommand{\keywordregularlyvarying}{regularly varying density}
\newcommand{\keywordredescendingmestimator}{redescending M-estimator}
\newcommand{\keywordefficientnonparametrics}{efficient nonparametric estimation}

\newcommand{\mscnonparamestimation}{62G05}
\newcommand{\mscdensityestimation}{62G07}
\newcommand{\mscnonparamrobustness}{62G35}

\newcommand{\smallabsvalof}[1]{\vert #1 \vert}
\newcommand{\absvalof}[1]{\left\vert #1 \right\vert}

\newcommand{\derivativeof}[1]{#1'}
\newcommand{\secondderivativeof}[1]{#1''}
\newcommand{\derivativewrtx}{\frac{\differential}{\dx}}

\newcommand{\indicator}[2]{\mathbbm{1}_{#1}(#2)}
\newcommand{\convolutionbetween}[2]{{#1} * {#2}}

\newcommand{\reals}{\mathbb{R}}
\newcommand{\naturals}{\mathbb{N}}

\newcommand{\x}{x}
\newcommand{\y}{y}
\newcommand{\genindexi}{i}
\newcommand{\genindexj}{j}
\newcommand{\constant}{C}
\newcommand{\genericexponent}{a}

\newcommand{\differential}{d}
\newcommand{\dx}{\differential\x}
\newcommand{\dy}{\differential\y}

\newcommand{\intoverreals}{\int_{-\infty}^{\infty}}
\newcommand{\intoverpositivereals}{\int_0^{\infty}}
\newcommand{\intoverunitinterval}{\int_0^1}

\newcommand{\randomvarx}{X}
\newcommand{\randomvary}{Y}
\newcommand{\samplerandomvar}{\randomvarx}
\newcommand{\samplerandomvarindex}[1]{\samplerandomvar_{#1}}
\newcommand{\randomvarobsindex}[1]{x_{#1}}
\newcommand{\ithsamplevar}{\samplerandomvarindex{\genindexi}}
\newcommand{\samplesize}{n}
\newcommand{\sqrtsamplesize}{\sqrt{\samplesize}}
\newcommand{\sumoverallsample}{\sum_{\genindexi = 1}^{\samplesize}}
\newcommand{\shiftedithsamplevar}{\x - \ithsamplevar}
\newcommand{\randomsample}{\samplerandomvarindex{1}, \dots, \samplerandomvarindex{\samplesize}}

\newcommand{\kernel}{K}
\newcommand{\kernelatx}{\kernel(\x)}
\newcommand{\bandwidth}{h}
\newcommand{\squarebandwidth}{\bandwidth^2}
\newcommand{\cubedbandwidth}{\bandwidth^3}
\newcommand{\kernelh}{\kernel_{\bandwidth}}
\newcommand{\bandwidthupperbound}{\bandwidth_{\mathrm{max}}}
\newcommand{\optmialbandwidth}{\bandwidth^{\star}}
\newcommand{\bandwidthsecondary}{g}
\newcommand{\samplesizebandwidth}{\samplesize\bandwidth}
\newcommand{\estimateof}[1]{\hat{#1}}
\newcommand{\kernelestimateof}[2]{\estimateof{#1}_{\samplesize, #2}}
\newcommand{\kde}{\kernelestimateof{\pdf}{\bandwidth}}
\newcommand{\kdesecondary}{\kernelestimateof{\pdf}{\bandwidthsecondary}}
\newcommand{\kdeat}[1]{\kde(#1)}
\newcommand{\kdeatx}{\kdeat{\x}}

\newcommand{\epanechnikovkernel}{\kernel_{\mathrm{Ep}}}
\newcommand{\indicatoroversupportatx}{\indicator{(-1, 1)}{\x}}

\newcommand{\derivativeofkernel}{\derivativeof{\kernel}}
\newcommand{\derivativeofkernelh}{\derivativeof{\kernelh}}
\newcommand{\secondderivativeofkernel}{\secondderivativeof{\kernel}}
\newcommand{\secondderivativeofkernelh}{\secondderivativeof{\kernelh}}

\newcommand{\followsdistr}{\sim}
\newcommand{\samedistribution}{\equiv_{d}}
\newcommand{\weaklyconverges}[2]{{#1} \leadsto {#2}}

\newcommand{\expectedvalueof}[1]{\mathbb{E}[#1]}
\newcommand{\varianceof}[1]{\mathrm{Var}(#1)}

\newcommand{\normaldistribution}[2]{\mathcal{N}(#1, #2)}

\newcommand{\goesto}{\rightarrow}
\newcommand{\goestoinfty}{\goesto \infty}
\newcommand{\goestopositivezero}{\goesto 0^+}

\newcommand{\samplesizegoestoinfty}{\samplesize \goestoinfty}
\newcommand{\bandwidthgoestoinfty}{\bandwidth \goestoinfty}
\newcommand{\bandwidthgoestopositivezero}{\bandwidth \goestopositivezero}
\newcommand{\nugoestoinfty}{\studenttdegreesoffreedom \goestoinfty}

\newcommand{\limxinf}{\lim_{\x \goestoinfty}}
\newcommand{\limhinf}{\lim_{\bandwidthgoestoinfty}}
\newcommand{\limhzero}{\lim_{\bandwidth \goesto 0^+}}

\newcommand{\asymptoticallyequivalent}{\sim}

\DeclareMathOperator*{\argmax}{arg\,max}
\DeclareMathOperator*{\argmin}{arg\,min}

\newcommand{\xinreals}{\x \in \reals}
\newcommand{\argmaxinrealsof}[1]{%
    \argmax_{\xinreals}{#1}
}

\newcommand{\functionmap}[2]{#1 \rightarrow #2}
\newcommand{\lambdafunction}[2]{#1 \mapsto #2}
\newcommand{\functiondef}[3]{#1: \functionmap{#2}{#3}}

\newcommand{\function}{f}
\newcommand{\genericfunction}{\varphi}
\newcommand{\pdf}{f}
\newcommand{\derivativeofpdf}{\derivativeof{\pdf}}
\newcommand{\secondderivativeofpdf}{\secondderivativeof{\pdf}}
\newcommand{\cdf}{F}

\newcommand{\mestimatorloss}{\rho}
\newcommand{\mestimatorlossderivative}{\psi}
\newcommand{\mestimatorparamgeneric}{\theta}
\newcommand{\mestimatorparamf}{\mestimatorparamgeneric_{\cdf}}
\newcommand{\mestimatorparam}{\mestimatorparamgeneric}
\newcommand{\mestimator}{\estimateof{\mestimatorparamgeneric}_{\samplesize}}

\newcommand{\mean}{\mu}
\newcommand{\median}{M}

\newcommand{\sampleestimatorof}[1]{\estimateof{#1}_{\samplesize}}
\newcommand{\samplemeanestimator}{\sampleestimatorof{\mean}}
\newcommand{\samplemedianestimator}{\sampleestimatorof{\median}}

\newcommand{\ithsamplevarminusmestimator}{\ithsamplevar - \mestimator}

\newcommand{\symmetrycenter}{\theta}
\newcommand{\inflectionpoint}{a}
\newcommand{\centeredvar}[1]{#1 - \symmetrycenter}
\newcommand{\centeredsamplerandomvar}{\centeredvar{\samplerandomvar}}
\newcommand{\centeredx}{\centeredvar{\x}}
\newcommand{\centeredkme}{\samplemode - \symmetrycenter}

\newcommand{\untranslatedpdf}{{\pdf}_0}
\newcommand{\mode}{\mathsf{m}}
\newcommand{\samplemode}{\kernelestimateof{\mode}{\bandwidth}}
\newcommand{\ithsamplevarminussamplemode}{\ithsamplevar - \samplemode}

\newcommand{\smoothingof}[1]{\bar{#1}_{\bandwidth}}
\newcommand{\smoothedpdf}{\smoothingof{\pdf}}
\newcommand{\smoothedmode}{\smoothingof{\mode}}

\newcommand{\stddev}{\sigma}
\newcommand{\squarestddev}{\stddev^2}
\newcommand{\pdfkerneldependence}[1]{{#1}_{\pdf, \kernel}}
\newcommand{\stddevmode}{\pdfkerneldependence{\stddev}}
\newcommand{\squarestddevmode}{\stddevmode^2}
\newcommand{\squarestddevmodeexpanded}{%
    \abscurvatureatsymmetrycenter^{-2}
    \pdfatsymmetrycenter
    \kernelroughness
}
\newcommand{\variancefunction}{\mathcal{V}}

\newcommand{\samplesizebandwidththree}{\samplesize \cubedbandwidth}
\newcommand{\sqrtsamplesizebandwidththree}{\sqrt{\samplesizebandwidththree}}

\newcommand{\kernelderivativesquareltwonorm}{\intoverreals \derivativeofkernel(\x)^2 \ \dx}

\newcommand{\kernelroughness}{\mathcal{R}(\derivativeofkernel)}

\newcommand{\squaredsmoothedstddevmodeof}[1]{\stddev_{\pdf, #1, \bandwidth}^2}
\newcommand{\squaredsmoothedstddevmode}{\squaredsmoothedstddevmodeof{\kernel}}

\newcommand{\convolutionvariancenumerator}{[\convolutionbetween{(\derivativeofkernelh)^2}{\pdf}](\symmetrycenter)}

\newcommand{\convolutionvariancedenominator}{(\convolutionbetween{\secondderivativeofkernelh}{\pdf})(\symmetrycenter)}

\newcommand{\convolutionvariance}{%
    \frac
    {\convolutionvariancenumerator}
    {\convolutionvariancedenominator^2}
}

\newcommand{\pdfatsymmetrycenter}{\pdf(\symmetrycenter)}

\newcommand{\convolvedpdfsecondderivative}{(\convolutionbetween{\kernelh}{\secondderivativeof{\pdf}})(\symmetrycenter)}

\newcommand{\curvatureatsymmetrycenter}{\secondderivativeof{\pdf}(\symmetrycenter)}
\newcommand{\abscurvatureatsymmetrycenter}{\absvalof{\curvatureatsymmetrycenter}}

\newcommand{\standardgaussianpdf}{\phi}

\newcommand{\pdfatx}{\pdf(\x)}
\newcommand{\smoothedpdfatx}{\smoothedpdf(\x)}
\newcommand{\pdfatmuplushx}{\pdf(\symmetrycenter + \bandwidth \x)}
\newcommand{\pdfatmuminushx}{\pdf(\symmetrycenter - \bandwidth \x)}
\newcommand{\pdfatmuplusx}{\pdf(\symmetrycenter + \x)}
\newcommand{\pdfatmplusx}{\pdf(\mode + \x)}
\newcommand{\pdfatmuminusx}{\pdf(\symmetrycenter - \x)}
\newcommand{\pdfatmuplush}{\pdf(\symmetrycenter + \bandwidth)}
\newcommand{\inversebandwidth}{\bandwidth^{-1}}

\newcommand{\targetvariance}{\stddev_{\pdf, \chaconserranokernelparam, \bandwidth}^2}
\newcommand{\standardrandomvar}{Z_{\bandwidth}}
\newcommand{\symmetricfunction}{\Psi}
\newcommand{\ithsymmetricfunction}[2]{
    \symmetricfunction_{#1}(#2; \chaconserranokernelparam)
}
\newcommand{\ithsymmetricexpectation}[1]{%
    \expectedvalueof{\ithsymmetricfunction{#1}{\standardrandomvar}}
}
\newcommand{\derivativeofwrt}[2]{\partial #1 / \partial #2}

\newcommand{\integralsquarekernelprimenotation}{I_{\kernel, \regularvariationindex}}
\newcommand{\integralsquarekernelprimenotationbis}{\bar{I}_{\kernel, \regularvariationindex}}

\newcommand{\negativelimit}{2 \slowlyvaryinglimit \integralsquarekernelprimenotation}

\newcommand{\chaconserranokernelparam}{\beta}
\newcommand{\chaconserranokernel}{%
    \kernel_{\chaconserranokernelparam}
}
\newcommand{\bumplikefunction}{%
    B_{\chaconserranokernelparam}
}

\newcommand{\derivativeofchaconserranokernel}{\derivativeof{\chaconserranokernel}}
\newcommand{\secondderivativeofchaconserranokernel}{\secondderivativeof{\chaconserranokernel}}

\newcommand{\weightfunction}{W}
\newcommand{\weightfunctionh}{\weightfunction_{\bandwidth}}
\newcommand{\scorefunction}{%
    \mathcal{Z}_{\chaconserranokernelparam}
}

\newcommand{\samplemodekthiteration}{k}
\newcommand{\kthsamplemode}[1]{\estimateof{\mode}_{#1}}
\newcommand{\previousitersamplemode}{\kthsamplemode{\samplemodekthiteration}}
\newcommand{\currentitersamplemode}{\kthsamplemode{\samplemodekthiteration + 1}}
\newcommand{\weightvector}[1]{\mathbf{w}^{(#1)}}
\newcommand{\currentweightvector}{\weightvector{\samplemodekthiteration}}
\newcommand{\softmax}{\upsigma}
\newcommand{\scorevector}{\mathbf{z}}
\newcommand{\ithsoftmaxscorevector}[1]{\softmax(\scorevector)_{#1}}

\newcommand{\samplemodeinitialguess}{\kthsamplemode{0}}

\newcommand{\chaconserranokernelpsix}{%
    \psi(\x; \chaconserranokernelparam)
}

\newcommand{\optimalparameter}[1]{{#1}^*}
\newcommand{\optimalchaconserranokernelparam}{\optimalparameter{\chaconserranokernelparam}}
\newcommand{\optimalbandwidth}{\optimalparameter{\bandwidth}}

\newcommand{\auxiliaryfunctionindex}{\eta}
\newcommand{\estimateofuntranslatedpdf}{\tilde{\untranslatedpdf}}
\newcommand{\symmetrickde}{\kdesecondary^*}

\newcommand{\parameters}{(\chaconserranokernelparam, \bandwidth)}
\newcommand{\optimalparameters}{%
    (\optimalchaconserranokernelparam, \optimalbandwidth)
}

\newcommand{\regularvariationindex}{\alpha}
\newcommand{\regularvariationconstant}{\lambda}
\newcommand{\regularvariationdomainlow}{x_0}
\newcommand{\studenttdegreesoffreedom}{\nu}
\newcommand{\slowlyvaryingcomponent}{L}
\newcommand{\slowlyvaryinglimit}{\ell}
\newcommand{\xtoregularvariationindex}{\x^{\regularvariationindex}}
\newcommand{\regularvariationindexplusthree}{\regularvariationindex + 3}

\newcommand{\meanshiftaverage}{\varpi_{\samplesize, \bandwidth}}
\newcommand{\kthmeanshiftaverage}[1]{\meanshiftaverage^{#1}}
\newcommand{\irwtolerance}{\varepsilon}

\newcommand{\randomerror}{\varepsilon}

\newcommand{\locationmodel}{\samplerandomvar = \symmetrycenter + \randomerror}

\newcommand{\trimlevel}{\upalpha}

\newcommand{\madn}{\mathrm{MADN}(\randomsample)}
\newcommand{\madnvar}{S}

\newcommand{\numrepetitions}{m}
\newcommand{\methodestimateofsymmetrycenter}{{\estimateof{\symmetrycenter}}^{\method}}
\newcommand{\methodoneestimateofsymmetrycenter}{{\estimateof{\symmetrycenter}}^{\methodone}}
\newcommand{\methodtwoestimateofsymmetrycenter}{{\estimateof{\symmetrycenter}}^{\methodtwo}}
\newcommand{\method}{\texttt{M}}
\newcommand{\methodone}{\method_1}
\newcommand{\methodtwo}{\method_2}
\newcommand{\mserrorof}[1]{\mathrm{\acrshort*{mse}}(#1)}
\newcommand{\mseofmethod}{\mserrorof{\method}}

\newcommand{\pdfsupportedge}{a}
\newcommand{\kernelsupportedge}{b}
\newcommand{\taylorintermediatevalue}{\xi_{\x}}
\newcommand{\cutpoint}{\x_0}

\newcommand{\exponentialminusone}{%
    \exp [-\genericfunction(\x)] - 1
}

\newcommand{\absexponentialminusone}{%
    \absvalof{\exponentialminusone}
}

\newcommand{\finiteintegral}{%
    \intoverunitinterval
    \x^{\genericexponent}
    \absexponentialminusone
    \ \dx
}

\newcommand{\integrablefunction}{%
    \x^{\genericexponent} \genericfunction(\x)
}

\newcommand{\asymptoticsmoothedvariance}{\tilde{\stddev}_{\pdf, \kernel, \bandwidth}^2}
\newcommand{\smoothedvariancedenominator}{R_{\pdf, \kernel, \bandwidth}}

\newcommand{\untranslatedsmoothedpdf}{\smoothedpdf(\symmetrycenter + \x)}

\newcommand{\derivativesmoothedpdf}{%
    \derivativewrtx
    \untranslatedsmoothedpdf
}

\input{01-3-configuration.tex}

\newacronym{mse}{MSE}{mean squared error}

\newacronym[
    plural=pdfs,
    firstplural=probability density functions]{pdf}{pdf}
{probability density function}

\newacronym[
    plural=KDEs,
    firstplural=kernel density estimators]{kde}{KDE}
{kernel density estimator}

\newacronym[
    plural=KMEs,
    firstplural=kernel mode estimators]{kme}{KME}
{\textit{kernel mode estimator}}

\newacronym[
    plural=MLEs,
    firstplural=maximum likelihood estimators]{mle}{MLE}
{maximum likelihood estimator}

\newacronym{iid}{i.i.d.}{independent and identically distributed}

\newacronym{irw}{IRW}{\textit{iterative reweighting}}

\begin{document}

\title{\papertitle}
\date{}

\hypersetup{
  pdftitle={\papertitle},
}

\author{\joseechacon\footjoseechacon}
\address{\footaddressjoseechaon}
\email{\footjoseechacon\emailjoseechacon \ \Letter}
\thanks{\footorcidjoseechacon}

\author{\javierfdezserrano\footjavierfdezserrano}
\address{\footaddressjavierfdezserrano}
\email{\footjavierfdezserrano\emailjavierfdezserrano}
\thanks{\footorcidjavierfdezserrano}

\hypersetup{
    pdfauthor={\joseechacon, \javierfdezserrano}
}

\newcommand{\abstractmeta}{%
    In the mean-median-mode triad of univariate centrality measures, the mode has been overlooked for estimating the center of symmetry in continuous and unimodal settings.
    This paper expands on the connection between kernel mode estimators and M-estimators for location, bridging the gap between the nonparametrics and robust statistics communities.
    The variance of modal estimators is studied in terms of a bandwidth parameter, establishing conditions for an optimal solution that outperforms the household sample mean.
    A purely nonparametric approach is adopted, modeling heavy-tailedness through regular variation.
    The results lead to an estimator proposal that includes a novel one-parameter family of kernels with compact support, offering extra robustness and efficiency.
    The effectiveness and versatility of the new method are demonstrated in a real-world case study and a thorough simulation study, comparing favorably to traditional and more competitive alternatives.
    Several myths about the mode are clarified along the way, reopening the quest for flexible and efficient nonparametric estimators.
}

\begin{abstract}
    \abstractmeta
\end{abstract}

\hypersetup {
    pdfsubject={\abstractmeta},
}

\keywords{%
    \keywordkme,
    \keywordcenterofsymmetry,
    \keywordunimodality,
    \keywordregularlyvarying,
    \keywordredescendingmestimator,
    \keywordefficientnonparametrics
}

\ifnum\journal=0
    \subjclass[2020]{
        \mscnonparamestimation \ (Primary),
        \mscdensityestimation,
        \mscnonparamrobustness
    }
\fi

\hypersetup{
    pdfkeywords={%
            \keywordkme,
            \keywordcenterofsymmetry,
            \keywordunimodality,
            \keywordregularlyvarying,
            \keywordredescendingmestimator,
            \keywordefficientnonparametrics
        },
}

\maketitle

\section{Introduction}

The mean-median-mode triad is well-rooted in statistical \textit{folklore}.
In general, these three summaries depict different notions of centrality and have different properties, so the relative preference for either of them is dependent on the context.
Even so, the mode has certainly received less attention in the literature, perhaps because it poses a more challenging estimation problem and also due to definition technicalities in the continuous setting.
In any case, the concept of mode has recently emerged as a solution to several seemingly unrelated statistical problems.
See~\citet{Chacon2020} for a comprehensive overview.

It is well known that these three centrality measures coincide when assuming symmetry and unimodality, in which case they all represent the so-called \textit{center of symmetry} of the distribution.
Indeed, under symmetry, the mean and the median have long been studied as location statistics, and several attempts have been made to find robust intermediate estimators~\citep{Huber1964}.
Nevertheless, although the mean and the median are known to underperform in fat-tailed scenarios~\citep{Lai1983}, the mode has been overlooked for estimating the center of symmetry.

On the other hand, there is abundant literature on nonparametric estimation of the mode~\citep{Chacon2020}, particularly in kernel density estimation~\citep{Parzen1962}.
Very accurate results describing the asymptotic properties of the kernel-based mode estimator can be found in the exhaustive studies of~\citet{Romano1986,Romano1988,Romano1988a}.
However, symmetry has relevant implications in the analysis of this estimator, which have not been exploited before.
Interestingly, symmetry makes the bias of the kernel-based mode estimator vanish~\citetext{\citealp[p. 41]{Chernoff1964};~\citealp[Section 4]{Eddy1982}}, allowing us to focus on reducing variance.

This paper aims to bridge the gap between robust location statistics and kernel density estimation theory for mode estimation in the symmetric, unimodal case.
Theoretical asymptotic results and finite-sample simulations will demonstrate the effectiveness of the modal approach.
Our findings also expand on the connection noted by~\citet[Section 4]{Eddy1982} with the theory of M-estimators~\citetext{\citealp{Huber1964,Huber2009};~\citealp[p. 484]{Lehmann1998}}, which elegantly subsumes the three classic centrality points of view.

The rest of the paper is organized as follows.
The links between the two communities (robust statistics and nonparametrics) are briefly reviewed in Section~\ref{sec:background}, where the fundamental notation is also introduced.
Section~\ref{sec:method} presents original theoretical results and a novel proposal for mode-based estimation of the center of symmetry.
Proofs are given in Appendix~\ref{sec:proofs}.
The practical effectiveness of our method is demonstrated through a case study with real-world data in Section~\ref{sec:case-study} and a thorough simulation study in Section~\ref{sec:simulation-study}.
Finally, Section~\ref{sec:discussion} summarizes and discusses the relevance of our research.

\section{Background}\label{sec:background}

Estimating the mode of an unknown univariate \gls*{pdf} $\pdf$ has arguably been a foundational goal for nonparametrics, being the \gls*{kde} one of the most popular tools~\citep{Parzen1962,Grenander1965}.
Given a sample of \gls*{iid} absolutely continuous random variables $\randomsample$ with common \gls*{pdf} $\pdf$, a \textit{bandwidth} $\bandwidth \in (0, \infty)$, and a \textit{kernel} function $\functiondef{\kernel}{\reals}{\reals}$ (typically a symmetric, unimodal \gls*{pdf}), the \gls*{kde} is defined as
\begin{equation}
    \label{eq:kde-definition}
    \kdeatx
    =
    \frac{1}{\samplesize}
    \sumoverallsample
    \kernelh(\shiftedithsamplevar)
    =
    \frac{1}{\samplesizebandwidth}
    \sumoverallsample
    \kernel \left( \frac{\shiftedithsamplevar}{\bandwidth} \right)
    \,.
\end{equation}
The \gls*{kde} converges locally and globally to $\pdf$ as $\samplesizegoestoinfty$ if $\bandwidth \equiv \bandwidth_{\samplesize} \goestopositivezero$ and $\samplesizebandwidth \goestoinfty$, provided certain mild assumptions are satisfied~\citep{Chacon2018}.

In our context, the mode will be the unique maximizer $\mode \in \reals$ of $\pdf$, i.e.,
\begin{equation}
    \label{eq:mode-definition}
    \mode
    =
    \argmaxinrealsof{\pdfatx}
    \,.
\end{equation}
Assuming natural conditions on $\kernel$, there is a random variable $\samplemode$, which~\citet{Parzen1962} calls the \textit{sample mode} and which we shall refer to as the \gls*{kme}, such that
\begin{equation}
    \label{eq:sample-mode-condition}
    \kdeat{\samplemode}
    =
    \max_{\xinreals} \kdeatx
    \,.
\end{equation}
In general, there are several versions of $\samplemode$.
\citet{Eddy1980,Romano1988} pick the smallest argument that maximizes the \gls*{kde} when the optimum is not unique.
\citeauthor{Romano1988} claims all his results hold for any random variable complying with~\eqref{eq:sample-mode-condition}.
In turn,~\citet{Grund1995} directly state their results by \textit{breaking} ties at random.

On the other hand, M-estimators~\citep{Huber1964,Huber2009} of the location parameter $\mestimatorparam \equiv \mestimatorparamf \in \reals$ of a cumulative distribution function $\cdf$ are defined as sample random variables $\mestimator$ minimizing
\begin{equation}
    \label{eq:m-estimator-argmin}
    \sumoverallsample
    \mestimatorloss(\ithsamplevarminusmestimator)
    \,,
\end{equation}
for some fixed function $\functiondef{\mestimatorloss}{\reals}{\reals}$.
Or equivalently, if $\mestimatorloss$ is differentiable with $\mestimatorlossderivative = \derivativeof{\mestimatorloss}$, through
\begin{equation}
    \label{eq:m-estimator-equation-root}
    \sumoverallsample
    \mestimatorlossderivative(\ithsamplevarminusmestimator)
    =
    0
    \,.
\end{equation}
For instance, the sample mean $\samplemeanestimator$ and the sample median $\samplemedianestimator$ are the M-estimators corresponding to $\mestimatorloss(\x) = \x^2$ and $\mestimatorloss(\x) = \absvalof{\x}$, respectively.
Moreover, the population target $\mestimatorparam$---the mean $\mean$ and the median $\median$ in the previous examples---can also be expressed in terms of $\mestimatorloss$ and $\mestimatorlossderivative$ by cautiously replacing sums with expectations~\citep[pp. 46--47]{Huber2009}.

Regarding the mode,~\citet{Eddy1982} mentioned the link between the \gls*{kme} and M-estimators in his concluding remarks, but he did not further develop the idea.
Indeed, given (\ref{eq:kde-definition}) and (\ref{eq:sample-mode-condition}), for a symmetric kernel, the \gls*{kme} $\samplemode$ can be seen as the M-estimator corresponding to $\mestimatorloss(\x) = -\kernelh(\x) = -\kernel(\x / \bandwidth) / \bandwidth$. For fixed $\bandwidth > 0$, the population target of the \gls*{kme} would be the value $\smoothedmode \in \reals$ maximizing  $\smoothedpdfatx = \expectedvalueof{\kdeatx}$.
In general, $\smoothedmode \neq \mode$, and only as $\bandwidthgoestopositivezero$ the bias goes to zero.
However, under the symmetry of $\kernel$ (about zero) and $\pdf$ (about $\mode$), there is no bias effect~\citep{Chernoff1964}, so that $\smoothedmode = \mode$ (this is rigorously stated and proved in Theorem \ref{th:symmetric-unimodal-smoothed-pdf} below).
Hence, the \gls*{kme} would effectively be an M-estimator of the mode.
See in \figurename~\ref{fig:triweight-biweight} the M-estimator perspective of \glspl*{kme}.

\begin{figure}
    \centering
    \includegraphics[width=0.5\textwidth]{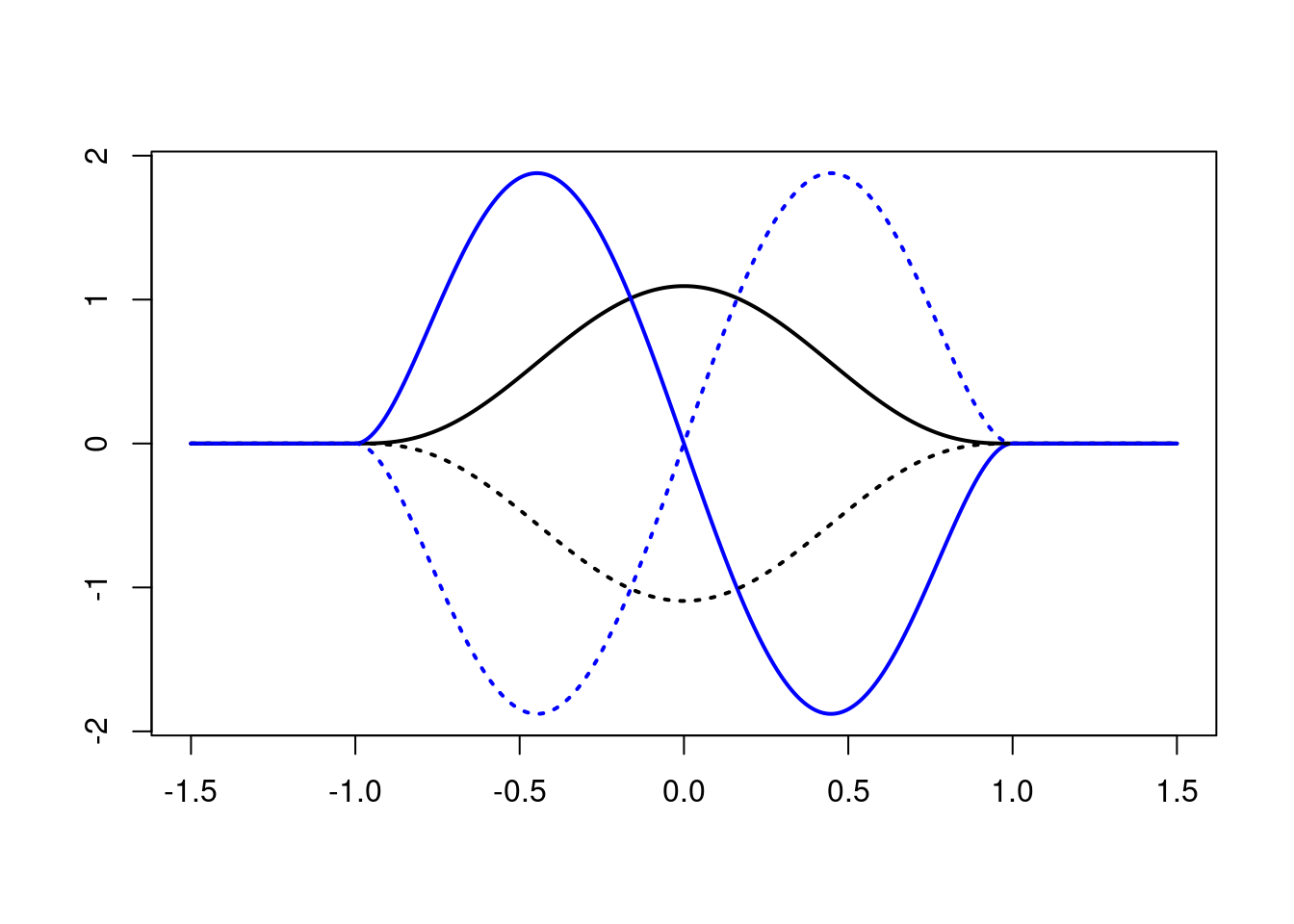}
    \caption{%
        The M-estimator perspective of \glspl*{kme}.
        A kernel function $\kernel$ and its derivative are shown in solid black and blue, respectively.
        The dashed black and blue lines are, respectively, $\mestimatorloss(\x) = -\kernelatx$, and $\mestimatorlossderivative(\x) = \derivativeof{\mestimatorloss}(\x) = -\derivativeofkernel(\x)$.
    }
    \label{fig:triweight-biweight}
\end{figure}

For most common kernels, \glspl*{kme} are members of the class of \textit{redescending} M-estimators, which are those characterized by a function $\mestimatorlossderivative$ that tends to zero as $\absvalof{\x} \goestoinfty$~\citep{Huber2009,Maronna2019}.
Surprisingly, neither of these celebrated robust statistics books mentions \glspl*{kme}.
The importance of redescending M-estimators lies in their resilience to \textit{outliers}, especially when $\mestimatorlossderivative$ has compact support (in the case of \glspl*{kme}, when $\kernel$ has compact support).
See, for instance, the concepts of \textit{influence function} and \textit{breakdown point} in the former references.
Nevertheless, despite their compelling features, redescending M-estimators have also received some criticism.
\citet[p. 101]{Huber2009} point out an efficiency decrease and a higher sensitivity to wrong \textit{scaling}, which in the case of \glspl*{kme} translates into the importance of a good choice of the bandwidth $\bandwidth$.

Consequently, one of the main goals of this paper is to study how the bandwidth affects the performance of the \gls*{kme} in this scenario and propose a practical data-driven choice.

\section{Method}

\label{sec:method}

Section~\ref{sec:theoretical-results} introduces the main theoretical results.
Section~\ref{sec:computation} and Section~\ref{sec:parameter-optimization} address the two steps of our \gls*{kme} proposal.
Finally, Section~\ref{sec:illustrative-example} illustrates our method with a \textit{toy} example.

\subsection{Theoretical results}

\label{sec:theoretical-results}

For a fixed bandwidth $\bandwidth > 0$, the population-wide modal location statistic corresponding to~\eqref{eq:kde-definition} as $\samplesizegoestoinfty$, denoted $\smoothedmode$, is obtained by maximizing the \textit{smoothed \gls*{pdf}} $\smoothedpdf$~\citep[Eq. 2.4]{Wand1995} defined by
\begin{equation}
    \label{eq:smoothed-pdf-definition}
    \smoothedpdfatx
    =
    \expectedvalueof{\kdeatx}
    =
    \intoverreals
    \kernelh(\x - \y)
    \pdf(\y)
    \ \dy
    =
    (\convolutionbetween{\kernelh}{\pdf})(\x)
    \,,
\end{equation}
which does not depend on the sample size $\samplesize$.
If $\kernel$ is a proper \gls*{pdf}, so it is~\eqref{eq:smoothed-pdf-definition}.

As noted in Section \ref{sec:background}, the \textit{smoothed} mode $\smoothedmode$ does not generally coincide with the true mode $\mode$ of $\pdf$.
To change that situation, we need to assume the following two hypotheses.

\begin{definition}[Symmetry]
    An absolutely continuous random variable $\samplerandomvar$ with \gls*{pdf} $\pdf$ is \textit{symmetric} about a center of symmetry $\symmetrycenter \in \reals$ if $\centeredsamplerandomvar$ and $\symmetrycenter - \samplerandomvar$ have the same distribution, or, equivalently, if $\pdfatmuplusx = \pdfatmuminusx$ for all $\xinreals$.
\end{definition}

\begin{definition}[Unimodality]
    \label{def:unimodality}
    An absolutely continuous random variable $\samplerandomvar$ with \gls*{pdf} $\pdf$ is \textit{unimodal} about a mode $\mode \in \reals$ if $\untranslatedpdf(\x) = \pdfatmplusx$ is a strictly decreasing function of $\absvalof{\x}$ over its support.
\end{definition}

The definitions of symmetry and unimodality can be found in p. 22 and p. 228 of~\citet{Maronna2019}, respectively.
In both cases, we shall extend these terms to \glspl*{pdf} rather than only random variables.
In particular, the kernel $\kernel$ in~\eqref{eq:kde-definition} will typically be, in our context, a symmetric, unimodal \gls*{pdf} about zero.
There is a wide array of symmetric, unimodal \glspl*{pdf}, as shown in \figurename~\ref{fig:symmetric-unimodal-test-beds}.
Indeed, symmetry arises as a natural assumption in the location model $\locationmodel$, where $\randomerror$ represents the error variable, to formalize the idea that there are no systematic errors~\citep[p. 17]{Maronna2019}.
Moreover, symmetry and unimodality of the error term appear as key assumptions in modern modal regression~\citep{Wang2024}.

The symmetry assumption implies $\mean = \median = \symmetrycenter$~\citep[p. 22]{Maronna2019}.
However, note that $\mean$ is not always defined, whereas $\median$ is.
This happens, for instance, in the case of the Cauchy distribution, that is, Student's t with one degree of freedom.
If, in addition to symmetry, the maximizer~\eqref{eq:mode-definition} is assumed to be unique, then it is easy to see that $\mode = \symmetrycenter$ as well.
Next, we show that the stronger condition of unimodality ensures that $\smoothedmode = \symmetrycenter$ for every $\bandwidth > 0$, making the \gls*{kme} $\samplemode$ an M-estimator of $\symmetrycenter$, as suggested by~\citet{Chernoff1964,Eddy1982}.

\begin{figure}
    \centering
    \includegraphics[width=0.5\textwidth]{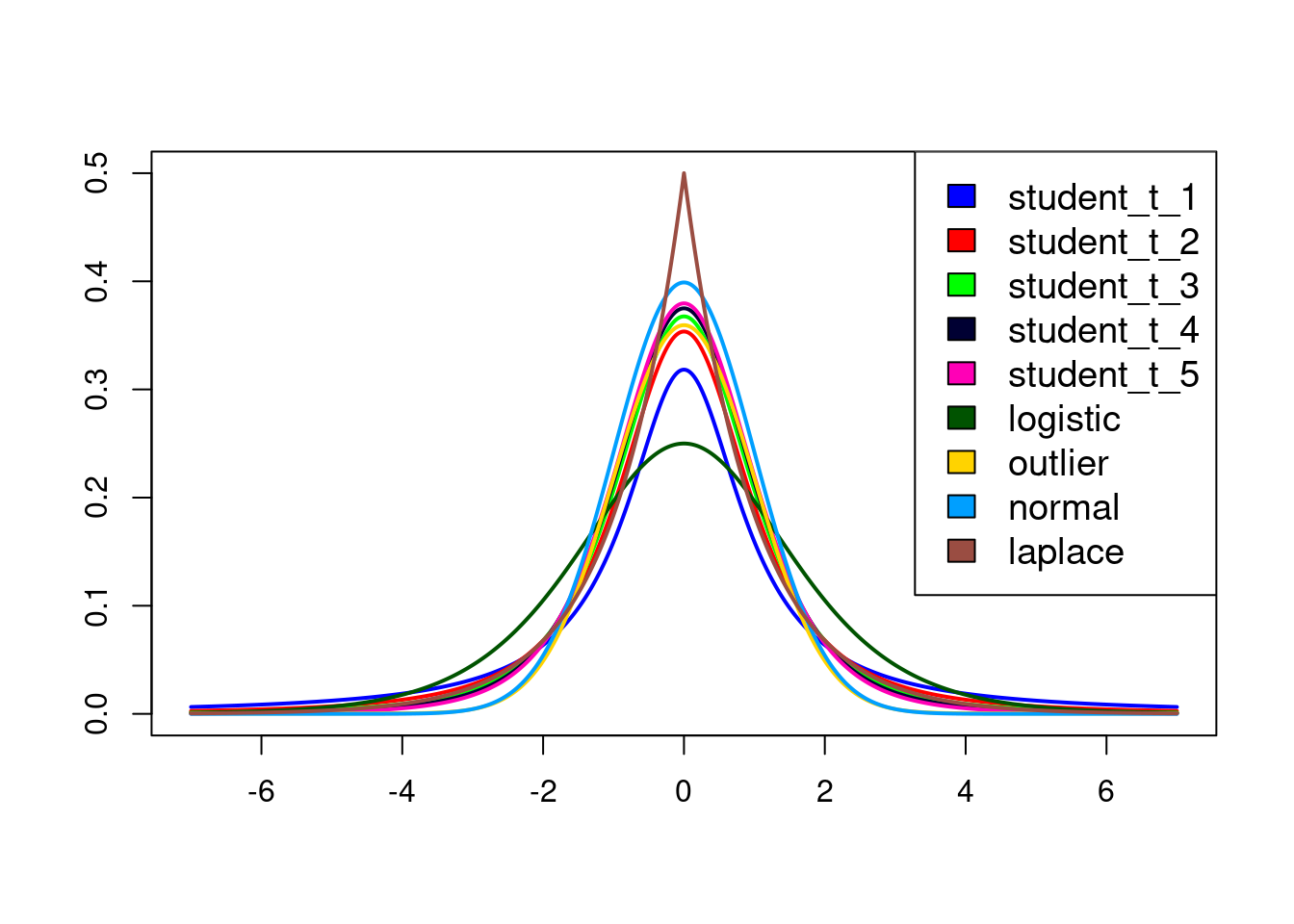}
    \caption{%
        Several examples of symmetric, unimodal \glspl*{pdf} (about zero).
        Instances of the Student's t family~\citep{Maronna2019} with $\studenttdegreesoffreedom \in \{1, ..., 5\}$ appear in dark blue, red, light green, black, and magenta, respectively.
        The \textit{standard} normal, Laplace, and logistic \glspl*{pdf} appear in light blue, brown, and dark green, respectively.
        See~\citet{Maronna2019} for the former two and~\citet[Example 3.13]{Huber2009} for the latter.
        Finally, the \textit{outlier} \gls*{pdf}, shown in yellow, is defined as a Gaussian mixture with two components having the same mean (zero) but very different variances.
        For further details, see Section~\ref{sec:simulation-study} below, where all these \glspl*{pdf} will be used as test-beds in a simulation study.
    }
    \label{fig:symmetric-unimodal-test-beds}
\end{figure}

\begin{theorem}
    \label{th:symmetric-unimodal-smoothed-pdf}
    Assume that a \gls*{pdf} $\pdf$ is symmetric and unimodal about $\symmetrycenter$.
    Also, consider a differentiable \gls*{pdf} kernel $\kernel$ with bounded $\derivativeofkernel$ that is symmetric and unimodal about zero.
    Then, the smoothed \gls*{pdf} $\smoothedpdf$ is also symmetric and unimodal about $\symmetrycenter$ for any $\bandwidth > 0$.
\end{theorem}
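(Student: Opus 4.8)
The plan is to work directly with the convolution structure in~\eqref{eq:smoothed-pdf-definition}. Since convolution commutes with translation, I would first reduce to the case $\symmetrycenter = 0$: writing $\untranslatedpdf(\x) = \pdfatmuplusx$ for the centered density and substituting $\y \mapsto \symmetrycenter + \y$ in the integral gives $\smoothedpdf(\symmetrycenter + \x) = (\convolutionbetween{\kernelh}{\untranslatedpdf})(\x)$, so it suffices to prove that $\convolutionbetween{\kernelh}{\untranslatedpdf}$ is symmetric and unimodal about zero, where $\untranslatedpdf$ is symmetric and unimodal about zero by hypothesis. Symmetry is then immediate: substituting $\y \mapsto -\y$ and using that both $\kernelh$ (even, since $\kernel$ is symmetric) and $\untranslatedpdf$ are even shows $(\convolutionbetween{\kernelh}{\untranslatedpdf})(-\x) = (\convolutionbetween{\kernelh}{\untranslatedpdf})(\x)$ for all $\xinreals$.

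For unimodality it is enough, by the symmetry just established, to show that $\convolutionbetween{\kernelh}{\untranslatedpdf}$ is strictly decreasing on the positive part of its support. I would first justify differentiating under the integral sign: the hypothesis that $\derivativeofkernel$ is bounded provides, via the mean value theorem, an integrable dominating function (a constant multiple of $\untranslatedpdf$), so that $(\convolutionbetween{\kernelh}{\untranslatedpdf})' = \convolutionbetween{\derivativeofkernelh}{\untranslatedpdf}$. The crucial manipulation is to symmetrize this derivative: substituting $u = \x - \y$, splitting the real line at the origin, and using that $\derivativeofkernelh$ is odd (because $\kernelh$ is even) yields
\begin{equation*}
    (\convolutionbetween{\kernelh}{\untranslatedpdf})'(\x)
    =
    \intoverpositivereals
    \derivativeofkernelh(u)
    \left[ \untranslatedpdf(\x - u) - \untranslatedpdf(\x + u) \right]
    \ \differential u
    \,.
\end{equation*}
For $\x > 0$ and $u > 0$ each factor has a fixed sign: unimodality of $\kernel$ forces $\derivativeofkernelh(u) \le 0$, while $\absvalof{\x - u} \le \absvalof{\x + u}$ combined with unimodality of $\untranslatedpdf$ forces $\untranslatedpdf(\x - u) \ge \untranslatedpdf(\x + u)$. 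Hence the integrand is nonpositive throughout and the derivative is $\le 0$.

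The main obstacle is upgrading this to \emph{strict} negativity exactly on the interior of the support, which is where the definition of unimodality bites. I would identify both sign inequalities as strict on overlapping sets of positive measure: $\derivativeofkernelh(u) < 0$ on the positive part of the support of $\kernel$ (scaled by $\bandwidth$), and the strict decrease of $\untranslatedpdf$ in $\absvalof{\cdot}$ gives $\untranslatedpdf(\x - u) > \untranslatedpdf(\x + u)$ precisely when $\x - u$ lies in the interior of the support of $\untranslatedpdf$. These two sets overlap in a set of positive Lebesgue measure exactly when $\x$ is interior to the support of the convolution (whose radius is the sum of the two individual support radii), which is precisely the range over which strict monotonicity must be verified; the boundary bookkeeping, including the cases of unbounded support for either $\untranslatedpdf$ or $\kernel$, is the only delicate point. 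As a sanity check, the conclusion also follows from the classical fact that the convolution of two symmetric, unimodal densities is symmetric and unimodal, but I would prefer the self-contained argument above, since it pinpoints exactly where the boundedness of $\derivativeofkernel$ and each monotonicity hypothesis are used.
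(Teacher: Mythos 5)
Your proposal is correct and takes essentially the same route as the paper's proof: the same reduction to the centered density $\untranslatedpdf$, the same appeal to the boundedness of $\derivativeofkernel$ to justify differentiating under the integral sign, the identical folding identity expressing the derivative as $\intoverpositivereals [\untranslatedpdf(\x - \y) - \untranslatedpdf(\x + \y)] \derivativeofkernelh(\y) \ \dy$, and the same two-factor sign analysis. The only divergence is how strict negativity is finished: the paper runs an explicit case analysis on the supports ($\pdfsupportedge = \infty$ versus $\pdfsupportedge < \infty$, and within the latter $\x < \pdfsupportedge$ versus $\x \geq \pdfsupportedge$, where it rewrites the integral over $(\max\{\x - \kernelsupportedge, -\pdfsupportedge\}, \pdfsupportedge)$), whereas you sketch a unified argument that the two strict-inequality sets overlap in a set of positive measure precisely when $\x$ is interior to the support of the convolution; that sketch is sound and completable, and arguably cleaner than the paper's bookkeeping. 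One caveat on your closing remark: the ``classical fact'' about convolutions of symmetric unimodal densities does not actually serve as a sanity check here, because, as the paper itself points out when citing that literature, the classical notion of unimodality is weaker than the strict-decrease-over-the-support notion used in this theorem, so the self-contained argument is not merely preferable but necessary.
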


The fact that the \textit{convolution} of two symmetric and unimodal distributions---the case of~\eqref{eq:smoothed-pdf-definition}---is also symmetric and unimodal is well known~\citep{Purkayastha1998}.
However, the concept of unimodality employed in that reference slightly differs from the usual one in robust statistics~\citep[see][Theorem 10.2]{Maronna2019}, covering a broader range of distributions but producing a weaker result than required.
Theorem~\ref{th:symmetric-unimodal-smoothed-pdf} is tailored to our needs and has a more straightforward elementary proof.

The role of the bandwidth $\bandwidth$ in Theorem~\ref{th:symmetric-unimodal-smoothed-pdf} seems minor, permanently attached to the kernel in $\kernelh$.
One could think of $\bandwidth$ as incorporated \textit{inside} $\kernel$.
Carrying $\bandwidth$ responds to historical reasons in the kernel smoothing literature.
However, as we shall see, fine-tuning $\bandwidth$ will make the difference in obtaining efficient kernel estimators of the center of symmetry.

Theorem~\ref{th:symmetric-unimodal-smoothed-pdf} ensures that the population-wide target maximizing~\eqref{eq:kde-definition} for a fixed bandwidth $\bandwidth$ is unique and equal to the center of symmetry $\symmetrycenter$.
According to~\citet[Theorem 10.5]{Maronna2019}, the latter guarantees consistency, namely convergence in probability of the \gls*{kme} $\samplemode$ to $\symmetrycenter$ as $\samplesizegoestoinfty$.
The following intuitive result shows that the \gls*{kme} is also \textit{on target} for every finite sample size $\samplesize$.

\begin{theorem}
    \label{th:symmetric-kme}
    In the hypotheses of Theorem~\ref{th:symmetric-unimodal-smoothed-pdf}, if we assume that ties are broken uniformly at random in the condition~\eqref{eq:sample-mode-condition}, then $\samplemode$ is symmetric about $\symmetrycenter$ for every $\samplesize \in \naturals$ and $\bandwidth > 0$.
    Consequently, if $\samplemode$ is integrable, then it is an unbiased estimator of $\symmetrycenter$, i.e., $\expectedvalueof{\samplemode} = \symmetrycenter$.
\end{theorem}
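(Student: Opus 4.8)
The plan is to exploit the reflection symmetry about $\symmetrycenter$ that the symmetry of $\pdf$ imposes on the sample, and to show that the \gls*{kme} inherits it. Writing $r(\x) = 2\symmetrycenter - \x$ for the reflection about $\symmetrycenter$, I would prove that $\samplemode \samedistribution r(\samplemode) = 2\symmetrycenter - \samplemode$, which is exactly the statement that $\samplemode$ is symmetric about $\symmetrycenter$. The unbiasedness is then immediate: when $\samplemode$ is integrable, symmetry gives $\expectedvalueof{\samplemode} = \expectedvalueof{2\symmetrycenter - \samplemode} = 2\symmetrycenter - \expectedvalueof{\samplemode}$, whence $\expectedvalueof{\samplemode} = \symmetrycenter$.

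First I would record that the sample law is invariant under $r$. Because $\randomsample$ are \gls*{iid} with common \gls*{pdf} $\pdf$ symmetric about $\symmetrycenter$, each reflected observation satisfies $r(\ithsamplevar) \samedistribution \ithsamplevar$, and independence is preserved, so the reflected sample $(r(\samplerandomvarindex{1}), \dots, r(\samplerandomvarindex{\samplesize}))$ has the same joint distribution as $(\randomsample)$.

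Next I would establish the reflection equivariance of the \gls*{kde}. Evaluating at the reflected point the \gls*{kde} built from the reflected sample and invoking the evenness of $\kernel$ yields
\begin{equation*}
    \frac{1}{\samplesizebandwidth} \sumoverallsample \kernel\!\left( \frac{r(\x) - r(\ithsamplevar)}{\bandwidth} \right)
    =
    \frac{1}{\samplesizebandwidth} \sumoverallsample \kernel\!\left( \frac{\ithsamplevar - \x}{\bandwidth} \right)
    =
    \kdeatx \,.
\end{equation*}
Thus the \gls*{kde} of the reflected sample is the reflection of the original \gls*{kde}, and its set of maximizers is precisely $r$ applied to the maximizer set of $\kde$. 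Existence of a maximizer is ensured because the differentiability of $\kernel$ makes $\kde$ continuous while its vanishing at $\pm\infty$ forces the supremum to be attained; symmetry and unimodality of $\pdf$ and $\kernel$ enter only as the ambient hypotheses inherited from Theorem \ref{th:symmetric-unimodal-smoothed-pdf}, the present argument resting solely on the evenness of $\kernel$ and $\pdf$.

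Finally I would combine these facts with the tie-breaking rule. Conditionally on the data, $\samplemode$ is drawn uniformly over the maximizer set $M$; since $r$ is a measure-preserving bijection of $\reals$, it transports the uniform law on $M$ to the uniform law on $r(M)$. Hence the conditional law of $r(\samplemode)$ given the sample equals the conditional law of the \gls*{kme} formed from the reflected sample, and averaging over the data together with the invariance from the first step gives $r(\samplemode) \samedistribution \samplemode$. The delicate point --- and the place where the uniform tie-breaking is indispensable --- is making this randomized selection rigorous: one must check that the maximizer set is measurable and almost surely finite (or otherwise carries a canonical uniform law) and that the selection mechanism genuinely commutes with $r$. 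This commutation is what the uniform rule buys us, since reflection preserves both counting and Lebesgue measure and therefore the uniform choice is reflection-equivariant.
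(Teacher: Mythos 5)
Your proposal is correct and follows essentially the same route as the paper's own proof: reflect the sample about $\symmetrycenter$, use the invariance of the sample law, the evenness of $\kernel$ to identify the reflected \gls*{kme} with the \gls*{kme} of the reflected sample, and the equivariance of the uniform tie-breaking rule, then take expectations for unbiasedness. The only difference is that you spell out the steps (KDE equivariance, measurability of the maximizer set, pushing forward the uniform law under reflection) that the paper compresses into its three-point remark, which is a welcome addition of rigor rather than a departure.
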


Let us now turn to the asymptotic distribution of the \gls*{kme}.
To make arguments shorter and simpler, we shall work with \textit{bell-shaped} kernels.

\begin{definition}[Bell-shaped kernel]
    \label{def:bell-shaped-kernel}
    Consider a twice continuously differentiable \gls*{pdf} kernel $\kernel$ that is symmetric and unimodal about zero, with bounded $\derivativeofkernel$ and $\secondderivativeofkernel$.
    We say that $\kernel$ is \textit{bell-shaped} if $\kernel$ has exactly two inflection points at $\pm \inflectionpoint$, for some $\inflectionpoint > 0$.
    That is, $\pm \inflectionpoint$ are the only points where $\secondderivativeofkernel$ changes its sign.
\end{definition}

Definition~\ref{def:bell-shaped-kernel} adds further regularity conditions to the ones imposed in Theorem~\ref{th:symmetric-unimodal-smoothed-pdf}.
We require additional smoothness and assume that $\kernel$ is concave over $\absvalof{\x} < \inflectionpoint$ (i.e., $\secondderivativeofkernel(\x) \leq 0$) and convex over $\absvalof{\x} > \inflectionpoint$ (i.e., $\secondderivativeofkernel(\x) \geq 0$).
One can easily see that the convexity and unimodality of $\kernel$ forces $\limxinf \derivativeofkernel(\pm\x) = 0$, making the \gls*{kme} with a bell-shaped kernel a redescending M-estimator.
The most common smooth kernels, such as the Gaussian, are bell-shaped, so Definition~\ref{def:bell-shaped-kernel} is not too restrictive.
Also, note that if $\kernel$ is bell-shaped, so it is $\kernelh$ for any $\bandwidth > 0$.

Using Definition~\ref{def:bell-shaped-kernel}, we can state the following asymptotic normality result.
It can be seen as a particular case of a general asymptotic distribution result for M-estimators ~\citep[see][Theorem 10.7]{Maronna2019}, adapted for \glspl*{kme} in their specific setting.
The notations $\derivativeofkernelh$ and $\secondderivativeofkernelh$ should be interpreted as successive derivatives of $\kernelh$.

\begin{theorem}
    \label{th:sample-mode-asymptotic-normality}
    Assume that $\samplerandomvar \followsdistr \pdf$ is symmetric and unimodal about $\symmetrycenter$.
    Consider a bell-shaped kernel $\kernel$.
    Then, for any $\bandwidth > 0$,
    $
        \weaklyconverges{\sqrtsamplesize(\centeredkme)}{\normaldistribution{0}{\squaredsmoothedstddevmode}}
    $
    as $\samplesizegoestoinfty$, with variance
    \begin{equation}
        \label{eq:sample-mode-asymptotic-variance}
        \squaredsmoothedstddevmode
        =
        \frac
        {\expectedvalueof{\derivativeofkernelh(\centeredsamplerandomvar)^2}}
        {\expectedvalueof{\secondderivativeofkernelh(\centeredsamplerandomvar)}^2}
        =
        \frac
        {\intoverreals \derivativeofkernelh(\centeredx)^2 \pdfatx \ \dx}
        {[\intoverreals \secondderivativeofkernelh(\centeredx) \pdfatx \ \dx]^2}
        =
        \convolutionvariance
        \,.
    \end{equation}
\end{theorem}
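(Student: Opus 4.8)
The plan is to exhibit the \gls*{kme} as an M-estimator and to invoke the general asymptotic theory cited in the text. By~\eqref{eq:sample-mode-condition}, $\samplemode$ maximizes $\kdeatx$ and hence solves $\sumoverallsample \derivativeofkernelh(\samplemode - \ithsamplevar) = 0$; since $\kernel$ is symmetric, $\derivativeofkernelh$ is odd, so this is exactly~\eqref{eq:m-estimator-equation-root} for $\mestimatorlossderivative = -\derivativeofkernelh$ (i.e.\ $\mestimatorloss = -\kernelh$), as noted in Section~\ref{sec:background}. The associated population score is $\lambda(\symmetrycenter') = \expectedvalueof{\mestimatorlossderivative(\samplerandomvar - \symmetrycenter')}$, and using the oddness of $\derivativeofkernelh$ together with~\eqref{eq:smoothed-pdf-definition} one checks that $\lambda(\symmetrycenter') = (\convolutionbetween{\derivativeofkernelh}{\pdf})(\symmetrycenter')$, the derivative of the smoothed \gls*{pdf} $\smoothedpdf$ at $\symmetrycenter'$. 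Theorem~\ref{th:symmetric-unimodal-smoothed-pdf} then guarantees that $\smoothedpdf$ is symmetric and unimodal about $\symmetrycenter$, whence $\lambda(\symmetrycenter) = 0$ and $\symmetrycenter$ is the \emph{unique} zero of $\lambda$. This identifiability yields consistency of $\samplemode$ via~\citet[Theorem 10.5]{Maronna2019}, so it remains to run the M-estimator central limit theorem of~\citet[Theorem 10.7]{Maronna2019} locally around $\symmetrycenter$.

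Next I would verify the regularity hypotheses, which are cheap under the bell-shaped assumption. By Definition~\ref{def:bell-shaped-kernel}, $\kernel$ is twice continuously differentiable with bounded $\derivativeofkernel$ and $\secondderivativeofkernel$, so $\mestimatorlossderivative = -\derivativeofkernelh$ is bounded and continuously differentiable with bounded derivative $-\secondderivativeofkernelh$. Boundedness of $\mestimatorlossderivative$ makes the score square-integrable, $\expectedvalueof{\mestimatorlossderivative(\centeredsamplerandomvar)^2} < \infty$, \emph{without any moment assumption on} $\pdf$ --- precisely the feature that lets the modal estimator cope with heavy tails. Boundedness and continuity of the derivative supply the local domination needed for the linearization in~\citet[Theorem 10.7]{Maronna2019}, which then gives $\weaklyconverges{\sqrtsamplesize(\centeredkme)}{\normaldistribution{0}{\squaredsmoothedstddevmode}}$ with $\squaredsmoothedstddevmode = \expectedvalueof{\mestimatorlossderivative(\centeredsamplerandomvar)^2} / \lambda'(\symmetrycenter)^2$, as soon as $\lambda'(\symmetrycenter) \neq 0$.

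The one genuinely delicate point --- and the step I expect to cost the most work --- is showing that the denominator does not vanish, i.e.\ $\lambda'(\symmetrycenter) = \secondderivativeof{\smoothedpdf}(\symmetrycenter) = (\convolutionbetween{\secondderivativeofkernelh}{\pdf})(\symmetrycenter) \neq 0$. Unimodality of $\smoothedpdf$ by itself only gives $\leq 0$ (a degenerate flat maximum is not excluded a priori), so I would establish strict negativity directly from the inflection-point geometry. Using that $\secondderivativeofkernelh$ is even and $\pdf$ is symmetric about $\symmetrycenter$,
\[ (\convolutionbetween{\secondderivativeofkernelh}{\pdf})(\symmetrycenter) = \intoverreals \secondderivativeofkernelh(\x) \pdfatmuplusx \ \dx = 2 \intoverpositivereals \secondderivativeofkernelh(\x) \pdfatmuplusx \ \dx . \]
Since $\kernel$ is unimodal about zero, $\derivativeofkernelh(0) = 0$ and $\derivativeofkernelh(\x) \goesto 0$ as $\x \goestoinfty$, so $\intoverpositivereals \secondderivativeofkernelh(\x) \ \dx = 0$; subtracting the constant $\pdf(\symmetrycenter + \inflectionpoint\bandwidth)$ against this null integral gives
\[ (\convolutionbetween{\secondderivativeofkernelh}{\pdf})(\symmetrycenter) = 2 \intoverpositivereals \secondderivativeofkernelh(\x) [ \pdfatmuplusx - \pdf(\symmetrycenter + \inflectionpoint\bandwidth) ] \ \dx . \]
By the bell-shaped structure, $\secondderivativeofkernelh(\x) \leq 0$ for $0 < \x < \inflectionpoint\bandwidth$ and $\secondderivativeofkernelh(\x) \geq 0$ for $\x > \inflectionpoint\bandwidth$, while strict unimodality makes the bracket $\geq 0$ on the first range and $\leq 0$ on the second; hence the integrand is everywhere $\leq 0$ and strictly negative on a set of positive measure. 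Therefore $(\convolutionbetween{\secondderivativeofkernelh}{\pdf})(\symmetrycenter) < 0$, so the denominator of~\eqref{eq:sample-mode-asymptotic-variance} is strictly positive and the variance is well defined.

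It remains to match the three expressions in~\eqref{eq:sample-mode-asymptotic-variance}. The numerator is $\expectedvalueof{\mestimatorlossderivative(\centeredsamplerandomvar)^2} = \expectedvalueof{\derivativeofkernelh(\centeredsamplerandomvar)^2}$, since squaring erases the sign, and the denominator is $\lambda'(\symmetrycenter)^2 = \expectedvalueof{\secondderivativeofkernelh(\centeredsamplerandomvar)}^2$, using once more that $\secondderivativeofkernelh$ is even; this is the first equality. Writing both expectations as integrals against $\pdf$ gives the middle expression, and recognizing each integral as a convolution of a $\kernelh$-derivative with $\pdf$ evaluated at $\symmetrycenter$ --- again via the evenness of $(\derivativeofkernelh)^2$ and $\secondderivativeofkernelh$ --- yields the final convolution form, completing the argument.
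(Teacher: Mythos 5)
Your proposal is correct and follows essentially the same route as the paper: reduce to the M-estimator central limit theorem of Maronna et al.\ (Theorem 10.7) with $\mestimatorlossderivative = -\derivativeofkernelh$, obtain finiteness of numerator and denominator from the boundedness of $\derivativeofkernelh$ and $\secondderivativeofkernelh$, and devote the real work to proving $\convolutionvariancedenominator < 0$ via the bell-shape geometry. Your trick of subtracting the constant $\pdf(\symmetrycenter + \inflectionpoint\bandwidth)$ against the null integral $\intoverpositivereals \secondderivativeofkernelh(\x)\,\dx = 0$ is algebraically equivalent to the paper's chain of inequalities comparing $\int_0^{\inflectionpoint}\absvalof{\secondderivativeofkernelh}\untranslatedpdf$ with $\int_{\inflectionpoint}^{\infty}\absvalof{\secondderivativeofkernelh}\untranslatedpdf$ through the common value $\untranslatedpdf(\inflectionpoint)\int\absvalof{\secondderivativeofkernelh}$, so the two proofs coincide in substance.
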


Although the result of Theorem \ref{th:sample-mode-asymptotic-normality} is not particularly surprising, its implications are significant in understanding and optimizing the performance of the \gls*{kme}.
The variance~\eqref{eq:sample-mode-asymptotic-variance} has an interesting behavior as $\bandwidthgoestopositivezero$ and $\bandwidthgoestoinfty$ that has not been addressed in the current robust statistics literature.

\begin{theorem}
    \label{th:limit-variance-h}
    In what follows, assume the hypotheses of Theorem~\ref{th:sample-mode-asymptotic-normality}.
    \begin{enumerate}
        \item \label{it:limit-variance-h-tends-to-zero} Let $\pdf$ be twice continuously differentiable with bounded $\derivativeofpdf$ and $\secondderivativeofpdf$ integrable.
              Let $\kernel$ be such that $\limxinf \x \kernel(\x) = 0$ and $\kernelroughness = \kernelderivativesquareltwonorm$ is finite. Then, $\limhzero \squaredsmoothedstddevmode = \infty$.
              Moreover, if $\curvatureatsymmetrycenter < 0$, and if we define $\squarestddevmode = \squarestddevmodeexpanded$, then $\squaredsmoothedstddevmode \asymptoticallyequivalent \squarestddevmode \bandwidth^{-3}$ as $\bandwidthgoestopositivezero$.
        \item \label{it:limit-variance-h-tends-to-infty} If $\pdf$ has finite variance $\squarestddev$, and $\secondderivativeofkernel(0) < 0$, then $\lim_{\bandwidthgoestoinfty} \squaredsmoothedstddevmode = \squarestddev$.
    \end{enumerate}
\end{theorem}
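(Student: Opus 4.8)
The plan is to treat the numerator and the denominator of the variance formula~\eqref{eq:sample-mode-asymptotic-variance} separately, exploiting the scaling relations $\derivativeofkernelh(\x) = \bandwidth^{-2}\derivativeofkernel(\x/\bandwidth)$ and $\secondderivativeofkernelh(\x) = \bandwidth^{-3}\secondderivativeofkernel(\x/\bandwidth)$ that follow from the chain rule. Writing $N(\bandwidth) = \expectedvalueof{\derivativeofkernelh(\centeredsamplerandomvar)^2}$ and $D(\bandwidth) = \expectedvalueof{\secondderivativeofkernelh(\centeredsamplerandomvar)}$, so that $\squaredsmoothedstddevmode = N(\bandwidth)/D(\bandwidth)^2$, I would analyze each factor by reducing it to an integral of a rescaled kernel against $\pdf$ and passing to the limit by dominated convergence, using a different substitution in each of the two regimes.

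For part~\ref{it:limit-variance-h-tends-to-zero}, as $\bandwidthgoestopositivezero$ I substitute $\x = \symmetrycenter + \bandwidth u$ to get $N(\bandwidth) = \bandwidth^{-3}\intoverreals \derivativeofkernel(u)^2\,\pdf(\symmetrycenter + \bandwidth u)\,\differential u$. Unimodality gives $\pdf \le \pdfatsymmetrycenter$, so the integrand is dominated by $\pdfatsymmetrycenter\,\derivativeofkernel(u)^2$, which is integrable precisely because $\kernelroughness$ is finite; dominated convergence then yields $N(\bandwidth) \asymptoticallyequivalent \bandwidth^{-3}\,\pdfatsymmetrycenter\,\kernelroughness$. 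For the denominator I recognize $D(\bandwidth) = \secondderivativeof{\smoothedpdf}(\symmetrycenter)$ and, integrating by parts twice (the boundary terms vanishing thanks to $\limxinf \x\kernel(\x) = 0$, the decay of $\derivativeofkernel$, and the boundedness of $\pdf$ and $\derivativeofpdf$), rewrite it as $(\convolutionbetween{\kernelh}{\secondderivativeofpdf})(\symmetrycenter)$. Since $\kernelh$ is a symmetric, unimodal approximate identity and $\secondderivativeofpdf$ is continuous and integrable, this converges to $\curvatureatsymmetrycenter$. Because $\pdfatsymmetrycenter > 0$, the numerator diverges while $D(\bandwidth)^2$ stays bounded, giving $\limhzero \squaredsmoothedstddevmode = \infty$; and if moreover $\curvatureatsymmetrycenter < 0$, then $D(\bandwidth)^2 \to \abscurvatureatsymmetrycenter^2 > 0$, so the ratio produces the sharp equivalence $\squaredsmoothedstddevmode \asymptoticallyequivalent \abscurvatureatsymmetrycenter^{-2}\,\pdfatsymmetrycenter\,\kernelroughness\,\bandwidth^{-3} = \squarestddevmode\,\bandwidth^{-3}$.

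For part~\ref{it:limit-variance-h-tends-to-infty}, as $\bandwidthgoestoinfty$ the previous substitution is unhelpful, so instead I keep the original variable and Taylor-expand the kernel near the origin. Here the key observation is that, by symmetry, $\derivativeofkernel(0) = 0$, so $\derivativeofkernel(s)/s \to \secondderivativeofkernel(0)$ as $s \to 0$. Multiplying through by $\bandwidth^6$, I write $\bandwidth^6 N(\bandwidth) = \expectedvalueof{\centeredsamplerandomvar^2\,[\derivativeofkernel(\centeredsamplerandomvar/\bandwidth)/(\centeredsamplerandomvar/\bandwidth)]^2}$; the bracketed map is bounded (it is continuous, tends to $\secondderivativeofkernel(0)$ at $0$, and vanishes at infinity since $\derivativeofkernel$ is bounded), so the integrand is dominated by a constant multiple of $\centeredsamplerandomvar^2$, integrable by finite variance, and dominated convergence gives $\bandwidth^6 N(\bandwidth) \to \secondderivativeofkernel(0)^2\,\squarestddev$. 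Likewise $\bandwidth^3 D(\bandwidth) = \expectedvalueof{\secondderivativeofkernel(\centeredsamplerandomvar/\bandwidth)} \to \secondderivativeofkernel(0)$ (dominated by $\lVert \secondderivativeofkernel \rVert_\infty$). Since $\secondderivativeofkernel(0) < 0$ guarantees a nonzero limit, dividing gives $\squaredsmoothedstddevmode = \bandwidth^6 N(\bandwidth)/(\bandwidth^3 D(\bandwidth))^2 \to \squarestddev$.

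I expect the main obstacle to lie in the denominator analysis of part~\ref{it:limit-variance-h-tends-to-zero}: justifying the double integration by parts that turns $\secondderivativeof{\smoothedpdf}(\symmetrycenter)$ into $(\convolutionbetween{\kernelh}{\secondderivativeofpdf})(\symmetrycenter)$, checking that every boundary term vanishes under the stated decay and boundedness hypotheses, and then invoking the approximate-identity convergence at the single point $\symmetrycenter$ for the merely integrable (not a priori bounded) function $\secondderivativeofpdf$. The remaining limits are routine dominated-convergence arguments, but they are exactly where the hypotheses $\limxinf \x\kernel(\x) = 0$, finiteness of $\kernelroughness$, integrability of $\secondderivativeofpdf$, and finiteness of $\squarestddev$ each earn their place.
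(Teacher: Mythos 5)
Your proof is correct and takes essentially the same route as the paper's: the same numerator/denominator decomposition, the same rescaling substitution plus dominated convergence for the numerator in part~\ref{it:limit-variance-h-tends-to-zero}, the same identity $\convolutionvariancedenominator = \convolvedpdfsecondderivative$ followed by approximate-identity convergence for the denominator (which the paper dispatches by citing Wand and Jones and Parzen instead of arguing the integration by parts and pointwise convergence directly, as you do), and the same $\bandwidth^6/\bandwidth^3$ normalization with dominated convergence in part~\ref{it:limit-variance-h-tends-to-infty}. Your bounded-ratio argument for $\lambdafunction{s}{\derivativeofkernel(s)/s}$ is equivalent to the paper's representation $\bandwidth \derivativeofkernel(\inversebandwidth \x) = \int_0^{\x} \secondderivativeofkernel(\inversebandwidth\y) \ \dy$, which yields the same pointwise limit $\x\secondderivativeofkernel(0)$ and the same linear domination.
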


Part~\ref{it:limit-variance-h-tends-to-zero} of Theorem~\ref{th:limit-variance-h} is compatible with the observation by~\citet{Huber2009} that redescending M-estimators are more sensitive to wrong \textit{scaling}, as a too-small $\bandwidth$ dangerously increases the variance.
Part~\ref{it:limit-variance-h-tends-to-infty} of Theorem~\ref{th:limit-variance-h} shows that \glspl*{kme} are asymptotically nearly as efficient as the sample mean if $\bandwidth$ is sufficiently large.

Theorem~\ref{th:sample-mode-asymptotic-normality} is a fixed-bandwidth version of~\citet[Theorem 2.1]{Romano1988a}.
He states, roughly speaking, that
$
    \weaklyconverges
    {\sqrtsamplesizebandwidththree(\samplemode - \smoothedmode)}
    {\normaldistribution{0}{\squarestddevmode}}
$
as $\samplesizegoestoinfty$ and $\bandwidthgoestopositivezero$, where $\squarestddevmode$ is as in Theorem~\ref{th:limit-variance-h} but replacing $\symmetrycenter$ with $\mode$.
The connection between our result and~\citeauthor{Romano1988a}'s derives from the fact that $\smoothedmode = \mode = \symmetrycenter$ under symmetry and unimodality.

Heavy-tailed distributions exist in which the \gls*{kme} is guaranteed to outperform the sample mean.
To model the tail behavior of the distribution we rely on the concept of regularly varying functions~\citep{Seneta1976}.

\begin{definition}[Regularly varying function]
    A function $\functiondef{\function}{[\regularvariationdomainlow, \infty)}{(0, \infty)}$, for $\regularvariationdomainlow > 0$, is \textit{regularly varying} with index $\regularvariationindex \in \reals$ if, for all $\regularvariationconstant > 0$,
    \begin{equation*}
        \limxinf
        \frac
        {\function(\regularvariationconstant \x)}
        {\function(\x)}
        =
        \regularvariationconstant^{\regularvariationindex}
        \,.
    \end{equation*}
    If $\regularvariationindex = 0$, $\function$ is said to be \textit{slowly varying}.
\end{definition}

In symmetric \glspl*{pdf}, the regular variation behavior is the same in both tails and requires $\regularvariationindex < -1$ since otherwise $\pdf$ would not be integrable~\citep[see the assumptions in][Section 9.3, Theorem 2]{Devroye1985}.
An interesting consequence of the regularly varying definition is the universal representation $\function(\x) = \xtoregularvariationindex \slowlyvaryingcomponent(\x)$, where $\slowlyvaryingcomponent$ is a slowly varying function~\citep{Seneta1976}.
The most paradigmatic regularly varying distributions belong to Student's t family~\citep[Eq. 2.9]{Maronna2019}, with \gls*{pdf}
\begin{equation}
    \label{eq:student-t-pdf}
    \pdfatx
    \propto
    \left(
    1 + \frac{\x^2}{\studenttdegreesoffreedom}
    \right)^{-(\studenttdegreesoffreedom + 1) / 2}
    \,,
\end{equation}
parameterized by the \textit{degrees of freedom} $\studenttdegreesoffreedom > 0$, corresponding to the regular variation index $\regularvariationindex = -(\studenttdegreesoffreedom + 1)$.
The limit of~\eqref{eq:student-t-pdf} as $\nugoestoinfty$ is the standard normal \gls*{pdf}, which is not regularly varying.

Theorem~\ref{th:regular-variation} below specifies conditions under which regularly varying \glspl*{pdf}, particularly Student's t \glspl*{pdf}, have a \gls*{kme} that is more efficient than the sample mean.

\begin{theorem}
    \label{th:regular-variation}
    Assume that a symmetric and unimodal \gls*{pdf} $\pdf$, with center $\symmetrycenter$, is such that $\untranslatedpdf(\x) = \pdfatmuplusx$ is regularly varying with index $\regularvariationindex < -3$ and bounded slowly varying component $\slowlyvaryingcomponent$ satisfying $\limhinf \slowlyvaryingcomponent(\bandwidth) = \slowlyvaryinglimit > 0$.
    Suppose that a kernel $\kernel$ satisfies the hypotheses of Theorem~\ref{th:limit-variance-h}, part~\ref{it:limit-variance-h-tends-to-infty}, and the following two integral conditions: (i)
    $
        \integralsquarekernelprimenotation
        =
        \intoverpositivereals
        \xtoregularvariationindex
        [\absvalof{\secondderivativeofkernel(0)}^{-2} \derivativeofkernel(\x)^2 - \x^2]
        \
        \dx
    $ is finite and negative, and (ii) $
        \intoverpositivereals
        \xtoregularvariationindex
        [1 + \absvalof{\secondderivativeofkernel(0)}^{-1} \secondderivativeofkernel(\x)]
        \
        \dx
    $ is finite.
    Then, $\limhinf \bandwidth^{-(\regularvariationindexplusthree)} (\squaredsmoothedstddevmode - \squarestddev) = \negativelimit$.
    Consequently, for every sufficiently large $\bandwidth$, we have $\squaredsmoothedstddevmode < \squarestddev$.
\end{theorem}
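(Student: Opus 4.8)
The plan is to extract a two-term expansion of the closed form~\eqref{eq:sample-mode-asymptotic-variance} as $\bandwidthgoestoinfty$, refining the limit $\squaredsmoothedstddevmode \goesto \squarestddev$ already furnished by Theorem~\ref{th:limit-variance-h}, part~\ref{it:limit-variance-h-tends-to-infty}. By translation invariance I may take $\symmetrycenter = 0$, so that $\pdf$ is symmetric and unimodal about the origin with tail representation $\pdfatx = \xtoregularvariationindex \slowlyvaryingcomponent(\x)$, where $\slowlyvaryingcomponent$ is bounded and $\slowlyvaryingcomponent(\x) \goesto \slowlyvaryinglimit$ as $\x \goestoinfty$. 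Writing $\derivativeofkernelh(\bandwidth u) = \bandwidth^{-2}\derivativeofkernel(u)$ and $\secondderivativeofkernelh(\bandwidth u) = \bandwidth^{-3}\secondderivativeofkernel(u)$ and substituting $\x = \bandwidth u$ in the numerator and denominator of~\eqref{eq:sample-mode-asymptotic-variance}, I would recast the variance as
\begin{equation*}
    \squaredsmoothedstddevmode = \bandwidth\,\frac{A(\bandwidth)}{B(\bandwidth)^2}, \qquad A(\bandwidth) = \intoverreals \derivativeofkernel(u)^2\,\pdf(\bandwidth u)\ \differential u, \qquad B(\bandwidth) = \intoverreals \secondderivativeofkernel(u)\,\pdf(\bandwidth u)\ \differential u,
\end{equation*}
and set $c = \absvalof{\secondderivativeofkernel(0)} > 0$, recalling that for a bell-shaped kernel $\derivativeofkernel(0) = 0$ and $\secondderivativeofkernel(0) = -c$.

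First I would recover the leading behavior. Since $\secondderivativeofkernel$ is bounded and continuous, dominated convergence applied to $\bandwidth B(\bandwidth) = \intoverreals \secondderivativeofkernel(\x/\bandwidth)\,\pdfatx\ \dx$ gives $\bandwidth B(\bandwidth) \goesto \secondderivativeofkernel(0)$. Likewise $\bandwidth\derivativeofkernel(\x/\bandwidth) \goesto \secondderivativeofkernel(0)\,\x$ pointwise, with envelope $\absvalof{\bandwidth\derivativeofkernel(\x/\bandwidth)} \leq \constant\absvalof{\x}$ coming from the boundedness of $\secondderivativeofkernel$, so that $\bandwidth^3 A(\bandwidth) = \intoverreals [\bandwidth\derivativeofkernel(\x/\bandwidth)]^2\,\pdfatx\ \dx \goesto c^2\squarestddev$ by dominated convergence, the finite variance $\squarestddev$ supplying the integrable bound. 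Together these reproduce $\squaredsmoothedstddevmode \goesto \squarestddev$.

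The refinement comes from the next-order remainders. Subtracting the leading terms and returning to the variable $u = \x/\bandwidth$ gives
\begin{equation*}
    \bandwidth^3 A(\bandwidth) - c^2\squarestddev = \bandwidth^3 \intoverreals [\derivativeofkernel(u)^2 - c^2 u^2]\,\pdf(\bandwidth u)\ \differential u.
\end{equation*}
Here I would use regular variation in the form $\bandwidth^{-\regularvariationindex}\pdf(\bandwidth u) \goesto \slowlyvaryinglimit\absvalof{u}^{\regularvariationindex}$, valid pointwise for $u \neq 0$ because $\slowlyvaryingcomponent(\bandwidth\absvalof{u}) \goesto \slowlyvaryinglimit$, together with the uniform bound $\bandwidth^{-\regularvariationindex}\pdf(\bandwidth u) \leq \constant\absvalof{u}^{\regularvariationindex}$ from the boundedness of $\slowlyvaryingcomponent$. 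The function $[\derivativeofkernel(u)^2 - c^2 u^2]\absvalof{u}^{\regularvariationindex}$ is absolutely integrable: near the origin this is exactly the finiteness postulated in hypothesis (i), and at infinity it behaves like $-c^2 u^{\regularvariationindex + 2}$, integrable since $\regularvariationindex < -3$. Dominated convergence and the evenness of the integrand then yield
\begin{equation*}
    \bandwidth^{-(\regularvariationindexplusthree)}(\bandwidth^3 A(\bandwidth) - c^2\squarestddev) \goesto 2\slowlyvaryinglimit c^2\,\integralsquarekernelprimenotation .
\end{equation*}
An entirely analogous computation for $B$, in which hypothesis (ii) guarantees integrability at the origin, produces $\bandwidth B(\bandwidth) = \secondderivativeofkernel(0) + O(\bandwidth^{\regularvariationindex + 1})$.

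Finally I would regroup $\squaredsmoothedstddevmode = \bandwidth A(\bandwidth)/B(\bandwidth)^2 = \bandwidth^3 A(\bandwidth)/(\bandwidth B(\bandwidth))^2$ and feed in both expansions, obtaining
\begin{equation*}
    \squaredsmoothedstddevmode = \frac{c^2\squarestddev + 2\slowlyvaryinglimit c^2\,\integralsquarekernelprimenotation\,\bandwidth^{\regularvariationindexplusthree} + o(\bandwidth^{\regularvariationindexplusthree})}{c^2 + O(\bandwidth^{\regularvariationindex + 1})} .
\end{equation*}
Expanding the denominator, the key observation is that $\bandwidth^{\regularvariationindex + 1} = \bandwidth^{-2}\,\bandwidth^{\regularvariationindexplusthree} = o(\bandwidth^{\regularvariationindexplusthree})$, so the correction to $B$ is of strictly smaller order than the correction to $A$; hence hypothesis (ii) is needed only to bound its magnitude, not to pin down its exact value. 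This leaves $\squaredsmoothedstddevmode - \squarestddev = 2\slowlyvaryinglimit\,\integralsquarekernelprimenotation\,\bandwidth^{\regularvariationindexplusthree} + o(\bandwidth^{\regularvariationindexplusthree})$, that is, $\limhinf \bandwidth^{-(\regularvariationindexplusthree)}(\squaredsmoothedstddevmode - \squarestddev) = \negativelimit$; the sign conclusion is immediate since $\slowlyvaryinglimit > 0$ and $\integralsquarekernelprimenotation < 0$. The main obstacle I anticipate is the rigorous interchange of limit and integral in the $A$-expansion: one must justify dominated convergence for $\bandwidth^{-\regularvariationindex}\pdf(\bandwidth u)$ near $u = 0$, where the regular-variation representation is only asymptotic, and it is precisely the finiteness assumed in (i) that supplies the integrable envelope there, while $\regularvariationindex < -3$ secures the tail.
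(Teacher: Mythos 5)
Your proof is correct and follows essentially the same route as the paper's: rescale via $\x = \bandwidth u$, use the boundedness of $\slowlyvaryingcomponent$ to dominate and pass the regular-variation limit inside the integrals, let hypothesis (i) produce the order-$\bandwidth^{\regularvariationindexplusthree}$ correction from the numerator and hypothesis (ii) the order-$\bandwidth^{\regularvariationindex+1}$ correction from the denominator, and conclude via $\bandwidth^{\regularvariationindex+1} = o(\bandwidth^{\regularvariationindexplusthree})$. The only difference is bookkeeping: the paper introduces the intermediate quantity $\asymptoticsmoothedvariance$ and factors $\squaredsmoothedstddevmode - \asymptoticsmoothedvariance = \squaredsmoothedstddevmode(1-\smoothedvariancedenominator)(1+\smoothedvariancedenominator)$, whereas you expand numerator and denominator separately and divide, which is the same computation.
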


\begin{remark}
    \label{th:remark-integral-conditions}
    In Theorem~\ref{th:regular-variation}, if $\kernel$ has compact support $[-1, 1]$, we have:
    \begin{enumerate}[label=(\roman*)]
        \item For the first integral condition, if $
                  \integralsquarekernelprimenotationbis
                  =
                  \intoverunitinterval
                  \xtoregularvariationindex
                  [\absvalof{\secondderivativeofkernel(0)}^{-2} \derivativeofkernel(\x)^2 - \x^2]
                  \
                  \dx
              $,
              then $
                  \integralsquarekernelprimenotation
                  =
                  \integralsquarekernelprimenotationbis
                  +
                  (\regularvariationindexplusthree)^{-1}
              $.
              Consequently, it suffices to check that $\integralsquarekernelprimenotationbis$ is finite and $\integralsquarekernelprimenotationbis < -(\regularvariationindexplusthree)^{-1}$.
              Since $\regularvariationindex < -3$ by hypothesis, the latter inequality is satisfied if $\integralsquarekernelprimenotationbis < 0$.
        \item For the second integral condition, since $\regularvariationindex < -3$ by hypothesis, it suffices to check that $
                  \intoverunitinterval
                  \xtoregularvariationindex
                  [1 + \absvalof{\secondderivativeofkernel(0)}^{-1}\secondderivativeofkernel(\x)]
                  \
                  \dx
              $ is finite.
    \end{enumerate}
\end{remark}

Theorem~\ref{th:limit-variance-h} and Theorem~\ref{th:regular-variation} together allow searching for an $\bandwidth$ that minimizes the variance $\squaredsmoothedstddevmode$, following a similar scheme to that of~\citet{Chacon2007}.

\begin{corollary}
    \label{th:corollary-optimal-h}
    Assume all the hypotheses in Theorem~\ref{th:limit-variance-h} and Theorem~\ref{th:regular-variation}.
    Define $\functiondef{\variancefunction}{(0, \infty)}{(0, \infty)}$ given by $\variancefunction(\bandwidth) = \squaredsmoothedstddevmode$.
    There exists $\optmialbandwidth \in (0, \infty)$ such that $\variancefunction(\optmialbandwidth) = \inf_{\bandwidth \in (0, \infty)} \variancefunction(\bandwidth) < \squarestddev$.
\end{corollary}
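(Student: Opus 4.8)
The plan is to combine the three preceding results on the boundary behavior of $\variancefunction$ with a compactness argument, so that the global infimum over $(0,\infty)$ is realized as a minimum on a finite subinterval, where the extreme value theorem applies.

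First I would record the behavior of $\variancefunction$ at the two endpoints. By Theorem~\ref{th:limit-variance-h}, part~\ref{it:limit-variance-h-tends-to-zero}, we have $\limhzero \variancefunction(\bandwidth) = \infty$; by part~\ref{it:limit-variance-h-tends-to-infty}, $\limhinf \variancefunction(\bandwidth) = \squarestddev$; and by Theorem~\ref{th:regular-variation}, $\variancefunction(\bandwidth) < \squarestddev$ for all sufficiently large $\bandwidth$. I would fix one such bandwidth $\bandwidth_1$ and write $c = \variancefunction(\bandwidth_1) < \squarestddev$, which will serve as a finite upper bound for the infimum that is already strictly below $\squarestddev$.

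Next I would argue that $\variancefunction$ is continuous on $(0,\infty)$. Using the closed form from Theorem~\ref{th:sample-mode-asymptotic-normality}, $\variancefunction(\bandwidth) = \convolutionvariance$, where both numerator and denominator are integrals of $\pdf$ against $\derivativeofkernelh$ and $\secondderivativeofkernelh$. For each fixed argument these are continuous in $\bandwidth$ on $(0,\infty)$, and on any compact $\bandwidth$-range they are dominated by an integrable multiple of $\pdf$, thanks to the boundedness of $\derivativeofkernel$ and $\secondderivativeofkernel$ (Definition~\ref{def:bell-shaped-kernel}); dominated convergence then gives continuity of numerator and denominator in $\bandwidth$. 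Since the denominator is $\secondderivativeof{\smoothedpdf}(\symmetrycenter)$, the curvature at the strict mode of the symmetric, unimodal $\smoothedpdf$ (Theorem~\ref{th:symmetric-unimodal-smoothed-pdf}), and since the variance in Theorem~\ref{th:sample-mode-asymptotic-normality} is finite, this denominator is nonzero (indeed strictly negative) for every $\bandwidth > 0$, so the ratio $\variancefunction$ is continuous throughout.

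Finally I would carry out the compactness reduction. Because $\variancefunction(\bandwidth) \goestoinfty$ as $\bandwidthgoestopositivezero$, there is $\delta \in (0, \bandwidth_1)$ with $\variancefunction(\bandwidth) > c$ on $(0, \delta]$; because $\variancefunction(\bandwidth) \goesto \squarestddev > c$ as $\bandwidthgoestoinfty$, there is $M > \bandwidth_1$ with $\variancefunction(\bandwidth) > c$ on $[M, \infty)$. Continuity on the compact interval $[\delta, M]$ yields a minimizer $\optmialbandwidth$, and since $\bandwidth_1 \in [\delta, M]$ we get $\variancefunction(\optmialbandwidth) \leq c$. Comparing the three regions then shows $\variancefunction(\bandwidth) \geq \variancefunction(\optmialbandwidth)$ for every $\bandwidth \in (0,\infty)$, so $\variancefunction(\optmialbandwidth) = \inf_{\bandwidth \in (0, \infty)} \variancefunction(\bandwidth) \leq c < \squarestddev$, as claimed. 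The main obstacle is the continuity step: one must pass to the limit under the integral sign uniformly over compact $\bandwidth$-ranges and keep the denominator bounded away from zero there, which is exactly where the boundedness hypotheses on $\derivativeofkernel$ and $\secondderivativeofkernel$ and the strict maximality of $\smoothedpdf$ at $\symmetrycenter$ do the work.
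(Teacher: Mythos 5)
Your proposal is correct and follows essentially the same route as the paper, which deduces the result from the same four facts --- continuity of $\variancefunction$, $\limhzero \variancefunction(\bandwidth) = \infty$, $\limhinf \variancefunction(\bandwidth) = \squarestddev$, and $\variancefunction(\bandwidth) < \squarestddev$ for all large $\bandwidth$ --- via a compactness/extreme-value argument (citing Chac\'on, 2007, Theorem 1, for the reasoning pattern). The only difference is one of detail: the paper simply asserts continuity as fact (i), whereas you justify it by dominated convergence and the nonvanishing of the denominator (already established in the proof of Theorem~\ref{th:sample-mode-asymptotic-normality}), which is a welcome but not structurally different addition.
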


\begin{remark}
    Of course, if $\squarestddev = \infty$, then any \gls*{kme} is more efficient than the sample mean, regardless of $\bandwidth$.
\end{remark}

Finding a kernel that satisfies the hypotheses of Theorem~\ref{th:regular-variation} is not trivial, especially one covering every $\regularvariationindex < -3$.
The Epanechnikov kernel, defined as $\epanechnikovkernel(\x) = 3(1 - \x^2) / 4 \cdot \indicatoroversupportatx$~\citep[see][p. 875]{Eddy1980}, accomplishes the two integral conditions, but it is not differentiable.
Though differentiable, other polynomial kernels, such as the \textit{triweight} ~\citep[or Tukey's \textit{biweight} $\mestimatorlossderivative$, in the M-estimator parlance of][Eq. 4.92]{Huber2009}, only cover a limited range of $\regularvariationindex$.
In turn, the household Gaussian kernel is bell-shaped but only satisfies the first integral condition.
We present now a parametric family of kernels compatible with the assumptions of Theorem~\ref{th:regular-variation} and whose parameter can be tuned to reach every $\regularvariationindex$.

\begin{theorem}
    \label{th:chacon-serrano-kernel}
    Define the family of bump-like functions $\bumplikefunction$, with $\chaconserranokernelparam > 0$,
    \begin{equation}
        \label{eq:bump-like-function}
        \bumplikefunction(\x)
        =
        \begin{cases}
            e^{-1 / (1 - \absvalof{\x}^{\chaconserranokernelparam})}
            \,,
            \  & \mathrm{if} \
            \absvalof{\x} < 1
            \\
            0
            \,,
            \  & \mathrm{otherwise}
        \end{cases}
        \,.
    \end{equation}
    Then, consider the family of \gls*{pdf} kernels $\chaconserranokernel$ whose derivative is given by
    \begin{equation}
        \label{eq:chacon-serrano-kernel-derivative}
        \derivativeofchaconserranokernel(\x)
        \propto
        -\x \bumplikefunction(\x)
        \,.
    \end{equation}
    If $\absvalof{\regularvariationindex + 1} < \chaconserranokernelparam$, then the kernel $\chaconserranokernel$ satisfies the hypotheses of Corollary~\ref{th:corollary-optimal-h}.
\end{theorem}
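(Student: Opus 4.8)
The plan is to verify, one hypothesis at a time, every condition that Corollary~\ref{th:corollary-optimal-h} inherits from Theorems~\ref{th:limit-variance-h} and~\ref{th:regular-variation} (and, through them, from Theorem~\ref{th:sample-mode-asymptotic-normality}): that $K_\beta$ is a bell-shaped density kernel in the sense of Definition~\ref{def:bell-shaped-kernel}, that $\lim_{x\to\infty} x K_\beta(x)=0$ and $\mathcal{R}(K_\beta')$ is finite, that $K_\beta''(0)<0$, and that the two integral conditions of Theorem~\ref{th:regular-variation} hold. Throughout I would fix the positive normalizing constant $C$ so that $K_\beta'(x)=-C\,x B_\beta(x)$ and $K_\beta$ integrates to one. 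Since $B_\beta\ge 0$ is supported on $[-1,1]$ and $-x B_\beta(x)$ is odd, $K_\beta$ is automatically a nonnegative, compactly supported, symmetric density, while the sign pattern of $K_\beta'$ (positive on $(-1,0)$, negative on $(0,1)$) gives unimodality about zero. I would also record at the outset that, because Theorem~\ref{th:regular-variation} requires $\alpha<-3$, the hypothesis $|\alpha+1|<\beta$ forces $\beta>2$; this is precisely what will make the construction twice continuously differentiable.

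For bell-shapedness I would first settle smoothness. On $(-1,1)\setminus\{0\}$ and outside $[-1,1]$ the function is manifestly smooth; at $x=\pm1$ the exponential flatness of $B_\beta$ makes all derivatives vanish, so $K_\beta$ glues together in $C^\infty$ fashion. The only delicate point is $x=0$: writing $B_\beta'(x)=B_\beta(x)\,[-\beta |x|^{\beta-1}\operatorname{sign}(x)/(1-|x|^\beta)^2]$, the factor $|x|^{\beta-1}$ together with its first derivative tends to $0$ because $\beta>2$, so $K_\beta''(x)=-C\,[B_\beta(x)+xB_\beta'(x)]$ is continuous with $K_\beta''(0)=-Ce^{-1}$; this simultaneously verifies $K_\beta''(0)<0$, and boundedness of $K_\beta'$ and $K_\beta''$ is immediate on a compact support. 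For the inflection points I would study $K_\beta''\propto-\tfrac{d}{dx}[xB_\beta(x)]$ on $(0,1)$: factoring out $B_\beta>0$, the sign of $\tfrac{d}{dx}[xB_\beta(x)]$ equals that of $h(x)=1-\beta x^\beta/(1-x^\beta)^2$. Substituting $u=x^\beta$ turns $\beta u/(1-u)^2$ into a strictly increasing bijection of $(0,1)$ onto $(0,\infty)$, its derivative being $\beta(1+u)/(1-u)^3>0$, so $h$ vanishes exactly once, passing from $+$ to $-$. Hence $K_\beta''$ changes sign exactly once on $(0,1)$, at some $a\in(0,1)$ (concave inside, convex outside), and by symmetry exactly at $\pm a$, which is the bell-shaped requirement.

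The decay and roughness conditions of Theorem~\ref{th:limit-variance-h}, part~\ref{it:limit-variance-h-tends-to-zero}, are trivial from compact support. For the two integral conditions I would invoke Remark~\ref{th:remark-integral-conditions} and work on $[0,1]$. Using $|K_\beta''(0)|=Ce^{-1}$ one gets $|K_\beta''(0)|^{-2}K_\beta'(x)^2-x^2=x^2\,(e^2B_\beta(x)^2-1)$; since $eB_\beta(x)=\exp(-x^\beta/(1-x^\beta))<1$ for every $x\in(0,1)$, this quantity is strictly negative, so $\bar I_{K,\alpha}<0$ and condition (i) follows from the remark. Near the origin $e^2B_\beta(x)^2-1\sim-2x^\beta$, so the integrand behaves like $-2x^{\alpha+\beta+2}$, integrable because $\beta>|\alpha+1|>|\alpha+1|-2$; thus $\bar I_{K,\alpha}$ is finite. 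For condition (ii) I would Taylor-expand $\psi(x)=eB_\beta(x)=\exp(-x^\beta/(1-x^\beta))=1-x^\beta+O(x^{2\beta})$ to obtain $1+|K_\beta''(0)|^{-1}K_\beta''(x)=1-\psi(x)-x\psi'(x)\sim(1+\beta)x^\beta$, so the integrand is $\sim(1+\beta)x^{\alpha+\beta}$ near zero, integrable precisely when $\alpha+\beta>-1$, i.e. when $\beta>|\alpha+1|$.

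The decisive, and most delicate, step is this final local analysis at the origin for condition (ii): the admissible range of $\beta$ is dictated exactly by the requirement $\alpha+\beta>-1$, and it is precisely the hypothesis $|\alpha+1|<\beta$ that secures it (note $|\alpha+1|=-1-\alpha$ since $\alpha<-3$). The computation must track the leading Taylor coefficient of $B_\beta$ carefully, since an accidental cancellation would shift the critical exponent and break the argument. The only other place demanding genuine care is the uniqueness of the inflection point, but the strict monotonicity of $\beta u/(1-u)^2$ renders it clean; everything else reduces to bookkeeping on a compact interval.
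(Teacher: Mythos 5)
Your proposal is correct and takes essentially the same route as the paper: identical formulas for $\secondderivativeofchaconserranokernel$ and for the two integrands, the same reduction to $[0,1]$ via Remark~\ref{th:remark-integral-conditions}, and the same critical exponents ($\absvalof{\regularvariationindex+3}<\chaconserranokernelparam$ for condition (i), $\absvalof{\regularvariationindex+1}<\chaconserranokernelparam$ for condition (ii)), the only cosmetic differences being that the paper exhibits the inflection point explicitly by solving the quadratic in $\x^{\chaconserranokernelparam}$ where you argue by monotonicity, and packages your limit-comparison step at the origin as a separate Taylor-bound lemma (Lemma~\ref{th:lemma-integrability}). One side remark of yours is inaccurate but harmless: $\chaconserranokernel$ is twice continuously differentiable for every $\chaconserranokernelparam>0$ (for $\chaconserranokernelparam\le 1$ it is only $\secondderivativeofchaconserranokernel$ that fails to be differentiable at zero), so $\chaconserranokernelparam>2$ is not what secures $C^2$ smoothness, though that bound is of course available under the hypothesis $\absvalof{\regularvariationindex+1}<\chaconserranokernelparam$.
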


See in \figurename~\ref{fig:chacon-serrano-kernel-family} several instances of $\chaconserranokernel$ showcasing the flexibility of the new family.
The inspiration for~\eqref{eq:chacon-serrano-kernel-derivative} comes from the redescending M-estimator perspective, as we derive $\mestimatorloss$ from $\mestimatorlossderivative$ and not vice versa.
Essentially, we multiply $-\x$ by~\eqref{eq:bump-like-function} to obtain a smooth version of $\derivativeof{\epanechnikovkernel}$, which, except for the points where $\epanechnikovkernel$ is not differentiable ($\pm 1$), corresponds to $\mestimatorlossderivative(\x) \propto \x \cdot \indicatoroversupportatx$.
The latter produces the most efficient redescending M-estimator for a \gls*{pdf} $\pdf$ satisfying $-\derivativeof{\untranslatedpdf}(\x) / \untranslatedpdf(\x) \propto \x$ among those $\mestimatorlossderivative$ with support $[-1, 1]$~\citep[Eq. 4.88]{Huber2009}.
One can easily see that such $\pdf$ is either a Gaussian or a truncated Gaussian.

\begin{figure}
    \centering
    \begin{subfigure}[b]{\figurewidth}
        \centering
        \includegraphics[width=\subfigurewidth]{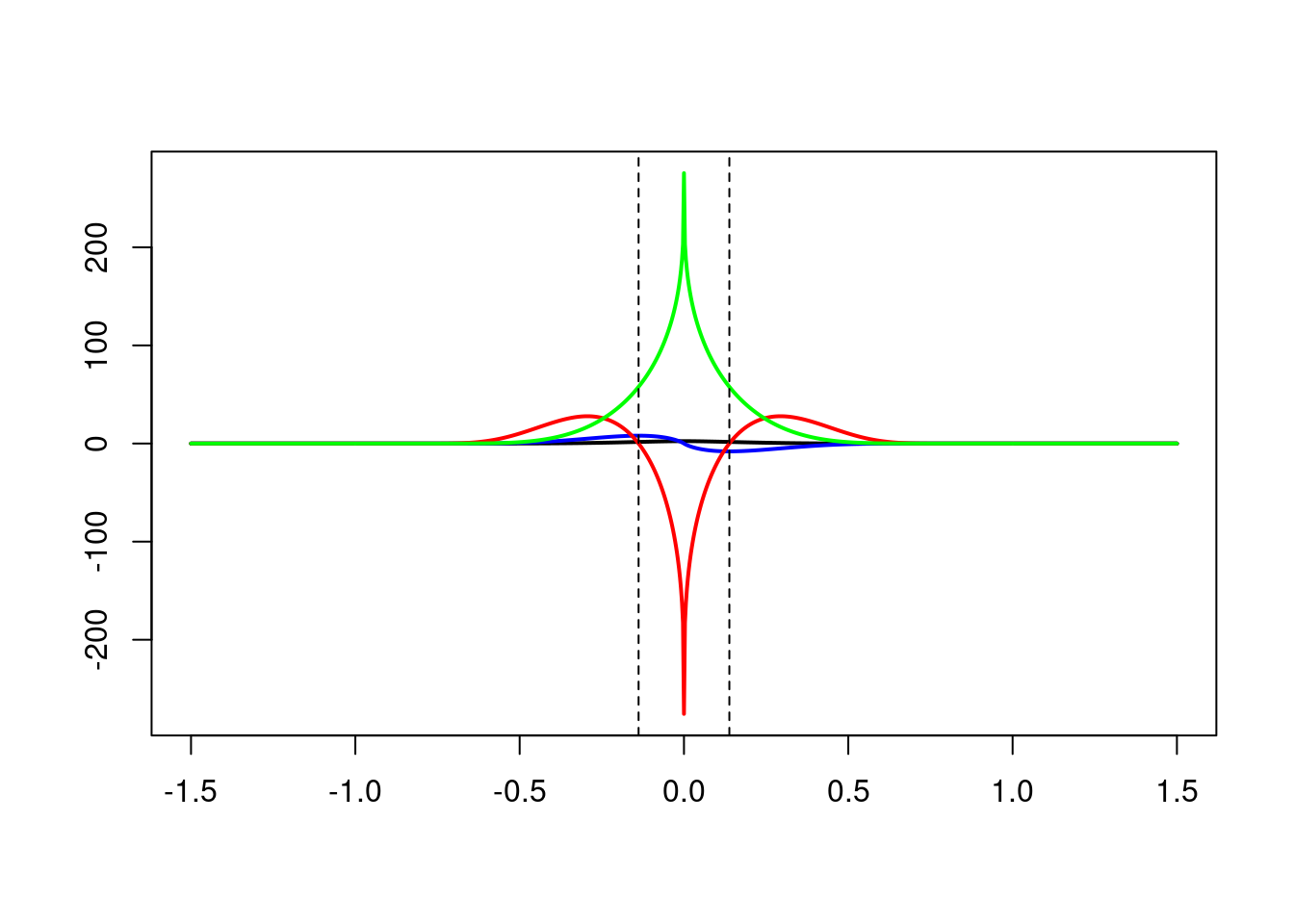}
        \caption{$\chaconserranokernelparam = 1/4$}
    \end{subfigure}
    \begin{subfigure}[b]{\figurewidth}
        \centering
        \includegraphics[width=\subfigurewidth]{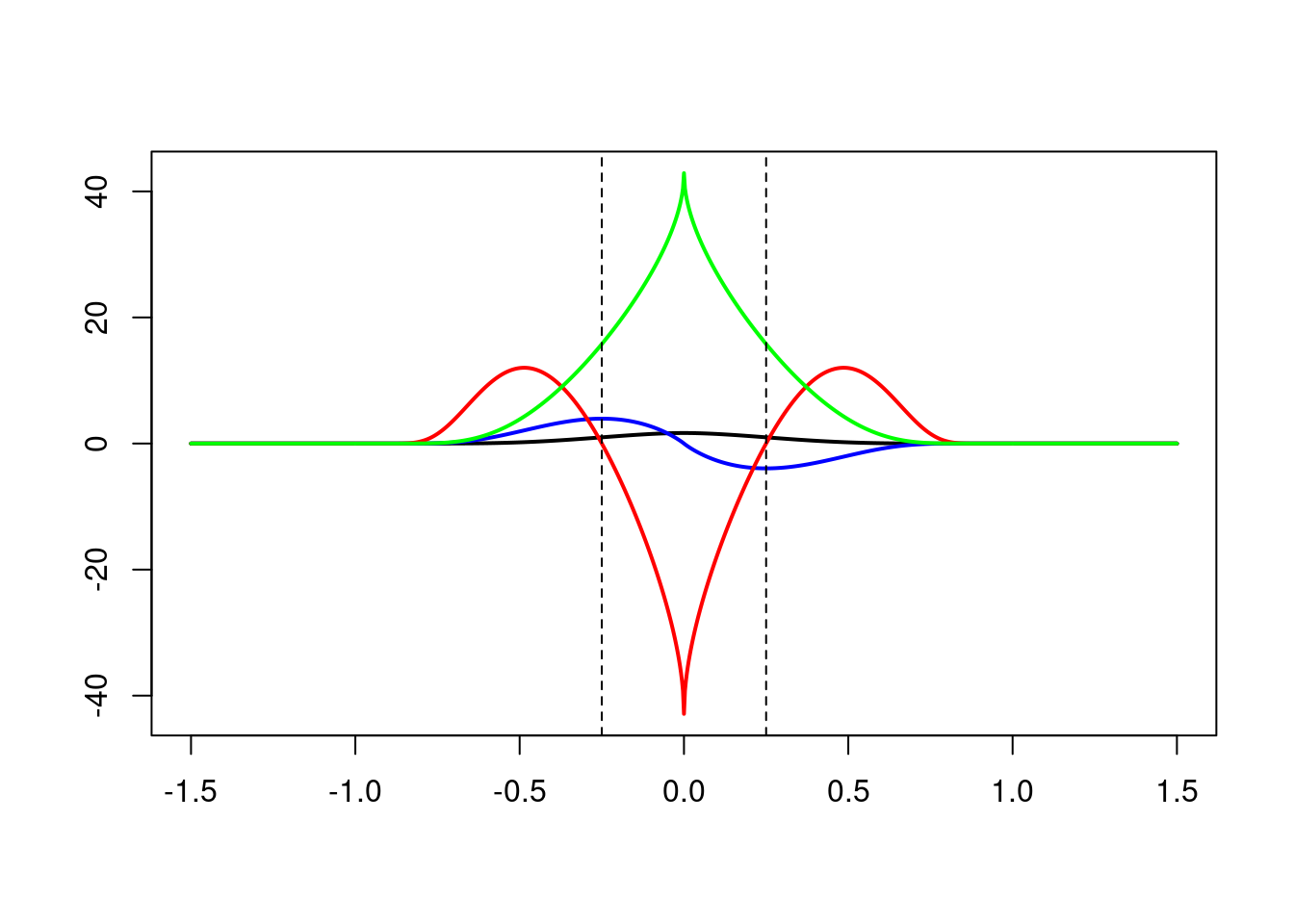}
        \caption{$\chaconserranokernelparam = 1/2$}
    \end{subfigure}
    \begin{subfigure}[b]{\figurewidth}
        \centering
        \includegraphics[width=\subfigurewidth]{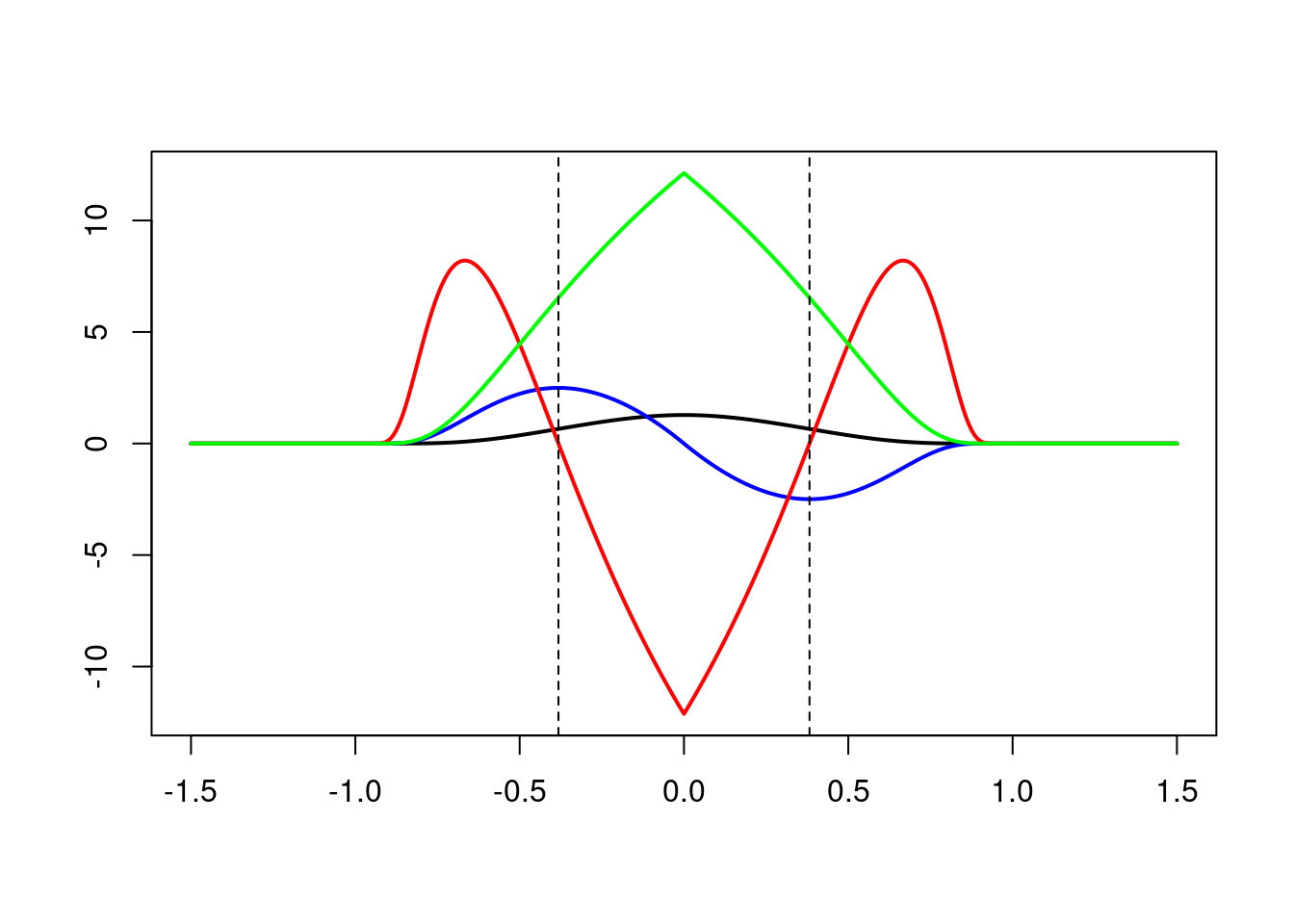}
        \caption{$\chaconserranokernelparam = 1$}
    \end{subfigure}
    \begin{subfigure}[b]{\figurewidth}
        \centering
        \includegraphics[width=\subfigurewidth]{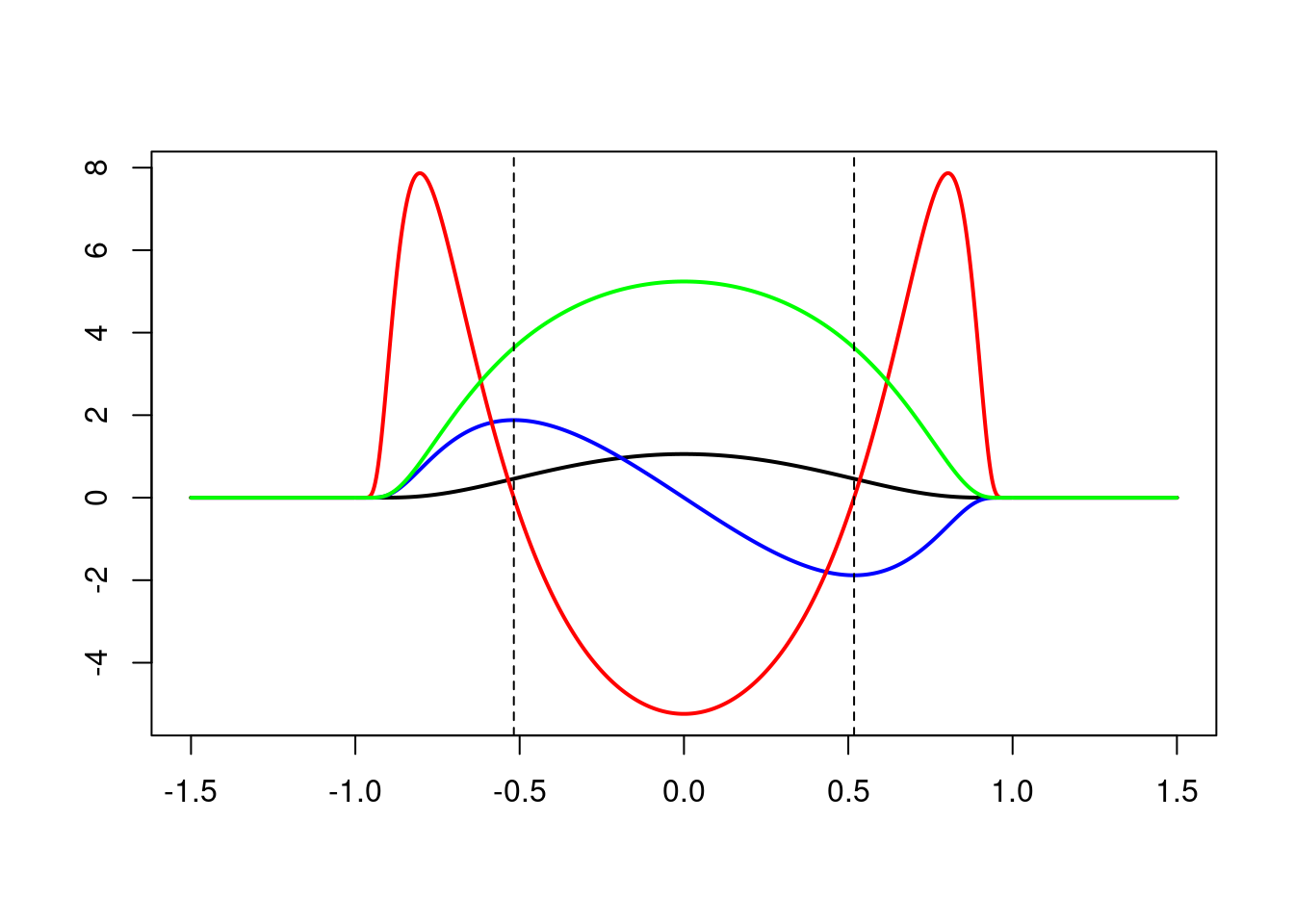}
        \caption{$\chaconserranokernelparam = 2$}
    \end{subfigure}
    \begin{subfigure}[b]{\figurewidth}
        \centering
        \includegraphics[width=\subfigurewidth]{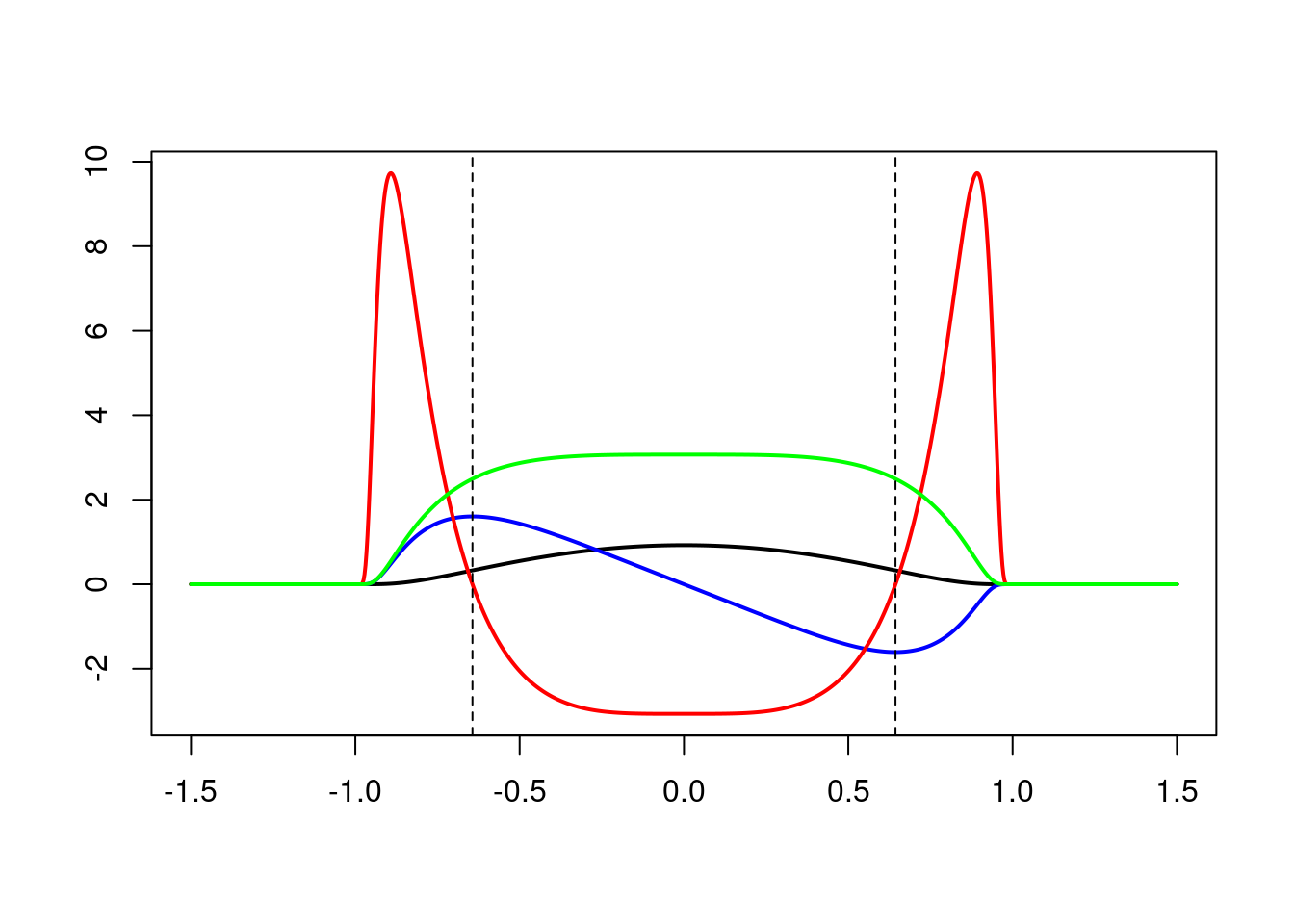}
        \caption{$\chaconserranokernelparam = 4$}
    \end{subfigure}
    \begin{subfigure}[b]{\figurewidth}
        \centering
        \includegraphics[width=\subfigurewidth]{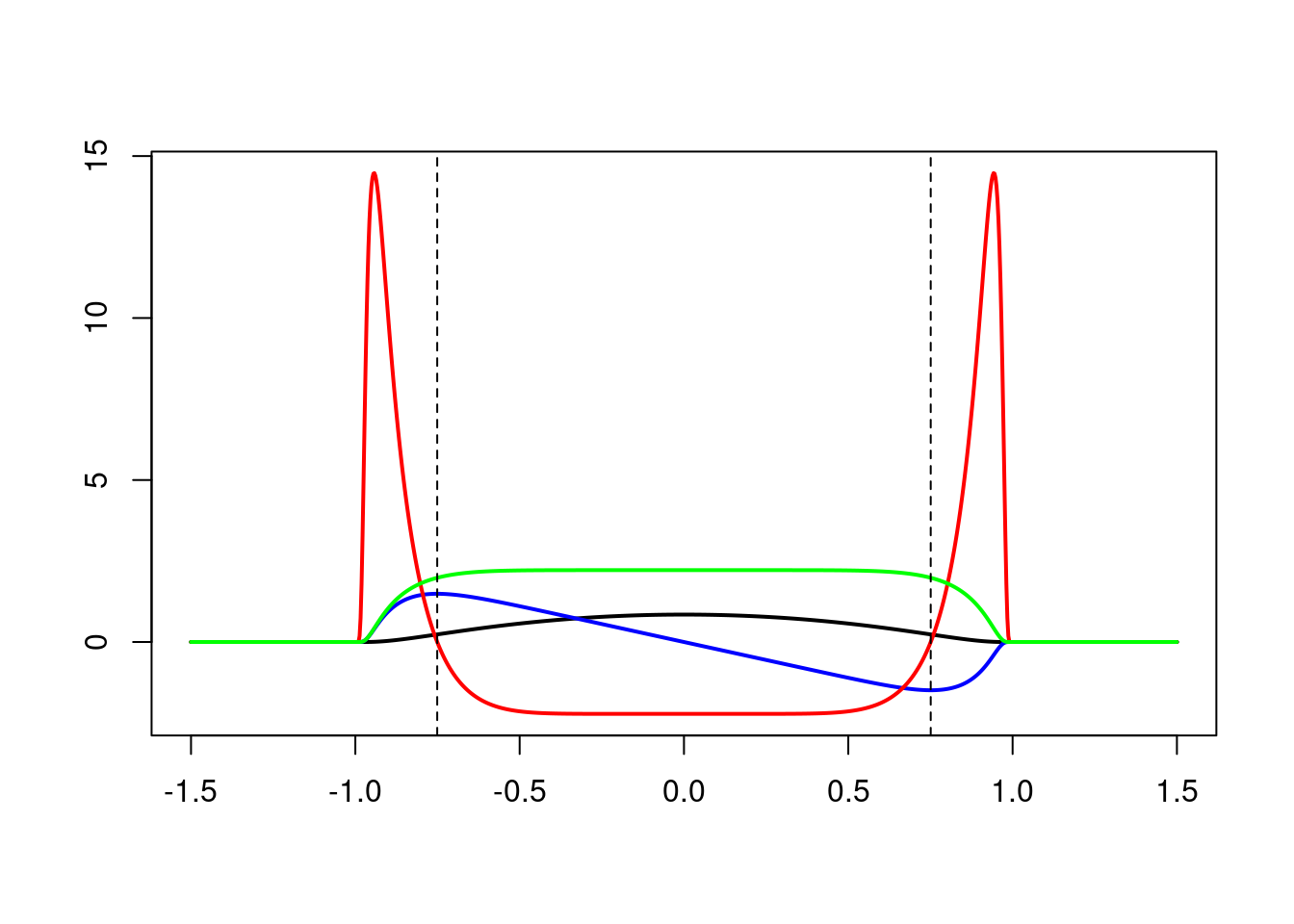}
        \caption{$\chaconserranokernelparam = 8$}
    \end{subfigure}
    \caption{%
        Several instances of the parametric kernel family $\chaconserranokernel$ defined in Theorem~\ref{th:chacon-serrano-kernel}, along with some derived associated functions.
        The black, blue, and red lines are $\chaconserranokernel$, $\derivativeofchaconserranokernel$, and $\secondderivativeofchaconserranokernel$, respectively.
        The green line corresponds to the weight function~\eqref{eq:weight-function}, which is equal to the $\bumplikefunction$~\eqref{eq:bump-like-function} in the case of $\chaconserranokernel$.
        The inflection points of $\chaconserranokernel$ appear as vertical dashed lines.
        As $\chaconserranokernelparam$ grows, the inflection points diverge towards the edges of the support, while $\derivativeofchaconserranokernel$ and $\bumplikefunction$ approach an oblique straight line and a rectangular function, respectively.
        Consequently, $\secondderivativeofchaconserranokernel$ \textit{redescends} increasingly more sharply.
        In turn, as $\chaconserranokernelparam$ decreases to zero, the inflection points converge towards the center, making the slope of $\derivativeofchaconserranokernel$ increasingly \textit{steep} near zero.
        Indeed, for $\chaconserranokernelparam \leq 1$, we see that $\secondderivativeofchaconserranokernel$ and $\bumplikefunction$ are not differentiable at zero.
    }
    \label{fig:chacon-serrano-kernel-family}
\end{figure}

\begin{figure}
    \centering
    \begin{subfigure}[b]{\figurewidth}
        \centering
        \includegraphics[width=\subfigurewidth]{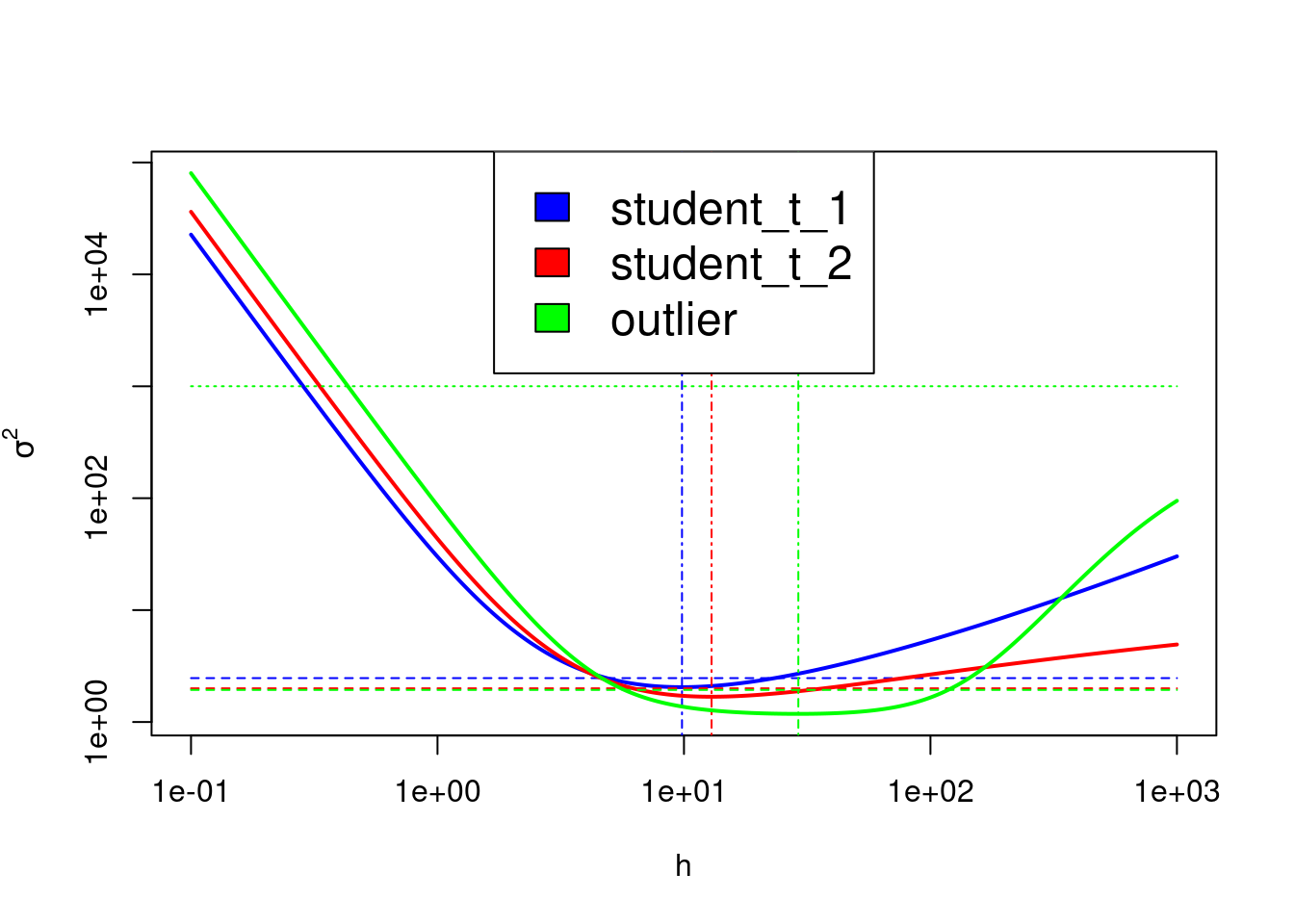}
        \caption{$\chaconserranokernelparam = 1/4$}
        \label{fig:kme-variance-beta-0-25-fat}
    \end{subfigure}
    \begin{subfigure}[b]{\figurewidth}
        \centering
        \includegraphics[width=\subfigurewidth]{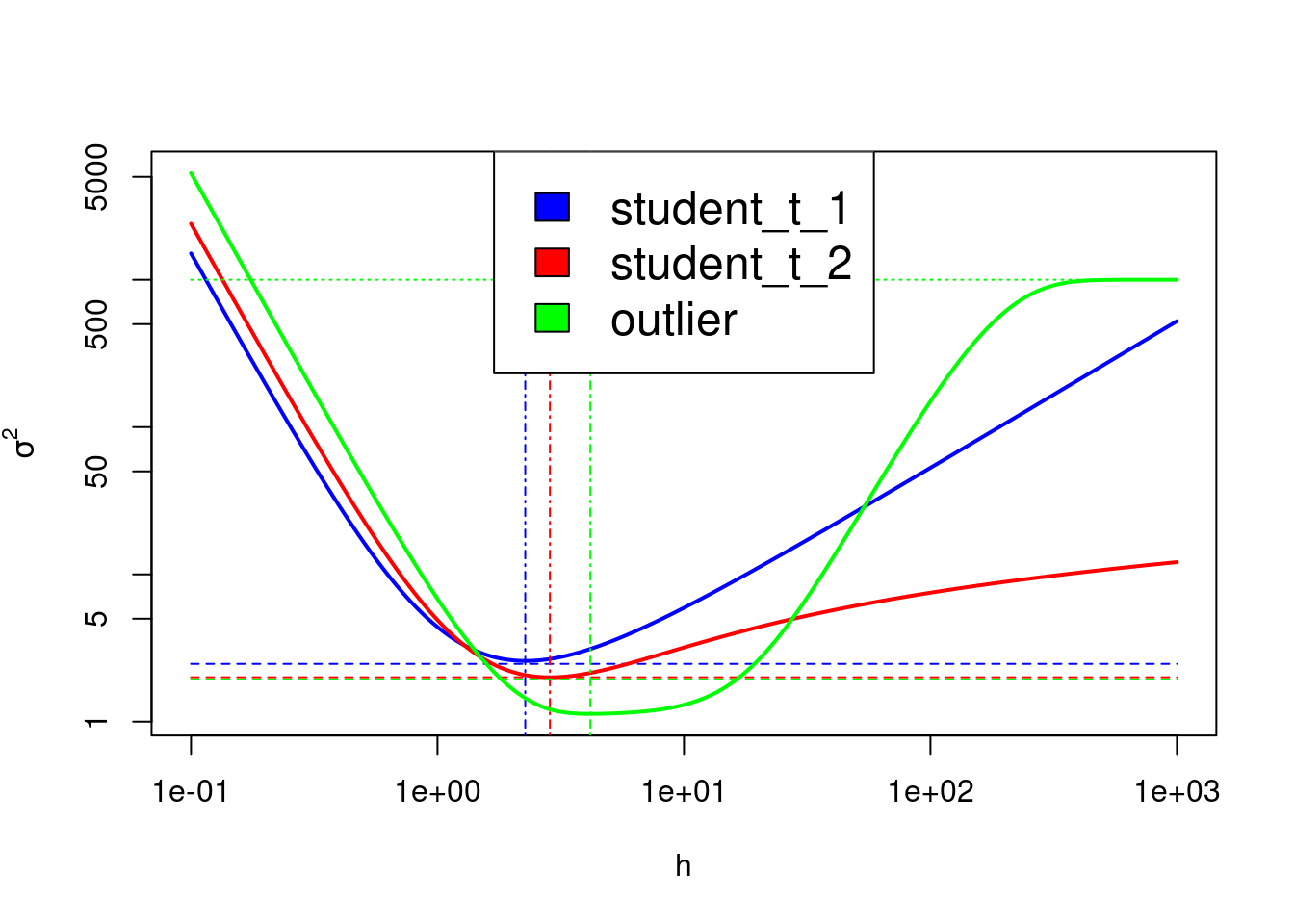}
        \caption{$\chaconserranokernelparam = 8$}
        \label{fig:kme-variance-beta-8-fat}
    \end{subfigure}
    \begin{subfigure}[b]{\figurewidth}
        \centering
        \includegraphics[width=\subfigurewidth]{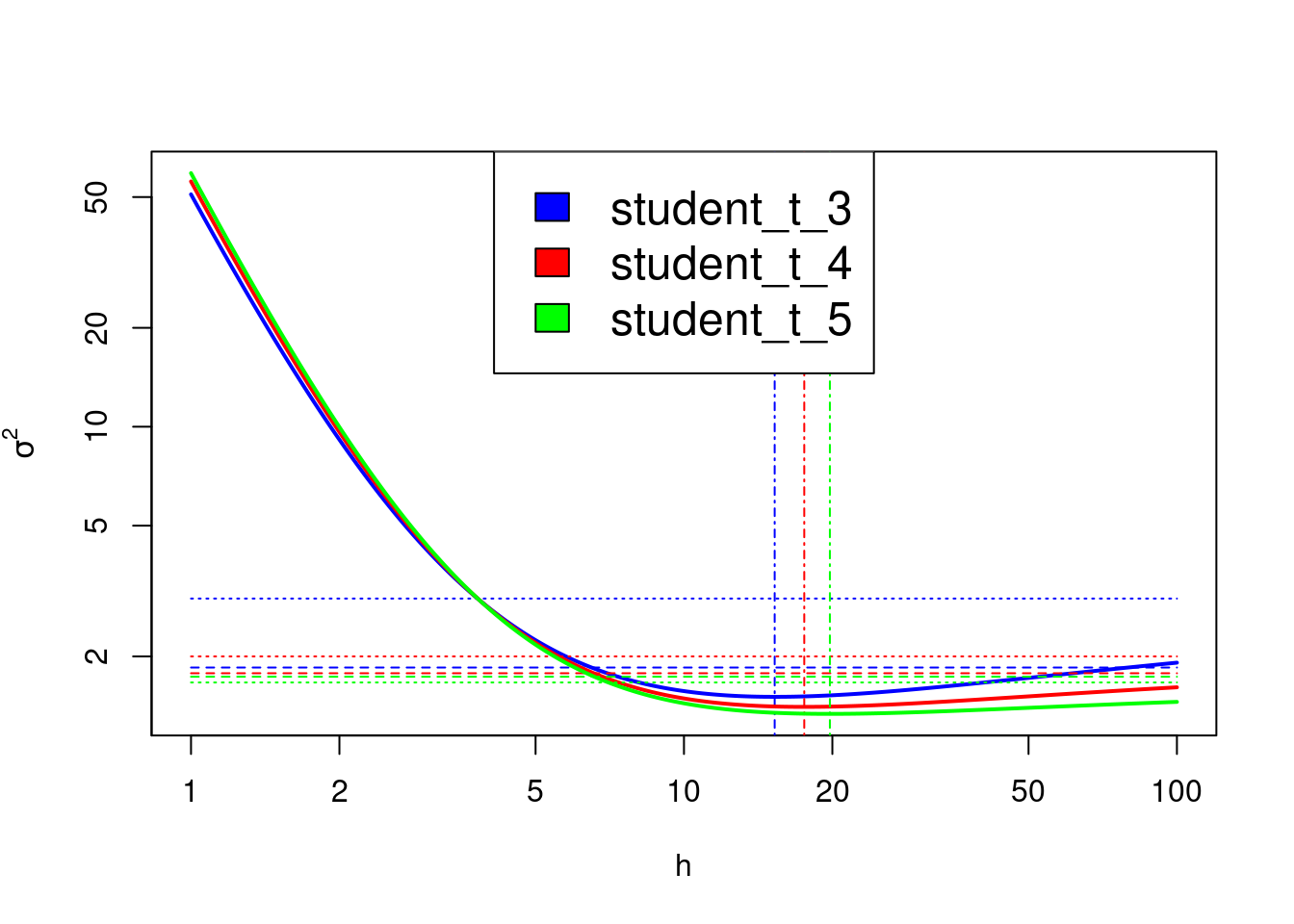}
        \caption{$\chaconserranokernelparam = 1/4$}
    \end{subfigure}
    \begin{subfigure}[b]{\figurewidth}
        \centering
        \includegraphics[width=\subfigurewidth]{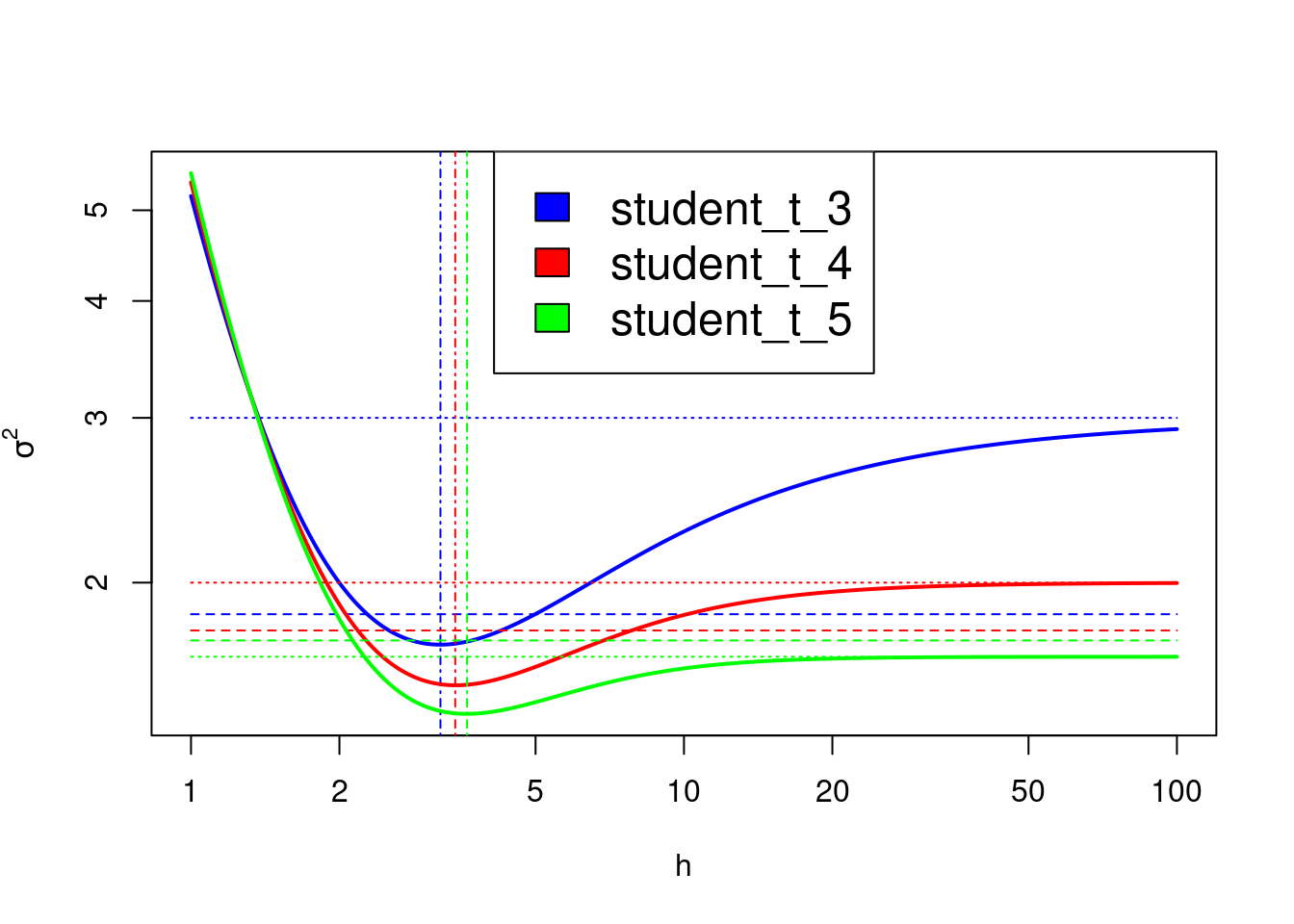}
        \caption{$\chaconserranokernelparam = 8$}
    \end{subfigure}
    \begin{subfigure}[b]{\figurewidth}
        \centering
        \includegraphics[width=\subfigurewidth]{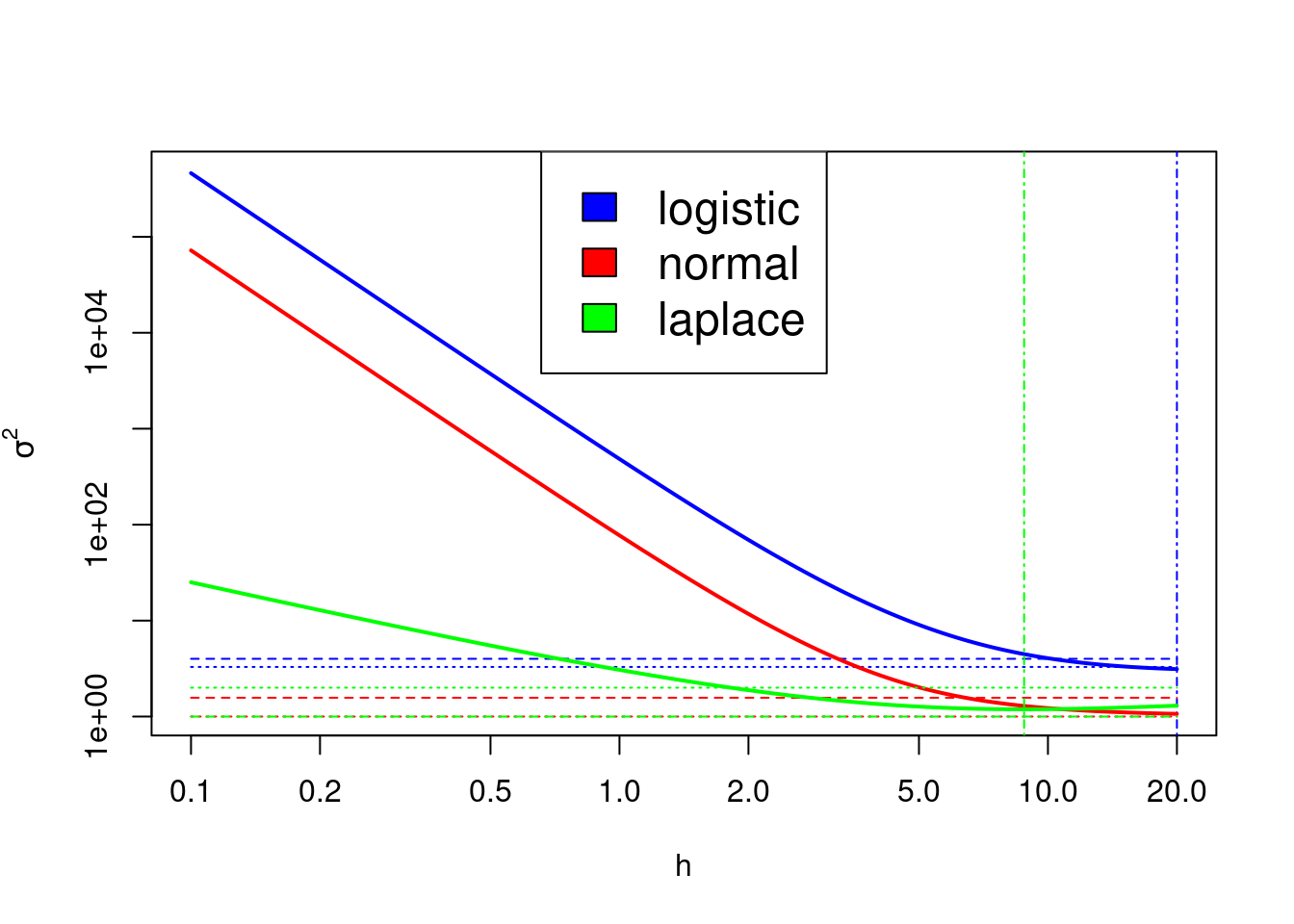}
        \caption{$\chaconserranokernelparam = 1/4$}
    \end{subfigure}
    \begin{subfigure}[b]{\figurewidth}
        \centering
        \includegraphics[width=\subfigurewidth]{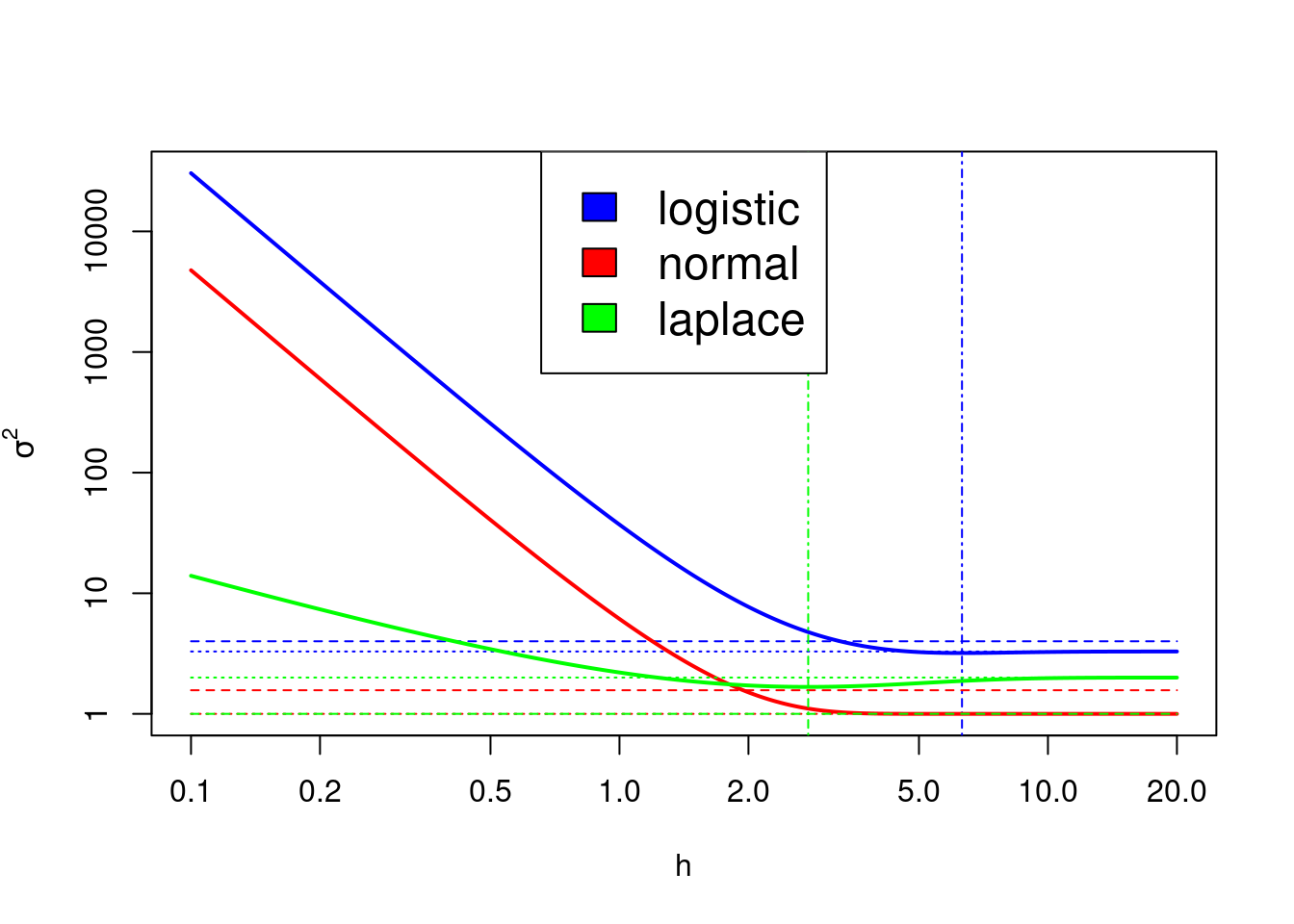}
        \caption{$\chaconserranokernelparam = 8$}
        \label{fig:kme-variance-beta-8-thin}
    \end{subfigure}
    \caption{%
        Several curves of the function $\lambdafunction{\bandwidth}{\squaredsmoothedstddevmode}$ for various combinations of \gls*{pdf} $\pdf$ and parameter $\chaconserranokernelparam$, taking $\kernel \equiv \chaconserranokernel$.
        Both the horizontal and vertical axes are on a logarithmic scale.
        The subfigures from the left column correspond to $\chaconserranokernelparam = 1/4$, while those on the right consider $\chaconserranokernelparam = 8$.
        The \glspl*{pdf} are those in \figurename~\ref{fig:symmetric-unimodal-test-beds}.
        The top row includes two instances from the Student's t, with $\studenttdegreesoffreedom \in \{1, 2\}$, and the \textit{outlier} \gls*{pdf}.
        The mid row gathers the remaining representatives of the Student's t, with $\studenttdegreesoffreedom \in \{3, 4, 5\}$.
        The bottom row comprises the normal, logistic, and Laplace \glspl*{pdf}.
        All the subfigures have the same structure.
        The elements related to a given \gls*{pdf} appear in the same color whenever defined and finite.
        The solid curve is the main variance function.
        The values $\squarestddev = \samplesize \varianceof{\samplemeanestimator}$ and $[2 \pdf(\median)]^{-2} \sim \samplesize \varianceof{\samplemedianestimator}$ appear as the horizontal dotted and dashed lines, respectively.
        Finally, the optimal $\bandwidth$ value minimizing $\squaredsmoothedstddevmode$ is a vertical line.
    }
    \label{fig:kme-variance}
\end{figure}

If we consider a Student's t \gls*{pdf} with $\studenttdegreesoffreedom > 2$ degrees of freedom, Theorem~\ref{th:chacon-serrano-kernel} says we should take $\chaconserranokernelparam > \studenttdegreesoffreedom$.
Therefore, as $\nugoestoinfty$, approaching Gaussianity, we must also take $\chaconserranokernelparam \goestoinfty$, and we see that $\lim_{\chaconserranokernelparam \goestoinfty} \derivativeofchaconserranokernel(\x) \propto -\x \cdot \indicatoroversupportatx$ for all $\x \in \reals \setminus \{\pm 1\}$.
Consequently, $\lim_{\chaconserranokernelparam \goestoinfty} \chaconserranokernel = \epanechnikovkernel$ uniformly.
In that sense, the kernel~\eqref{eq:chacon-serrano-kernel-derivative} asymptotically approaches the optimal behavior of the Epanechnikov kernel for Gaussian data.
Interestingly, the \gls*{kme} $\samplemode$ with kernel $\epanechnikovkernel$ and bandwidth $\bandwidth$ is equivalent to the M-estimator with
\begin{equation}
    \label{eq:trimmed-mean-m-estimator-rho}
    \mestimatorloss(\x)
    =
    \begin{cases}
        \x^2
        \,,
        \  & \mathrm{if} \
        \absvalof{\x} < \bandwidth
        \\
        \squarebandwidth
        \,,
        \  & \mathrm{otherwise}
    \end{cases}
    \,.
\end{equation}
Therefore, in that case, $\samplemode$ has a straightforward interpretation as a sort of ``trimmed mean'', i.e., an average of those $\ithsamplevar$ satisfying $\absvalof{\ithsamplevarminussamplemode} < \bandwidth$~\citep[p. 79]{Huber1964}.

Before moving on to the computational aspects of \glspl*{kme}, the essential theoretical results from this section are graphically summed up in \figurename~\ref{fig:kme-variance}.
There, we can see several plots of the variance function $\lambdafunction{\bandwidth}{\variancefunction(\bandwidth)}$ for various combinations of true \gls*{pdf} $\pdf$ and kernel $\chaconserranokernel$.
First, the two parts of Theorem~\ref{th:limit-variance-h} are confirmed.
Increasingly small values of $\bandwidth$ make the variance explode to infinity, while the variance of the \gls*{kme} approaches that of the true \gls*{pdf} as $\bandwidthgoestoinfty$, including the case of the Student's t for $\studenttdegreesoffreedom \in \{1, 2\}$, where $\squarestddev = \infty$.
In the case of the Laplace \gls*{pdf}, not differentiable at $\symmetrycenter$, the explosion as $\bandwidthgoestopositivezero$ turns out to be less dramatic.

Secondly, \figurename~\ref{fig:kme-variance} depicts the existence of a bandwidth $\optimalbandwidth$ minimizing the variance $\variancefunction(\bandwidth)$, even for \glspl*{pdf} that are not regularly varying, falling out of the hypotheses of Corollary~\ref{th:corollary-optimal-h}.
The only case in \figurename~\ref{fig:kme-variance} for which the variance does not have a finite minimizer is for the normal distribution.
Of course, this is because the sample mean $\samplemeanestimator$ coincides with the \gls*{mle} in this case, and, hence, it cannot be improved.
On the other hand, in most of the remaining cases, the \gls*{kme} with bandwidth $\optimalbandwidth$ not only improves over the sample mean, but it also presents lower variance than the sample median $\samplemedianestimator$, which has $\varianceof{\samplemedianestimator} \asymptoticallyequivalent [4\samplesize\pdf(\symmetrycenter)^2]^{-1}$~\citep[Eq. 14]{Lai1983}.
The only exception is for the Laplace \gls*{pdf}, for which, again, the sample median is also the \gls*{mle} of its location parameter~\citep[p. 23]{Maronna2019}.
All in all, despite having some narrow margins for improvement (see, e.g., \figurename~\ref{fig:kme-variance-beta-8-thin}), the \gls*{kme} should theoretically outperform the sample mean and median over these test-beds whenever possible.

Lastly, \figurename~\ref{fig:kme-variance} exposes an intuitive yet not trivial fact from~\eqref{eq:sample-mode-asymptotic-variance}: the \textit{shape} of the kernel $\kernel$ impacts the efficiency of the \gls*{kme}, in addition to the bandwidth.
In some extreme cases, a poorly chosen kernel can make the \gls*{kme} severely underperform, having an optimal minimum variance that exceeds that of a much simpler method like the sample median.
That is precisely the case in \figurename~\ref{fig:kme-variance-beta-8-fat} for $\chaconserranokernelparam = 8$ and Student's t with $\studenttdegreesoffreedom = 1$.
By contrast, in \figurename~\ref{fig:kme-variance-beta-0-25-fat}, the kernel $\chaconserranokernel$ with $\chaconserranokernelparam = 1/4$ outperforms the sample median for a range of values of $\bandwidth$.

\subsection{Computation}

\label{sec:computation}

There is no closed-form antiderivative for~\eqref{eq:chacon-serrano-kernel-derivative}.
Nonetheless, we shall see that there is a straightforward iterative algorithm to calculate $\samplemode$ without evaluating $\chaconserranokernel$ or even $\derivativeofchaconserranokernel$.
Again, the kernel~\eqref{eq:chacon-serrano-kernel-derivative} has a convenient interpretation when the \gls*{kme} is regarded as an M-estimator.
Adapting~\citet[Section 2.3.3]{Maronna2019} to our context, the weight function
\begin{equation}
    \label{eq:weight-function}
    \weightfunctionh(\x)
    =
    \begin{cases}
        -\derivativeofkernelh(\x) / \x
        \,,
        \  & \mathrm{if} \
        \x \neq 0
        \\
        -\secondderivativeofkernelh(0)
        \,,
        \  & \mathrm{if} \
        \x = 0
    \end{cases}
    \,,
\end{equation}
which is always positive, allows expressing the \gls*{kme} as the weighted mean
\begin{equation}
    \label{eq:kme-as-weighted-mean}
    \samplemode
    =
    \frac
    {%
        \sumoverallsample
        \weightfunctionh(\ithsamplevarminussamplemode)
        \ithsamplevar
    }
    {%
        \sumoverallsample
        \weightfunctionh(\ithsamplevarminussamplemode)
    }
    \,,
\end{equation}
where the weights on the right-hand side also depend on $\samplemode$.

Equation~\eqref{eq:kme-as-weighted-mean} implies that $\samplemode$ is \textit{one} of the fixed points of the function $\meanshiftaverage$ given by
\begin{equation}
    \label{eq:mean-shift-sample-average}
    \meanshiftaverage(\x)
    =
    \frac
    {%
        \sumoverallsample
        \weightfunctionh(\ithsamplevar - \x)
        \ithsamplevar
    }
    {%
        \sumoverallsample
        \weightfunctionh(\ithsamplevar - \x)
    }
    \,,
\end{equation}
when the denominator in~\eqref{eq:mean-shift-sample-average} does not vanish, and $\meanshiftaverage(\x) = \x$ otherwise.
This observation suggests an iterative procedure for computing $\samplemode$, known as \gls*{irw}.
Such an algorithm turns out to be numerically more stable than solving~\eqref{eq:m-estimator-argmin} or~\eqref{eq:m-estimator-equation-root} through customary optimization or root-finding algorithms, respectively~\citep[Section 2.10.5.1]{Maronna2019}.

The subindex $\bandwidth$ on the left-hand side of~\eqref{eq:weight-function} is a convenient notation to express the dependence on $\bandwidth$.
Despite $\weightfunctionh$ being similar to a kernel in the \gls*{kde} sense, it is generally \textit{not} a \gls*{pdf}.
Also, if we denote $\weightfunction$ the weight function~\eqref{eq:weight-function} corresponding to $\bandwidth = 1$, we would have $\weightfunctionh(\x) \neq \weightfunction(\x / \bandwidth) / \bandwidth$ for all $\bandwidth \neq 1$.
To obtain \gls*{pdf}-like scaling behavior, we should take $\lambdafunction{\x}{\squarebandwidth \weightfunctionh(\x)}$.

The following discussion of \gls*{irw} remains valid for other kernels, but we shall focus on $\chaconserranokernel$.
Considering $\kernel \equiv \chaconserranokernel$ in~\eqref{eq:weight-function}, we get $\weightfunctionh(\x) \propto \bumplikefunction(\inversebandwidth \x)$, where the exact value of the proportionality constant does not matter, as it cancels out due to the normalizing denominator in~\eqref{eq:kme-as-weighted-mean}.
Taking $\chaconserranokernelparam \goestoinfty$, we get a \textit{flat} weight function, i.e., $\weightfunctionh(\x) \propto 1$, if $\absvalof{x} < \bandwidth$, and zero elsewhere, retrieving the ``trimmed mean'' discussed above.

Let us assume that a choice of $\chaconserranokernelparam$ and $\bandwidth$ has been made.
The \gls*{irw} procedure is implemented as follows.
First, define the score function
\begin{equation}
    \label{eq:chacon-serrano-score-function}
    \scorefunction(\x)
    =
    \begin{cases}
        -
        (1 - \absvalof{\x}^{\chaconserranokernelparam})^{-1}
        \,,
        \  & \mathrm{if} \
        \absvalof{\x} < 1
        \\
        -\infty
        \,,
        \  & \mathrm{otherwise}
    \end{cases}
    \,.
\end{equation}
Starting from an initial guess $\samplemodeinitialguess$ at $\samplemode$, the transition from the $\samplemodekthiteration$-th approximation $\previousitersamplemode$ to the $(\samplemodekthiteration + 1)$-th approximation $\currentitersamplemode$, for $\samplemodekthiteration \geq 0$, is given by
\begin{equation}
    \label{eq:irw-update}
    \currentitersamplemode
    =
    \meanshiftaverage(\previousitersamplemode)
    =
    \sumoverallsample
    \currentweightvector_{\genindexi}
    \ithsamplevar
    \,,
\end{equation}
where the $\samplemodekthiteration$-th weight vector $\currentweightvector$ depends on $\previousitersamplemode$ through
\begin{equation*}
    \currentweightvector
    =
    \softmax
    \left[
        \scorefunction
        \left(
        \frac
        {\samplerandomvarindex{1} - \previousitersamplemode}
        {\bandwidth}
        \right),
        \dots,
        \scorefunction
        \left(
        \frac
        {\samplerandomvarindex{\samplesize} - \previousitersamplemode}
        {\bandwidth}
        \right)
        \right]
    \,,
\end{equation*}
and $\softmax$ is the \textit{softmax} function mapping $\scorevector = (\scorevector_1, \dots, \scorevector_{\samplesize})$ to $\softmax(\scorevector) = (\ithsoftmaxscorevector{1}, \dots, \ithsoftmaxscorevector{\samplesize})$, for
$\ithsoftmaxscorevector{\genindexi} = e^{\scorevector_{\genindexi}} / \sum_{\genindexj = 1}^{\samplesize} e^{\scorevector_{\genindexj}}$, using the convention $e^{-\infty} = 0$.
More abstractly, for $\samplemodekthiteration \geq 1$, if we define
$
    \kthmeanshiftaverage{\samplemodekthiteration}
    =
    \meanshiftaverage \circ \dots \circ \meanshiftaverage
    \
    (\samplemodekthiteration \ \mathrm{times})
$,
then
$
    \previousitersamplemode
    =
    \kthmeanshiftaverage{\samplemodekthiteration}(\samplemodeinitialguess)
$.
Iterations stop on some convergence criterion, such as
$
    \smallabsvalof{
        \currentitersamplemode
        -
        \previousitersamplemode
    }
    \leq
    \irwtolerance
    \bandwidth
$,
where $\irwtolerance > 0$ is a typically small tolerance parameter~\citep[Section 2.8.1]{Maronna2019}.

Following \citet[Section 2.8.1]{Maronna2019}, the convergence of \gls*{irw} in our context is guaranteed because $\chaconserranokernel$ is sufficiently smooth, and its corresponding $\weightfunctionh$ is bounded and monotonically decreasing as a function of $\absvalof{\x}$.
If there is a unique maximizer $\samplemode$ without other local maxima, then $
    \lim_{\samplemodekthiteration \goestoinfty}
    \previousitersamplemode
    =
    \samplemode
$,
regardless of the starting point $\samplemodeinitialguess$.
Otherwise, $\samplemodeinitialguess$ must be close to $\samplemode$ to avoid ``bad solutions''.
A standard choice to prevent the worst-case scenario is $\samplemodeinitialguess = \samplemedianestimator$, the sample median.
The \gls*{kme} calculated this way also inherits the optimal breakdown point of the sample median~\citep[p. 55]{Huber2009}.
Nevertheless, at the expense of a computational cost increase, an even \textit{safer}, minimal-risk choice is $\samplemodeinitialguess = \samplerandomvarindex{\genindexj}$, where $\genindexj = \argmax_{1 \leq \genindexi \leq \samplesize} \kdeat{\ithsamplevar}$.
However, the latter requires explicitly evaluating the \gls*{kde}, which \gls*{irw} was meant to avoid.

\begin{remark}
    It is worth noting here the link between the \gls*{irw} algorithm and modal clustering~\citep{Chacon2015}.
    Indeed, the update scheme~\eqref{eq:irw-update} is known in clustering as the \textit{mean shift} algorithm~\citep{Fukunaga1975}, which iteratively translates any initial point through the steepest density ascent path until it reaches a local maximum, and then clusters together all the points that converge to the same local density mode after such a translation.
\end{remark}

\subsection{Parameter optimization}

\label{sec:parameter-optimization}

As mentioned above, carefully choosing $\chaconserranokernelparam$ and $\bandwidth$ is critical for obtaining an efficient \gls*{kme} $\samplemode$.
We propose optimizing $\chaconserranokernelparam$ and $\bandwidth$ based on the data.
Let us denote $\targetvariance$ the asymptotic variance~\eqref{eq:sample-mode-asymptotic-variance} considering a kernel $\kernel \equiv \chaconserranokernel$.
Then, the optimal shape and bandwidth parameters are those that minimize $\targetvariance$, i.e.,
$
    (\optimalchaconserranokernelparam, \optimalbandwidth)
    =
    \argmin_{\chaconserranokernelparam > 0, \bandwidth > 0}
    \targetvariance
$.
Let us define a \textit{standardized} version of the random variable $\randomvarx \followsdistr \pdf$ centered on its actual center of symmetry $\symmetrycenter$ as
$
    \standardrandomvar
    =
    (\centeredvar{\randomvarx}) / \bandwidth
$,
where the bandwidth $\bandwidth$ plays the role of a scaling parameter.
Also, denoting
$
    \chaconserranokernelpsix
    =
    -\x \bumplikefunction(\x)
$,
if we define the auxiliary functions
$
    \ithsymmetricfunction{1}{\x}
    =
    \chaconserranokernelpsix^2
$
and
$
    \ithsymmetricfunction{2}{\x}
    =
    \derivativeofwrt{\chaconserranokernelpsix}{\x}
$,
we have
\begin{equation}
    \label{eq:target-variance}
    \targetvariance
    =
    \squarebandwidth
    \
    \frac
    {\ithsymmetricexpectation{1}}
    {\ithsymmetricexpectation{2}^2}
    \,.
\end{equation}
Using the symmetry of the auxiliary functions and the target \gls*{pdf}, and the compact support $[-1, 1]$ of the former, inherited from $\bumplikefunction$, we get, for $\auxiliaryfunctionindex \in \{1, 2\}$,
\begin{equation}
    \label{eq:iths-ymmetric-expectation}
    \ithsymmetricexpectation{\auxiliaryfunctionindex}
    =
    2\bandwidth
    \intoverunitinterval
    \ithsymmetricfunction{\auxiliaryfunctionindex}{\x}
    \
    \untranslatedpdf(\bandwidth\x)
    \ \dx
    \,,
\end{equation}
where we recall that $\untranslatedpdf(\x) = \pdfatmuplusx$ is the centered version of $\pdf$ in Definition~\ref{def:unimodality}.
Both integrals~\eqref{eq:iths-ymmetric-expectation} can be accurately computed using standard numerical methods.

Minimizing~\eqref{eq:target-variance}, a two-dimensional optimization problem, is more complicated than the one-dimensional scenario depicted in \figurename~\ref{fig:kme-variance}, where $\chaconserranokernelparam$ was fixed.
We propose employing a gradient-free optimization algorithm such as~\citet{Nelder1965}, which produces more than satisfactory results, as Section~\ref{sec:simulation-study} will demonstrate.
In this respect, we should emphasize that, rather than necessarily finding the global optimum, our goal is to \textit{improve} on some default sensible parameter guesses $\chaconserranokernelparam = 1$ and $\bandwidth = \madn$, where the latter is the normalized \textit{median absolute deviation about the median} defined in~\citet[p. 5]{Maronna2019}, a robust scale estimator in the context of M-estimators.

Since $\pdf$ is a priori unknown in~\eqref{eq:target-variance}, we propose employing a \textit{plug-in}-type estimator, replacing $\untranslatedpdf$ with a convenient estimate $\estimateofuntranslatedpdf$ in~\eqref{eq:iths-ymmetric-expectation}.
Specifically, following~\citet{Meloche1991}, we can take $\estimateofuntranslatedpdf(\x) = \symmetrickde(\samplemedianestimator + \x)$, where
\begin{equation}
    \label{eq:symmetric-kde}
    \symmetrickde(\x)
    =
    \frac{\kdesecondary(x) + \kdesecondary(2\samplemedianestimator - \x)}{2}
    \,,
\end{equation}
and $\kdesecondary$ is the \gls*{kde} in~\eqref{eq:kde-definition} but with a custom bandwidth $\bandwidthsecondary$ independent of the $\bandwidth$ in~\eqref{eq:target-variance}.
Since $\estimateofuntranslatedpdf$ will be evaluated many times when computing~\eqref{eq:iths-ymmetric-expectation}, we recommend employing, especially for large samples, a \textit{binned} interpolated approximation~\citep[Appendix D.2]{Wand1995} for $\kdesecondary$ over $[\samplemedianestimator - \bandwidthupperbound, \samplemedianestimator + \bandwidthupperbound]$ using a fine grid, where $\bandwidthupperbound > 0$ is some reasonable upper bound for the optimal $\bandwidth$.

The function~\eqref{eq:symmetric-kde} is a modified version of the household \gls*{kde} that considers the symmetry of $\pdf$ about $\symmetrycenter$.
Indeed,~\eqref{eq:symmetric-kde} is a symmetric \gls*{pdf} about the sample median $\samplemedianestimator$, an estimator of $\symmetrycenter$.
Alternatively,~\eqref{eq:symmetric-kde} can be seen as the \gls*{kde} with an \textit{augmented} sample $(\randomsample, 2\samplemedianestimator - \samplerandomvarindex{1}, \dots, 2\samplemedianestimator - \samplerandomvarindex{\samplesize})$, establishing connections with similar procedures in robust statistics such as~\citet{Mehrotra1991} \citep[see also][Example 1]{Chacon2009}.
Results by~\citet{Meloche1991} show that, under mild assumptions, symmetrization of the \gls*{kde} about a well-behaved estimator of $\symmetrycenter$, such as the sample median, halves the variance term in the \textit{mean integrated square error} for \gls*{pdf} estimation.

Given the stringent assumptions of symmetry and unimodality, a basic and fast \textit{rule-of-thumb} bandwidth selection criterion, such as that of~\citet[p. 48]{Silverman1986}, should provide a reasonable estimate for $\bandwidthsecondary$.
Such a procedure relies on relatively minor departures from normality, usually producing over-smoothed \glspl*{kde} in general multimodality settings~\citep[Section 3.2.1]{Wand1995}.
Nonetheless, over-smoothing should play to our advantage, producing tails in $\estimateofuntranslatedpdf$ that are robust against outliers and enforce unimodality.
Even though other more sophisticated bandwidth selectors could potentially yield better results at ensuring unimodality,~\citeauthor{Silverman1986}'s bandwidth copes well with this particular scenario while keeping the computational cost to a minimum.

\subsection{Illustrative example}

\label{sec:illustrative-example}

Let us illustrate our \gls*{kme} proposal through a synthetic data example.
Consider the sample realization given by $(\randomvarobsindex{1}, \dots, \randomvarobsindex{7}) = (-2, -1, 0, 1, 2, 10, 11)$, with $\samplesize = 7$.
The sample mean and median realizations are $\samplemeanestimator = 3$ and $\samplemedianestimator = 1$, respectively.
None of them makes a convincing candidate for the center of symmetry, as the pilot \gls*{kde} $\kdesecondary$ shown in red in \figurename~\ref{fig:kme-synthetic-data} suggests a bimodal structure with two subpopulations.
The largest one, corresponding to the first five observations, reaches its maximum density near zero, which would be a perfect center of symmetry if we removed the last two observations ($10$ and $11$) in the second cluster.
Therefore, finding a good center of symmetry entails separating \textit{bulk} data from \textit{outlying} data.
In other circumstances, if unimodality were not assumed, those outliers would be worth analyzing.

The symmetric \gls*{kde}~\eqref{eq:symmetric-kde} is depicted in blue in \figurename~\ref{fig:kme-synthetic-data}.
The median $\samplemedianestimator$ about which the original \gls*{kde} $\kdesecondary$ is \textit{mirrored} shows as the blue vertical dashed line.
A slight shift in the maximum of $\symmetrickde$ is the price for symmetric and thinner tails, more compatible with our assumptions.
The optimal parameters for our \gls*{kme} are $\optimalparameters = (1.765101, 9.199545)$.
Then, the corresponding \textit{rescaled} kernel centered on the median, i.e., $\lambdafunction{\x}{\chaconserranokernel[\bandwidth^{-1} (\x - \samplemedianestimator)] / \bandwidth}$, is displayed as the dotted black line in \figurename~\ref{fig:kme-synthetic-data}.
Finally, the \gls*{kde} underlying the \gls*{kme}, built from the previous kernel, is shown in solid green, while the \gls*{kme} is given by the green vertical dashed line.
As we can see, the optimal \gls*{kde} is slightly smoother than the original $\kdesecondary$, but both ultimately reach their peak near zero.

\begin{figure}
    \centering
    \includegraphics[width=0.5\textwidth]{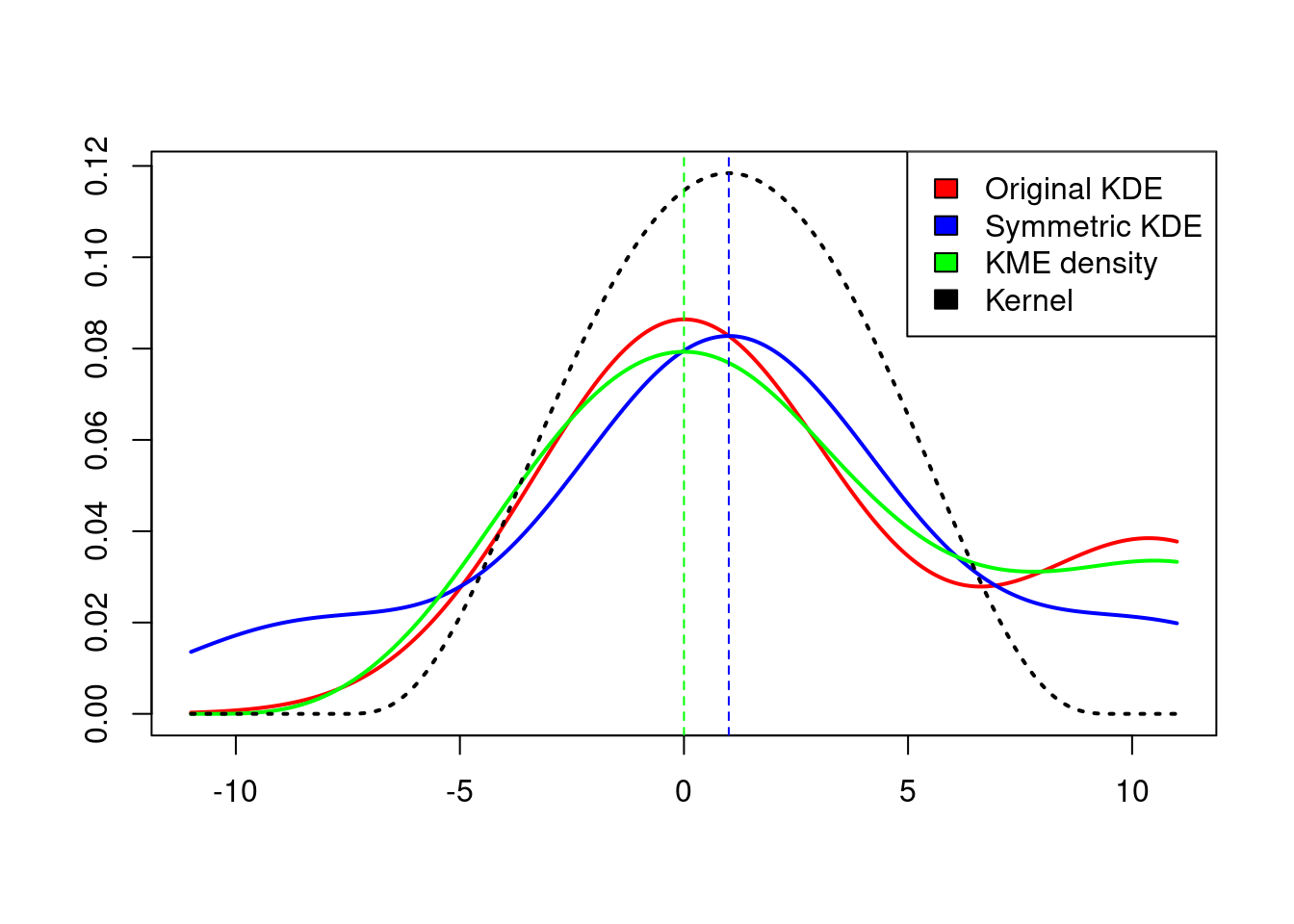}
    \caption{%
        Plot summary of all the relevant functions involved in \gls*{irw} for the synthetic data example in Section~\ref{sec:illustrative-example}.
        The pilot \gls*{kde} $\kdesecondary$ and its symmetrized version $\symmetrickde$ about the median, represented as a blue vertical dashed line, are shown in red and blue, respectively.
        The black dotted curve is the rescaled kernel with optimal parameters $\optimalparameters$.
        The subsequent \gls*{kde} $\kde$ underlying $\samplemode$ shows in green, while the green vertical dashed line corresponds to the \gls*{irw} computation result.
    }
    \label{fig:kme-synthetic-data}
\end{figure}

Table~\ref{tab:irw-synthetic-data} gathers the final and intermediate results from the \gls*{irw} algorithm.
Convergence was reached after eight iterations.
From the first iteration, the observation $\randomvarobsindex{7}$ has zero weight, while the weight of $\randomvarobsindex{6}$ vanishes from the second one onwards.
As the algorithm progresses, the weights stabilize symmetrically about $\randomvarobsindex{3} = 0$, which has the most mass.
Ultimately, the value of $\samplemode$ is practically indistinguishable from zero.

\begin{table}
    \centering
    \footnotesize
    \begin{tabular}{ccccccccc}
    \toprule
    $k$ & $\mathbf{w}^{(k)}_1$ & $\mathbf{w}^{(k)}_2$ & $\mathbf{w}^{(k)}_3$ & $\mathbf{w}^{(k)}_4$ & $\mathbf{w}^{(k)}_5$ & $\mathbf{w}^{(k)}_6$ & $\mathbf{w}^{(k)}_7$ & $\hat{\mathsf{m}}_{k + 1}$ \\
    \midrule
    0   & \texttt{1.80E-01}    & \texttt{1.96E-01}    & \texttt{2.07E-01}    & \texttt{2.11E-01}    & \texttt{2.07E-01}    & \texttt{2.09E-12}    & \texttt{0}           & \texttt{6.89E-02}          \\
    1   & \texttt{1.92E-01}    & \texttt{2.03E-01}    & \texttt{2.07E-01}    & \texttt{2.04E-01}    & \texttt{1.94E-01}    & \texttt{0}           & \texttt{0}           & \texttt{4.67E-03}          \\
    2   & \texttt{1.93E-01}    & \texttt{2.03E-01}    & \texttt{2.07E-01}    & \texttt{2.03E-01}    & \texttt{1.93E-01}    & \texttt{0}           & \texttt{0}           & \texttt{3.17E-04}          \\
    3   & \texttt{1.93E-01}    & \texttt{2.03E-01}    & \texttt{2.07E-01}    & \texttt{2.03E-01}    & \texttt{1.93E-01}    & \texttt{0}           & \texttt{0}           & \texttt{2.15E-05}          \\
    4   & \texttt{1.93E-01}    & \texttt{2.03E-01}    & \texttt{2.07E-01}    & \texttt{2.03E-01}    & \texttt{1.93E-01}    & \texttt{0}           & \texttt{0}           & \texttt{1.46E-06}          \\
    5   & \texttt{1.93E-01}    & \texttt{2.03E-01}    & \texttt{2.07E-01}    & \texttt{2.03E-01}    & \texttt{1.93E-01}    & \texttt{0}           & \texttt{0}           & \texttt{9.92E-08}          \\
    6   & \texttt{1.93E-01}    & \texttt{2.03E-01}    & \texttt{2.07E-01}    & \texttt{2.03E-01}    & \texttt{1.93E-01}    & \texttt{0}           & \texttt{0}           & \texttt{6.73E-09}          \\
    7   & \texttt{1.93E-01}    & \texttt{2.03E-01}    & \texttt{2.07E-01}    & \texttt{2.03E-01}    & \texttt{1.93E-01}    & \texttt{0}           & \texttt{0}           & \texttt{4.57E-10}          \\
    \bottomrule
\end{tabular}

    \caption{%
        \Gls*{irw} weights and mode approximations from the update scheme~\eqref{eq:irw-update} for the synthetic data example.
    }
    \label{tab:irw-synthetic-data}
\end{table}

\section{Case study}

\label{sec:case-study}

The estimation of the center of symmetry naturally arises in physics.
A measurement $\samplerandomvar$ can be described by the location model $\locationmodel$, where $\symmetrycenter$ represents an unknown parameter of interest and $\randomerror$ is a random variable accounting for the measurement error~\citep[Section 2.1]{Maronna2019}.
In this context, to dismiss the existence of systematic errors, the error variable $\randomerror$ is usually assumed to be symmetric about zero, physically meaning that overestimating and underestimating are equally likely~\citep{Taylor1997}.

A classic example to illustrate the robust estimation of location is Simon Newcomb's experiment from 1882 for measuring the speed of light~\citetext{\citealp[pp. 66--67]{Gelman2013};~\citealp[Example 1.2]{Maronna2019};~\citealp{Stigler1977}}.
Newcomb measured the time it takes light to cover a distance of 7,442 meters.
The recorded unique values and their number of repetitions are collected in the first two columns of Table~\ref{tab:c-measurements}.
See~\citet[Figure 3.1]{Gelman2013} for a sample histogram.
A total of $\samplesize = 66$ measurements were taken, the lowest two of which ($-44$ and $-2$) are outliers.
The sample mean, $26.2$, is much more affected by the two outliers than the sample median, $27$.
Despite the latter being a more reasonable centrality measure, the value $28$ is still the most repeated in the sample, i.e., the \textit{discrete} mode.
For that matter, if we removed the two low outliers, the value $28$ would be at the same distance from the new minimum (i.e., $16$) and the maximum (i.e., $40$).
We shall see that the \gls*{kme} provides an elegant solution to this problem closer to $28$.

The optimal shape and bandwidth parameters for the \gls*{kme} using our kernel proposal~\eqref{eq:chacon-serrano-kernel-derivative} are $\optimalparameters = (97.03537, 21.23523)$.
Then, the \gls*{irw} algorithm yields $\samplemode = 27.75$.
The corresponding \gls*{irw} unitary and total weights at the last iteration for each sample value are shown in the third and fourth columns of Table~\ref{tab:c-measurements}.
As we can see, the two outliers have no weight, while the rest have the same unit weight of $1 / 64$.
Therefore, \gls*{irw} computes the ``trimmed mean'' M-estimator corresponding to~\eqref{eq:trimmed-mean-m-estimator-rho}, equivalent to the \gls*{kme} with Epanechnikov kernel, employing a bandwidth $\bandwidth = \optimalbandwidth$.
As it turns out, considering the finite computer precision, the score function $\scorefunction$ in~\eqref{eq:chacon-serrano-score-function} with $\chaconserranokernelparam$ equal to the large optimal $\optimalchaconserranokernelparam$ above is numerically indistinguishable from the constant $-1$ over all the standardized random variables $(\ithsamplevarminussamplemode) / \bandwidth$ underlying~\eqref{eq:kme-as-weighted-mean}.
Finally, the fifth column in Table~\ref{tab:c-measurements} shows the value of the \gls*{kde} behind the \gls*{kme} at each sample observation, reaching its maximum at $28$, the discrete mode.

\begin{table}
    \centering
    \footnotesize
    \begin{tabular}{rcccc}
    \toprule
    Value        & Count      & Unit weight          & Total weight         & Density              \\
    \midrule
    \texttt{-44} & \texttt{1} & \texttt{0}           & \texttt{0}           & \texttt{5.41522E-04} \\
    \texttt{-2}  & \texttt{1} & \texttt{0}           & \texttt{0}           & \texttt{8.30601E-04} \\
    \texttt{16}  & \texttt{2} & \texttt{1.56250E-02} & \texttt{3.12500E-02} & \texttt{2.22093E-02} \\
    \texttt{19}  & \texttt{1} & \texttt{1.56250E-02} & \texttt{1.56250E-02} & \texttt{2.66347E-02} \\
    \texttt{20}  & \texttt{1} & \texttt{1.56250E-02} & \texttt{1.56250E-02} & \texttt{2.79265E-02} \\
    \texttt{21}  & \texttt{2} & \texttt{1.56250E-02} & \texttt{3.12500E-02} & \texttt{2.90680E-02} \\
    \texttt{22}  & \texttt{2} & \texttt{1.56250E-02} & \texttt{3.12500E-02} & \texttt{3.00520E-02} \\
    \texttt{23}  & \texttt{3} & \texttt{1.56250E-02} & \texttt{4.68750E-02} & \texttt{3.08786E-02} \\
    \texttt{24}  & \texttt{5} & \texttt{1.56250E-02} & \texttt{7.81250E-02} & \texttt{3.15478E-02} \\
    \texttt{25}  & \texttt{5} & \texttt{1.56250E-02} & \texttt{7.81250E-02} & \texttt{3.20595E-02} \\
    \texttt{26}  & \texttt{5} & \texttt{1.56250E-02} & \texttt{7.81250E-02} & \texttt{3.24138E-02} \\
    \texttt{27}  & \texttt{6} & \texttt{1.56250E-02} & \texttt{9.37500E-02} & \texttt{3.26106E-02} \\
    \texttt{28}  & \texttt{7} & \texttt{1.56250E-02} & \texttt{1.09375E-01} & \texttt{3.26499E-02} \\
    \texttt{29}  & \texttt{5} & \texttt{1.56250E-02} & \texttt{7.81250E-02} & \texttt{3.25318E-02} \\
    \texttt{30}  & \texttt{3} & \texttt{1.56250E-02} & \texttt{4.68750E-02} & \texttt{3.22563E-02} \\
    \texttt{31}  & \texttt{2} & \texttt{1.56250E-02} & \texttt{3.12500E-02} & \texttt{3.18233E-02} \\
    \texttt{32}  & \texttt{5} & \texttt{1.56250E-02} & \texttt{7.81250E-02} & \texttt{3.12329E-02} \\
    \texttt{33}  & \texttt{2} & \texttt{1.56250E-02} & \texttt{3.12500E-02} & \texttt{3.04850E-02} \\
    \texttt{34}  & \texttt{1} & \texttt{1.56250E-02} & \texttt{1.56250E-02} & \texttt{2.95797E-02} \\
    \texttt{36}  & \texttt{4} & \texttt{1.56250E-02} & \texttt{6.25000E-02} & \texttt{2.72967E-02} \\
    \texttt{37}  & \texttt{1} & \texttt{1.56250E-02} & \texttt{1.56250E-02} & \texttt{2.59271E-02} \\
    \texttt{39}  & \texttt{1} & \texttt{1.56250E-02} & \texttt{1.56250E-02} & \texttt{2.29097E-02} \\
    \texttt{40}  & \texttt{1} & \texttt{1.56250E-02} & \texttt{1.56250E-02} & \texttt{2.11794E-02} \\
    \bottomrule
\end{tabular}

    \caption{%
        Newcomb's measurements of the speed of light as deviations from 24,800 nanoseconds.
        The first two columns represent the unique measured values and the number of times they occur, respectively.
        The third column shows the \gls*{irw} unitary weights corresponding to the first column values.
        Then, the fourth column is the unit weight times the number of repetitions of each unique value.
        Finally, the last column gives the value of the \gls*{kde} underlying the \gls*{kme} at each observation.
    }
    \label{tab:c-measurements}
\end{table}

We can better assess the situation by looking at \figurename~\ref{fig:kme-case-study}.
In comparison with the synthetic data example in Section~\ref{sec:illustrative-example} and \figurename~\ref{fig:kme-synthetic-data}, the isolated local modes of $\kdesecondary$, corresponding to the two outliers, are so much less pronounced that they are almost entirely removed from $\kde$.
Therefore, the risk of getting trapped in any local maxima of the \gls*{kde} is minimal.
We also see that the original $\kdesecondary$ is already mostly symmetric about the sample median, nearly coinciding with $\symmetrickde$.
Moreover, the optimal rescaled kernel $\chaconserranokernel$, closely resembling the Epanechnikov, supports the Gaussianity hypothesis in~\citet{Gelman2013}, except for the two outliers.

\begin{figure}
    \centering
    \includegraphics[width=0.5\textwidth]{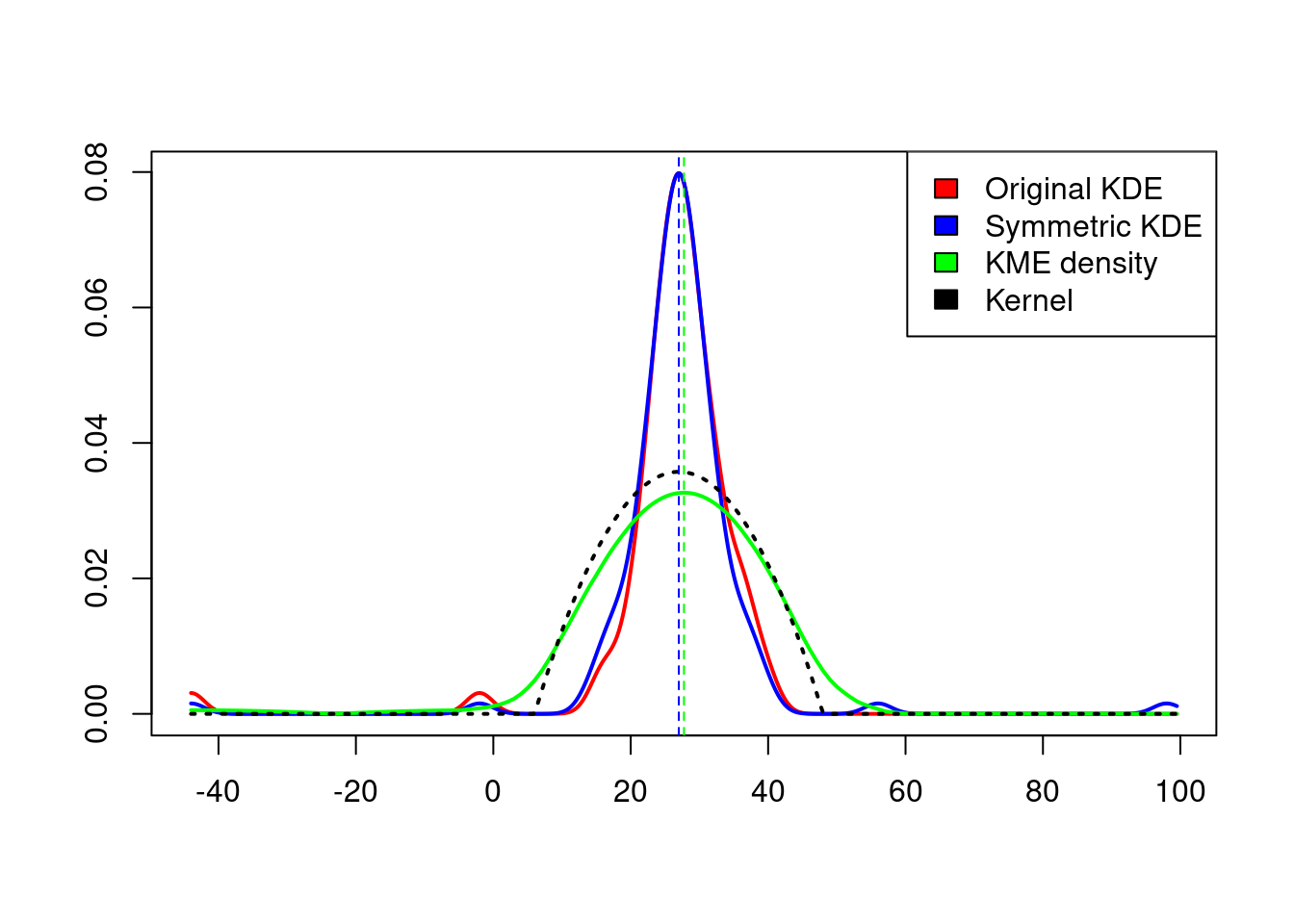}
    \caption{%
        Plot summary of all the relevant functions involved in \gls*{irw} for the case study.
        The structure is the same as in \figurename~\ref{fig:kme-synthetic-data}.
    }
    \label{fig:kme-case-study}
\end{figure}

Detecting and completely discarding the two outliers is an interesting feat of our \gls*{kme} proposal.
Choosing a trimming level $\trimlevel = 2/66$ in a (symmetrically) trimmed mean would also affect the upper tail observations $39$ and $40$, which, though extreme, are not outliers.
As a result, the trimmed mean would be $27.37$, which is lower than our \gls*{kme}.
In any case, despite the numerous tools for estimating the center of symmetry, the actual value of the speed of light known today is $33$, far from any of the estimates we have seen.
Therefore, as~\citeauthor{Gelman2013} remind us, data analysis can only be as good as the experiment that produces the data.

\section{Simulation study}

\label{sec:simulation-study}

This section demonstrates the practical effectiveness of \glspl*{kme} for estimating the center of symmetry, with particular attention paid to our proposal.
The latter comprises the parameter optimization in Section~\ref{sec:parameter-optimization} for the novel kernel family~\eqref{eq:chacon-serrano-kernel-derivative}, followed by a run of the \gls*{irw} algorithm in Section~\ref{sec:computation}.
We shall compare our \gls*{kme} with two classic M-estimators, the sample mean and median, and two redescending M-estimators, Tukey's \textit{biweight} and Andrew's \textit{sine}~\citep[p. 100]{Huber2009}.
For completeness, we also include in the study two classic L-estimators, the trimmed and winsorized means~\citep[Section 2.4]{Maronna2019}.

The two considered redescending M-estimators are also \glspl*{kme}.
The kernel corresponding to Tukey's biweight is the \textit{triweight}~\citep[p. 31]{Wand1995}
$
    \kernelatx
    \propto
    (1 - \x^2)^3 \cdot \indicatoroversupportatx
$,
while that of Andrew's sine is the \textit{raised cosine}
$
    \kernelatx
    \propto
    [1 + \cos(\pi \x)] \cdot \indicatoroversupportatx
$.
Both kernels and their respective associated functions are shown in \figurename~\ref{fig:redescending-m-estimators}.
The scale parameter (bandwidth) $\bandwidth$ for these M-estimators is usually tuned by assuming data from a \textit{contaminated} Gaussian distribution.
This leads to bandwidths that are \textit{prefixed} multiples of some robust scale estimate, contrary to the full data-driven optimization in Section~\ref{sec:parameter-optimization}.
Hence, these methods are considered \textit{non-adaptive}~\citep{Hogg1974}.
Specifically, letting $\madnvar = \madn$ be as in Section~\ref{sec:parameter-optimization}, we consider $\bandwidth = 6\madnvar$ for Tukey's biweight, and $\bandwidth = 2.1\pi\madnvar$ for Andrew's sine.
The factor $\madnvar$ is included because of the recommendation in~\citet[Section 2.8.1]{Maronna2019}~\citep[see also][]{Hogg1974}, whereas the ``magical'' constants $6$ and $2.1\pi$, taken from~\citet{Stigler1977}, were initially proposed by Tukey and Andrew themselves.
The same \gls*{irw} algorithm in Section~\ref{sec:computation} was used to compute both redescending M-estimators.

For the trimmed and winsorized means, users typically select one of the widespread values $\trimlevel \in \{0.1, 0.15, 0.25\}$ for the trimming level, as in~\citet{Stigler1977}.
Here, however, we shall attempt to optimize $\trimlevel \in [0, 1/2)$, allowing both L-estimators to range between a \textit{mean-like} ($\trimlevel = 0$) or a \textit{median-like} ($\trimlevel \lesssim 1/2$) behavior, depending on the data.
To do so, we implemented the straightforward \textit{bootstrap} variance-minimizing procedure by~\citet{Mehrotra1991} that yielded good results for trimmed means over finite samples.
Specifically, we employed their augmented sample strategy about the median (which they call Estimate 3), analogous to the one behind~\eqref{eq:symmetric-kde}.
These methods are considered \textit{adaptive} by~\citet{Hogg1974}, who praises the adaptive version of the trimmed mean for the symmetric case.

Simulated data will be drawn from each of the nine symmetric, unimodal \glspl*{pdf} about $\symmetrycenter = 0$ shown in \figurename~\ref{fig:symmetric-unimodal-test-beds}.
These are the following:
\begin{itemize}
    \item \texttt{normal}:
          Standard Gaussian \gls*{pdf} $\pdf(\x) = \standardgaussianpdf(\x) \propto e^{-\x^2 / 2}$.
    \item \texttt{logistic}:
          Logistic \gls*{pdf} $\pdf(\x) = (e^{\x / 2} + e^{-x / 2})^{-2}$.
    \item \texttt{laplace}:
          Laplace \gls*{pdf} $\pdf(\x) \propto e^{-\absvalof{\x}}$.
    \item \texttt{student\_t\_<$\studenttdegreesoffreedom$>} ($\studenttdegreesoffreedom \in \{1, \dots, 5\}$):
          Student's t \gls*{pdf}~\eqref{eq:student-t-pdf} with $\studenttdegreesoffreedom \in \{1, \dots, 5\}$.
    \item \texttt{outlier}:
          Gaussian mixture \gls*{pdf} $\pdf(\x) = (9/10)\standardgaussianpdf(\x) + (1/10)\standardgaussianpdf(\x/100)/100$.
\end{itemize}
\citet{Mehrotra1991} previously used the normal, logistic, and Laplace (double exponential) distributions as test-beds.
These authors also included Student's t with $\studenttdegreesoffreedom = 1$, the Cauchy distribution.
In turn, the so-called \textit{outlier} distribution is a variant of the classic homonym \gls*{pdf} in~\citet{Marron1992}, where the original $10\%$ component of the mixture was $\standardgaussianpdf(\x/10)/10$, i.e., a Gaussian with $\stddev = 10$ instead of our choice of $\stddev = 100$.

Compared to the normal distribution, the rest of the test-bed \glspl*{pdf} exhibit \textit{heavy-tailedness}.
The logistic, Laplace, and outlier distributions have a larger \textit{kurtosis}~\citep[p. 228]{Maronna2019} than the normal, meaning they have a moderate but more significant proportion of outliers.
Then, all instances from Student's t family are regularly varying, with tails that decay according to a \textit{power law}, which implies a slower rate than the exponential-like tails of the rest.
In particular, when $\studenttdegreesoffreedom = 1$, the tails are so heavy that even the \gls*{pdf} expectation is undefined.
The expectation does exist for $\studenttdegreesoffreedom = 2$, as well as the variance, but the latter is still infinite.

\begin{figure}
    \centering
    \begin{subfigure}[b]{\figurewidth}
        \centering
        \includegraphics[width=\subfigurewidth]{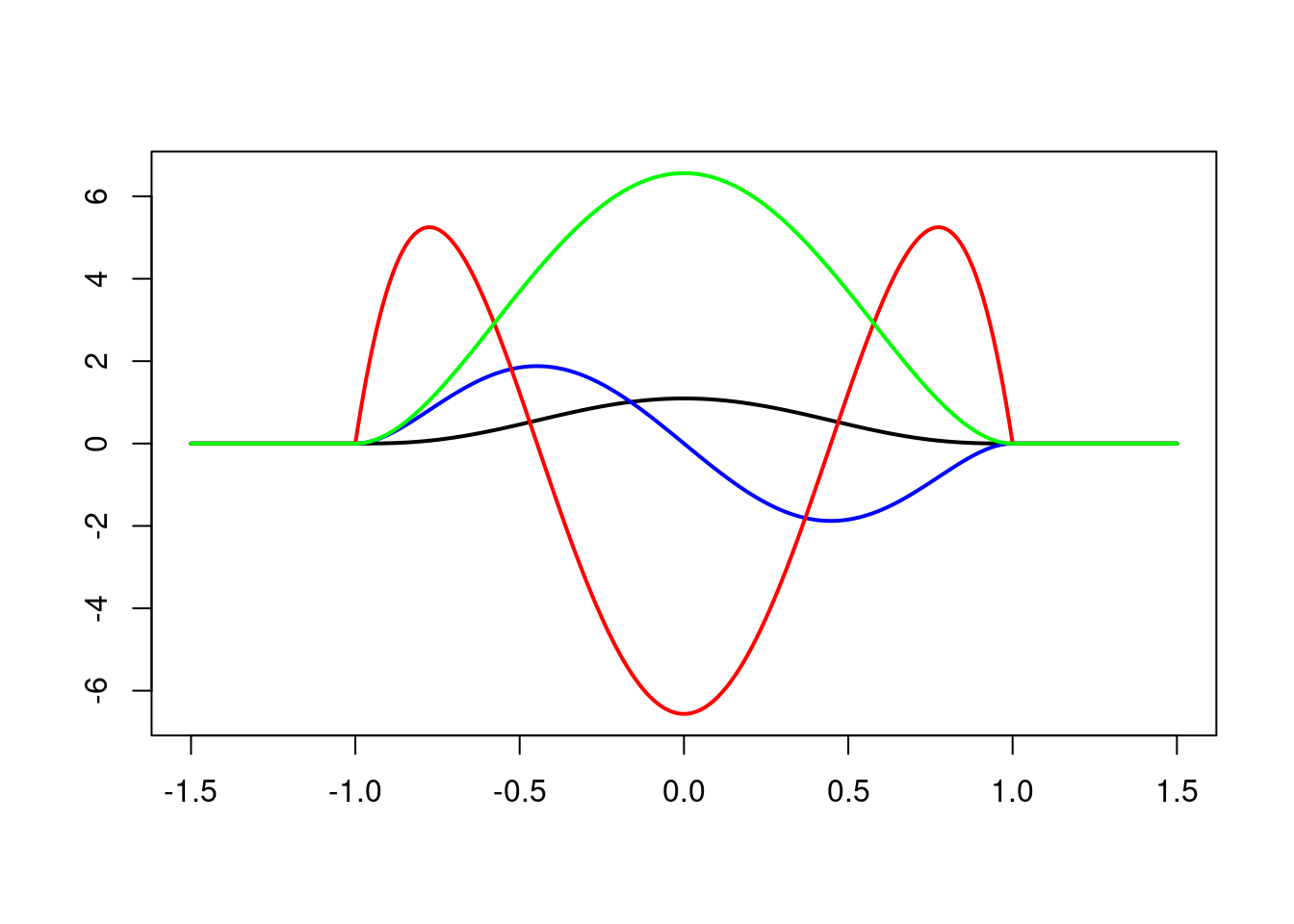}
        \caption{Triweight kernel}
    \end{subfigure}
    \begin{subfigure}[b]{\figurewidth}
        \centering
        \includegraphics[width=\subfigurewidth]{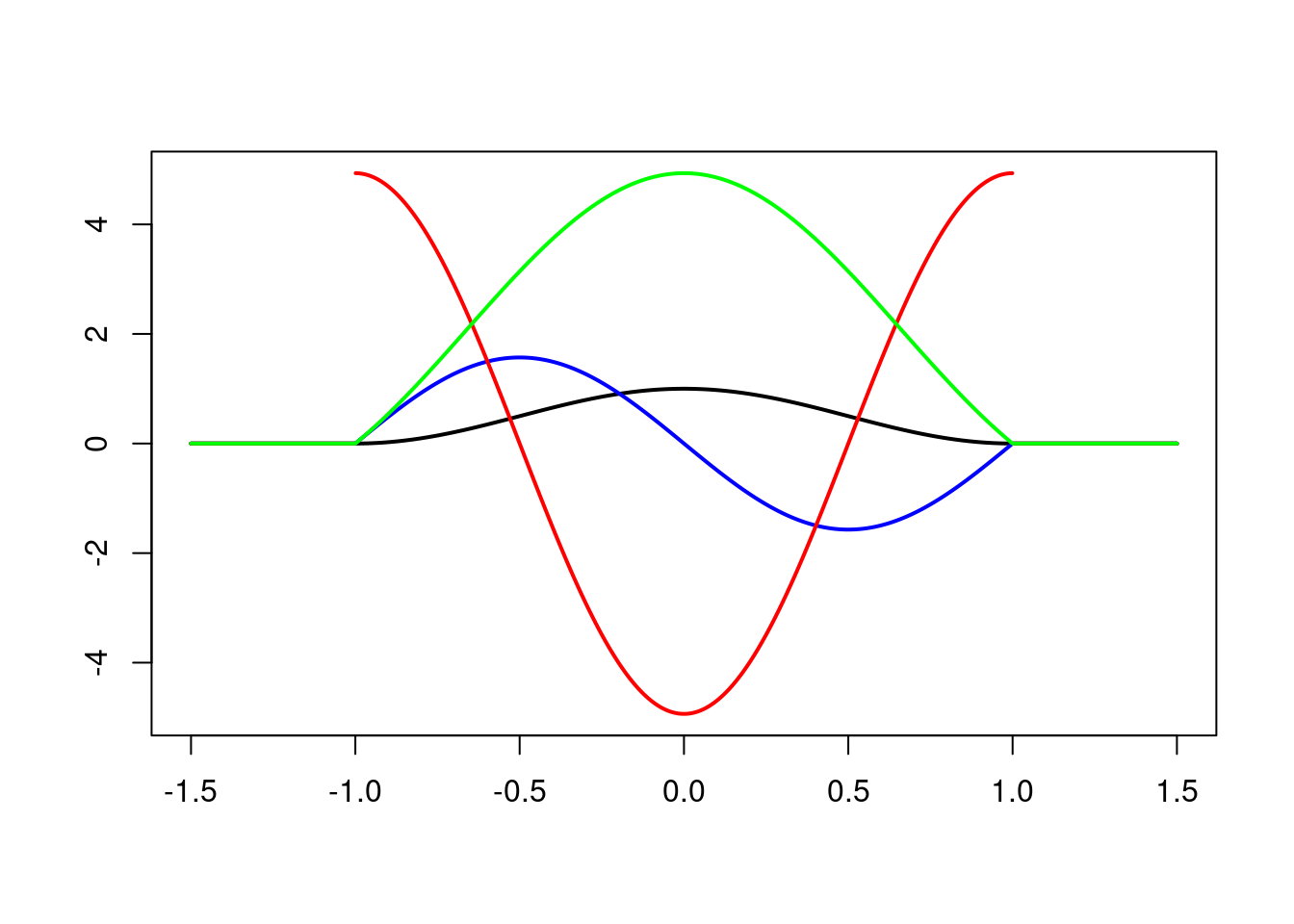}
        \caption{Raised cosine kernel}
    \end{subfigure}
    \caption{%
        Kernels and associated functions corresponding to the redescending M-estimators Tukey's biweight (left) and Andrew's sine (right) used in the simulation study.
        The plot structure is the same as in \figurename~\ref{fig:chacon-serrano-kernel-family}, excluding the vertical lines that marked the inflection points.
        The left and right sets of functions are similar, except for the second derivative (red), which is discontinuous at $\absvalof{\x} = 1$ for the raised cosine.
    }
    \label{fig:redescending-m-estimators}
\end{figure}

Each of the nine test-bed \glspl*{pdf} will be paired with three sample sizes, $\samplesize = 100$ (small), $\samplesize = 1,000$ (medium-sized), and $\samplesize = 10,000$ (large), giving rise to twenty-seven sampling configurations.
Then, $\numrepetitions = 1,000$ repetitions of each random experiment will be carried out to obtain significant results.
For each $\genindexi$-th replication and each contending estimator $\method$, we shall compute an estimate $\methodestimateofsymmetrycenter_{\genindexi}$ of the center of symmetry $\symmetrycenter = 0$.
Let us assume that $\methodone$ is our \gls*{kme} proposal while $\methodtwo$ is some other method.
Then, letting $\mseofmethod = \numrepetitions^{-1} \sum_{\genindexi = 1}^{\numrepetitions} (\methodestimateofsymmetrycenter_{\genindexi} - \symmetrycenter)^2$ be the \gls*{mse} of the estimator $\method$, a relative efficiency measure comparing two methods $\methodone$ and $\methodtwo$ is given by $\mserrorof{\methodone} / \mserrorof{\methodtwo}$.
Values less than one favor $\methodone$, while those greater than one give the advantage to $\methodtwo$.
On the other hand, for evaluating on a \textit{per-sample} basis, in addition to counting how many experiments satisfy
$
    \smallabsvalof{\methodoneestimateofsymmetrycenter_{\genindexi} - \symmetrycenter}
    <
    \smallabsvalof{\methodtwoestimateofsymmetrycenter_{\genindexi} - \symmetrycenter}
$,
a matched-samples Wilcoxon test can be performed to reject the \textit{null} hypothesis that the sequence
$
    (\smallabsvalof{\methodtwoestimateofsymmetrycenter_{\genindexi} - \symmetrycenter})_{\genindexi = 1}^{\numrepetitions}
$
has an equal or lower \textit{median} value than the sequence
$
    (\smallabsvalof{\methodoneestimateofsymmetrycenter_{\genindexi} - \symmetrycenter})_{\genindexi = 1}^{\numrepetitions}
$.
A low $p$-value will reject the null hypothesis in favor of the \textit{alternative} that our \gls*{kme} has a lower median value.

The results from the simulation study are split across three tables with the same structure.
Let us analyze each of them separately.
First, Table~\ref{tab:simulation-results-1} compares our \gls*{kme} with the two classic M-estimators: the sample mean and median.
The overall results largely favor our proposal, especially with the three most heavy-tailed test-beds: \texttt{student\_t\_1}, \texttt{student\_t\_2}, and \texttt{outlier}.
There are only two understandable exceptions, the \texttt{normal} and \texttt{laplace}, for which, as mentioned above, the sample mean and median are \glspl*{mle}, respectively.
Even so, the performance gap reduces for $\samplesize = 10,000$, especially with respect to the sample mean, which, though not significantly, is outperformed.

The results in Table~\ref{tab:simulation-results-2} are the least advantageous for our proposal, which nonetheless prevails.
The trimmed and winsorized means, especially the former, perform similarly to our \gls*{kme} in many settings.
For most Student's t \glspl*{pdf}, namely $\studenttdegreesoffreedom \in \{2, 3, 4\}$, our \gls*{kme} is neither significantly better nor worse than the modified means.
The same could be said about \texttt{logistic}.
Our method only performs consistently better with \texttt{student\_t\_1}, \texttt{student\_t\_5}, and \texttt{outlier}.
Surprisingly, the modified means do not behave like the sample mean with the test-bed $\texttt{normal}$, losing the advantage to our \gls*{kme} in small and medium-sized samples.
In turn, the trimmed mean performs especially well with \texttt{laplace} and $\samplesize < 10,000$, contrary to the winsorized mean.

Lastly, Table~\ref{tab:simulation-results-3} compares the redescending M-estimators.
Both Tukey's biweight and Andrew's sine have similar performances.
The results show that our \gls*{kme} is more versatile, outperforming the other two in most settings, namely, in all the \glspl*{pdf} of Student's t family and the \texttt{laplace}.
In particular, our \gls*{kme} outperforms the recommended method by~\citeauthor{Maronna2019} for the Cauchy distribution, which is the biweight~\citep[p. 65]{Maronna2019}.
Notwithstanding, the redescending M-estimators demonstrate some proficiency in those scenarios that are more akin to robust statistics, where our \gls*{kme} proposal shows a similar performance.
These cases include the \texttt{normal} and the similarly-shaped \texttt{logistic}, as well as the \texttt{outlier}, which is a \textit{contaminated} normal.

\begin{table}
    \centering
    \footnotesize
    \begin{tabular}{llllll}
    \toprule
    \multicolumn{1}{c}{ }  & \multicolumn{1}{c}{ } & \multicolumn{2}{c}{Mean} & \multicolumn{2}{c}{Median}                                                 \\
    \cmidrule(l{3pt}r{3pt}){3-4} \cmidrule(l{3pt}r{3pt}){5-6}
    Test-bed               & $n$                   & MSE                      & Paired samples             & MSE               & Paired samples            \\
    \midrule
    \texttt{student\_t\_1} & \texttt{100}          & \texttt{6.43E-06}        & \texttt{.925 (p < .001)}   & \texttt{9.10E-01} & \texttt{.539 (p = .0016)} \\

                           & \texttt{1000}         & \texttt{2.52E-06}        & \texttt{.972 (p < .001)}   & \texttt{7.98E-01} & \texttt{.581 (p < .001)}  \\

                           & \texttt{10000}        & \texttt{2.98E-08}        & \texttt{.988 (p < .001)}   & \texttt{8.52E-01} & \texttt{.545 (p < .001)}  \\
    \addlinespace
    \texttt{student\_t\_2} & \texttt{100}          & \texttt{1.55E-01}        & \texttt{.705 (p < .001)}   & \texttt{8.99E-01} & \texttt{.565 (p < .001)}  \\

                           & \texttt{1000}         & \texttt{1.57E-01}        & \texttt{.757 (p < .001)}   & \texttt{8.50E-01} & \texttt{.547 (p < .001)}  \\

                           & \texttt{10000}        & \texttt{1.10E-01}        & \texttt{.776 (p < .001)}   & \texttt{8.48E-01} & \texttt{.568 (p < .001)}  \\
    \addlinespace
    \texttt{student\_t\_3} & \texttt{100}          & \texttt{5.52E-01}        & \texttt{.610 (p < .001)}   & \texttt{8.69E-01} & \texttt{.564 (p < .001)}  \\

                           & \texttt{1000}         & \texttt{5.10E-01}        & \texttt{.659 (p < .001)}   & \texttt{8.28E-01} & \texttt{.563 (p < .001)}  \\

                           & \texttt{10000}        & \texttt{5.03E-01}        & \texttt{.649 (p < .001)}   & \texttt{8.31E-01} & \texttt{.567 (p < .001)}  \\
    \addlinespace
    \texttt{student\_t\_4} & \texttt{100}          & \texttt{8.19E-01}        & \texttt{.566 (p < .001)}   & \texttt{8.39E-01} & \texttt{.576 (p < .001)}  \\

                           & \texttt{1000}         & \texttt{7.16E-01}        & \texttt{.596 (p < .001)}   & \texttt{8.06E-01} & \texttt{.551 (p < .001)}  \\

                           & \texttt{10000}        & \texttt{7.05E-01}        & \texttt{.594 (p < .001)}   & \texttt{8.15E-01} & \texttt{.554 (p < .001)}  \\
    \addlinespace
    \texttt{student\_t\_5} & \texttt{100}          & \texttt{1.04E+00}        & \texttt{.536 (p = .0012)}  & \texttt{9.41E-01} & \texttt{.589 (p < .001)}  \\

                           & \texttt{1000}         & \texttt{7.68E-01}        & \texttt{.594 (p < .001)}   & \texttt{7.68E-01} & \texttt{.583 (p < .001)}  \\

                           & \texttt{10000}        & \texttt{7.82E-01}        & \texttt{.585 (p < .001)}   & \texttt{7.60E-01} & \texttt{.576 (p < .001)}  \\
    \addlinespace
    \texttt{logistic}      & \texttt{100}          & \texttt{1.00E+00}        & \texttt{.532 (p = .24)}    & \texttt{8.52E-01} & \texttt{.574 (p < .001)}  \\

                           & \texttt{1000}         & \texttt{9.21E-01}        & \texttt{.555 (p < .001)}   & \texttt{7.37E-01} & \texttt{.597 (p < .001)}  \\

                           & \texttt{10000}        & \texttt{9.27E-01}        & \texttt{.534 (p = .0033)}  & \texttt{7.13E-01} & \texttt{.560 (p < .001)}  \\
    \addlinespace
    \texttt{outlier}       & \texttt{100}          & \texttt{1.34E-03}        & \texttt{.972 (p < .001)}   & \texttt{6.44E-01} & \texttt{.641 (p < .001)}  \\

                           & \texttt{1000}         & \texttt{1.13E-03}        & \texttt{.975 (p < .001)}   & \texttt{5.60E-01} & \texttt{.640 (p < .001)}  \\

                           & \texttt{10000}        & \texttt{1.09E-03}        & \texttt{.986 (p < .001)}   & \texttt{5.68E-01} & \texttt{.647 (p < .001)}  \\
    \addlinespace
    \texttt{normal}        & \texttt{100}          & \texttt{1.67E+00}        & \texttt{.450 (p = 1)}      & \texttt{1.08E+00} & \texttt{.612 (p < .001)}  \\

                           & \texttt{1000}         & \texttt{1.01E+00}        & \texttt{.487 (p = .85)}    & \texttt{6.68E-01} & \texttt{.601 (p < .001)}  \\

                           & \texttt{10000}        & \texttt{9.99E-01}        & \texttt{.514 (p = .06)}    & \texttt{6.20E-01} & \texttt{.617 (p < .001)}  \\
    \addlinespace
    \texttt{laplace}       & \texttt{100}          & \texttt{6.42E-01}        & \texttt{.618 (p < .001)}   & \texttt{1.12E+00} & \texttt{.422 (p = 1)}     \\

                           & \texttt{1000}         & \texttt{5.50E-01}        & \texttt{.635 (p < .001)}   & \texttt{1.05E+00} & \texttt{.461 (p = .99)}   \\

                           & \texttt{10000}        & \texttt{5.17E-01}        & \texttt{.632 (p < .001)}   & \texttt{1.03E+00} & \texttt{.486 (p = .97)}   \\
    \bottomrule
\end{tabular}

    \caption{%
        Part one of the simulation study results, comparing our \gls*{kme} with the sample mean and median.
        The columns \gls*{mse} represent the ratio $\mserrorof{\methodone} / \mserrorof{\methodtwo}$, where $\methodone$ is our \gls*{kme} proposal, and $\methodtwo$ is either the sample mean or the sample median.
        The columns ``Paired samples'' represent a comparison on a \textit{per-sample} basis.
        The fractional number is the proportion of the $\numrepetitions$ replications of the random experiment in which
        $
            \smallabsvalof{\methodoneestimateofsymmetrycenter_{\genindexi} - \symmetrycenter}
            <
            \smallabsvalof{\methodtwoestimateofsymmetrycenter_{\genindexi} - \symmetrycenter}
        $.
        The $p$-value in parentheses derives from a Wilcoxon paired-samples test with the alternative hypothesis that our \gls*{kme} has a lower \textit{median} value of
        $
            \smallabsvalof{\methodestimateofsymmetrycenter_{\genindexi} - \symmetrycenter}
        $.
    }
    \label{tab:simulation-results-1}
\end{table}

\begin{table}
    \centering
    \footnotesize
    \begin{tabular}{llllll}
    \toprule
    \multicolumn{1}{c}{ }  & \multicolumn{1}{c}{ } & \multicolumn{2}{c}{Trimmed mean} & \multicolumn{2}{c}{Winsorized mean}                                                 \\
    \cmidrule(l{3pt}r{3pt}){3-4} \cmidrule(l{3pt}r{3pt}){5-6}
    Test-bed               & $n$                   & MSE                              & Paired samples                      & MSE               & Paired samples            \\
    \midrule
    \texttt{student\_t\_1} & \texttt{100}          & \texttt{9.03E-01}                & \texttt{.536 (p = .001)}            & \texttt{8.55E-01} & \texttt{.550 (p < .001)}  \\

                           & \texttt{1000}         & \texttt{8.65E-01}                & \texttt{.546 (p < .001)}            & \texttt{8.45E-01} & \texttt{.553 (p < .001)}  \\

                           & \texttt{10000}        & \texttt{9.02E-01}                & \texttt{.518 (p = .01)}             & \texttt{8.44E-01} & \texttt{.548 (p < .001)}  \\
    \addlinespace
    \texttt{student\_t\_2} & \texttt{100}          & \texttt{9.88E-01}                & \texttt{.537 (p = .02)}             & \texttt{9.74E-01} & \texttt{.518 (p = .04)}   \\

                           & \texttt{1000}         & \texttt{9.72E-01}                & \texttt{.512 (p = .43)}             & \texttt{9.44E-01} & \texttt{.516 (p = .08)}   \\

                           & \texttt{10000}        & \texttt{9.61E-01}                & \texttt{.497 (p = .04)}             & \texttt{9.24E-01} & \texttt{.549 (p < .001)}  \\
    \addlinespace
    \texttt{student\_t\_3} & \texttt{100}          & \texttt{1.03E+00}                & \texttt{.497 (p = .65)}             & \texttt{1.04E+00} & \texttt{.483 (p = .92)}   \\

                           & \texttt{1000}         & \texttt{1.01E+00}                & \texttt{.515 (p = .14)}             & \texttt{9.86E-01} & \texttt{.508 (p = .15)}   \\

                           & \texttt{10000}        & \texttt{9.81E-01}                & \texttt{.524 (p = .08)}             & \texttt{9.56E-01} & \texttt{.535 (p = .0093)} \\
    \addlinespace
    \texttt{student\_t\_4} & \texttt{100}          & \texttt{9.93E-01}                & \texttt{.493 (p = .78)}             & \texttt{1.01E+00} & \texttt{.495 (p = .83)}   \\

                           & \texttt{1000}         & \texttt{9.80E-01}                & \texttt{.499 (p = .41)}             & \texttt{9.78E-01} & \texttt{.507 (p = .35)}   \\

                           & \texttt{10000}        & \texttt{9.94E-01}                & \texttt{.484 (p = .6)}              & \texttt{9.72E-01} & \texttt{.510 (p = .07)}   \\
    \addlinespace
    \texttt{student\_t\_5} & \texttt{100}          & \texttt{1.13E+00}                & \texttt{.520 (p = .05)}             & \texttt{1.19E+00} & \texttt{.502 (p = .53)}   \\

                           & \texttt{1000}         & \texttt{9.57E-01}                & \texttt{.525 (p = .01)}             & \texttt{9.46E-01} & \texttt{.554 (p < .001)}  \\

                           & \texttt{10000}        & \texttt{9.67E-01}                & \texttt{.544 (p < .001)}            & \texttt{9.35E-01} & \texttt{.545 (p < .001)}  \\
    \addlinespace
    \texttt{logistic}      & \texttt{100}          & \texttt{1.01E+00}                & \texttt{.491 (p = .77)}             & \texttt{1.04E+00} & \texttt{.485 (p = .95)}   \\

                           & \texttt{1000}         & \texttt{9.86E-01}                & \texttt{.517 (p = .16)}             & \texttt{9.54E-01} & \texttt{.518 (p = .02)}   \\

                           & \texttt{10000}        & \texttt{1.00E+00}                & \texttt{.498 (p = .5)}              & \texttt{9.69E-01} & \texttt{.505 (p = .29)}   \\
    \addlinespace
    \texttt{outlier}       & \texttt{100}          & \texttt{7.77E-01}                & \texttt{.597 (p < .001)}            & \texttt{8.05E-01} & \texttt{.601 (p < .001)}  \\

                           & \texttt{1000}         & \texttt{7.33E-01}                & \texttt{.599 (p < .001)}            & \texttt{7.05E-01} & \texttt{.607 (p < .001)}  \\

                           & \texttt{10000}        & \texttt{7.52E-01}                & \texttt{.591 (p < .001)}            & \texttt{7.35E-01} & \texttt{.596 (p < .001)}  \\
    \addlinespace
    \texttt{normal}        & \texttt{100}          & \texttt{1.43E+00}                & \texttt{.516 (p = .0024)}           & \texttt{1.45E+00} & \texttt{.559 (p < .001)}  \\

                           & \texttt{1000}         & \texttt{9.75E-01}                & \texttt{.538 (p = .0023)}           & \texttt{9.19E-01} & \texttt{.534 (p < .001)}  \\

                           & \texttt{10000}        & \texttt{9.99E-01}                & \texttt{.508 (p = .7)}              & \texttt{9.48E-01} & \texttt{.511 (p = .01)}   \\
    \addlinespace
    \texttt{laplace}       & \texttt{100}          & \texttt{1.11E+00}                & \texttt{.438 (p = 1)}               & \texttt{9.94E-01} & \texttt{.511 (p = .27)}   \\

                           & \texttt{1000}         & \texttt{1.01E+00}                & \texttt{.455 (p = .98)}             & \texttt{8.71E-01} & \texttt{.589 (p < .001)}  \\

                           & \texttt{10000}        & \texttt{1.01E+00}                & \texttt{.503 (p = .52)}             & \texttt{8.91E-01} & \texttt{.558 (p < .001)}  \\
    \bottomrule
\end{tabular}

    \caption{%
        Part two of the simulation study results, comparing our \gls*{kme} with the trimmed and winsorized means, with the same structure as Table~\ref{tab:simulation-results-1}.
    }
    \label{tab:simulation-results-2}
\end{table}

\begin{table}
    \centering
    \footnotesize
    \begin{tabular}{llllll}
    \toprule
    \multicolumn{1}{c}{ }  & \multicolumn{1}{c}{ } & \multicolumn{2}{c}{Tukey's biweight} & \multicolumn{2}{c}{Andrew's sine}                                                 \\
    \cmidrule(l{3pt}r{3pt}){3-4} \cmidrule(l{3pt}r{3pt}){5-6}
    Test-bed               & $n$                   & MSE                                  & Paired samples                    & MSE               & Paired samples            \\
    \midrule
    \texttt{student\_t\_1} & \texttt{100}          & \texttt{6.90E-01}                    & \texttt{.614 (p < .001)}          & \texttt{5.91E-01} & \texttt{.626 (p < .001)}  \\

                           & \texttt{1000}         & \texttt{6.12E-01}                    & \texttt{.640 (p < .001)}          & \texttt{5.23E-01} & \texttt{.648 (p < .001)}  \\

                           & \texttt{10000}        & \texttt{6.32E-01}                    & \texttt{.626 (p < .001)}          & \texttt{5.48E-01} & \texttt{.636 (p < .001)}  \\
    \addlinespace
    \texttt{student\_t\_2} & \texttt{100}          & \texttt{9.28E-01}                    & \texttt{.544 (p = .0019)}         & \texttt{8.56E-01} & \texttt{.543 (p < .001)}  \\

                           & \texttt{1000}         & \texttt{8.85E-01}                    & \texttt{.556 (p < .001)}          & \texttt{8.26E-01} & \texttt{.581 (p < .001)}  \\

                           & \texttt{10000}        & \texttt{8.41E-01}                    & \texttt{.580 (p < .001)}          & \texttt{7.85E-01} & \texttt{.585 (p < .001)}  \\
    \addlinespace
    \texttt{student\_t\_3} & \texttt{100}          & \texttt{9.88E-01}                    & \texttt{.518 (p = .13)}           & \texttt{9.44E-01} & \texttt{.551 (p < .001)}  \\

                           & \texttt{1000}         & \texttt{9.26E-01}                    & \texttt{.557 (p < .001)}          & \texttt{8.89E-01} & \texttt{.555 (p < .001)}  \\

                           & \texttt{10000}        & \texttt{9.07E-01}                    & \texttt{.576 (p < .001)}          & \texttt{8.67E-01} & \texttt{.583 (p < .001)}  \\
    \addlinespace
    \texttt{student\_t\_4} & \texttt{100}          & \texttt{1.02E+00}                    & \texttt{.469 (p = .98)}           & \texttt{1.00E+00} & \texttt{.499 (p = .73)}   \\

                           & \texttt{1000}         & \texttt{9.63E-01}                    & \texttt{.551 (p < .001)}          & \texttt{9.27E-01} & \texttt{.552 (p < .001)}  \\

                           & \texttt{10000}        & \texttt{9.54E-01}                    & \texttt{.530 (p < .001)}          & \texttt{9.23E-01} & \texttt{.534 (p < .001)}  \\
    \addlinespace
    \texttt{student\_t\_5} & \texttt{100}          & \texttt{1.22E+00}                    & \texttt{.471 (p = .9)}            & \texttt{1.20E+00} & \texttt{.502 (p = .57)}   \\

                           & \texttt{1000}         & \texttt{9.87E-01}                    & \texttt{.517 (p = .06)}           & \texttt{9.58E-01} & \texttt{.546 (p = .0025)} \\

                           & \texttt{10000}        & \texttt{9.78E-01}                    & \texttt{.534 (p = .0089)}         & \texttt{9.54E-01} & \texttt{.538 (p = .0028)} \\
    \addlinespace
    \texttt{logistic}      & \texttt{100}          & \texttt{1.05E+00}                    & \texttt{.469 (p = .99)}           & \texttt{1.05E+00} & \texttt{.464 (p = .98)}   \\

                           & \texttt{1000}         & \texttt{9.93E-01}                    & \texttt{.506 (p = .1)}            & \texttt{9.85E-01} & \texttt{.536 (p = .02)}   \\

                           & \texttt{10000}        & \texttt{9.93E-01}                    & \texttt{.507 (p = .2)}            & \texttt{9.87E-01} & \texttt{.508 (p = .21)}   \\
    \addlinespace
    \texttt{outlier}       & \texttt{100}          & \texttt{1.10E+00}                    & \texttt{.530 (p = .03)}           & \texttt{1.11E+00} & \texttt{.514 (p = .2)}    \\

                           & \texttt{1000}         & \texttt{9.92E-01}                    & \texttt{.525 (p = .06)}           & \texttt{9.84E-01} & \texttt{.530 (p = .01)}   \\

                           & \texttt{10000}        & \texttt{1.00E+00}                    & \texttt{.492 (p = .59)}           & \texttt{1.01E+00} & \texttt{.489 (p = .54)}   \\
    \addlinespace
    \texttt{normal}        & \texttt{100}          & \texttt{1.64E+00}                    & \texttt{.520 (p = .18)}           & \texttt{1.66E+00} & \texttt{.507 (p = .55)}   \\

                           & \texttt{1000}         & \texttt{9.90E-01}                    & \texttt{.512 (p = .05)}           & \texttt{1.00E+00} & \texttt{.503 (p = .32)}   \\

                           & \texttt{10000}        & \texttt{9.92E-01}                    & \texttt{.492 (p = .36)}           & \texttt{9.99E-01} & \texttt{.486 (p = .56)}   \\
    \addlinespace
    \texttt{laplace}       & \texttt{100}          & \texttt{8.42E-01}                    & \texttt{.553 (p < .001)}          & \texttt{8.10E-01} & \texttt{.576 (p < .001)}  \\

                           & \texttt{1000}         & \texttt{7.30E-01}                    & \texttt{.613 (p < .001)}          & \texttt{6.93E-01} & \texttt{.613 (p < .001)}  \\

                           & \texttt{10000}        & \texttt{7.01E-01}                    & \texttt{.589 (p < .001)}          & \texttt{6.67E-01} & \texttt{.593 (p < .001)}  \\
    \bottomrule
\end{tabular}

    \caption{%
        Part three of the simulation study results, comparing our \gls*{kme} with the redescending M-estimators, with the same structure as Table~\ref{tab:simulation-results-1}.
    }
    \label{tab:simulation-results-3}
\end{table}

Apart from the simulation study results, \textit{debugging} the outputs of the parameter optimization process of Section~\ref{sec:parameter-optimization} is very instructive.
The underlying figures can be checked in Table~\ref{tab:simulation-study-parameter-optimization}, but we shall focus our comments on \figurename~\ref{fig:simulation-study-weight-functions}.
There, we can see the weight function~\eqref{eq:weight-function} for $\kernel \equiv \chaconserranokernel$, and $\parameters = \optimalparameters$.
The optimal parameters $\optimalparameters$ are obtained in \figurename~\ref{fig:simulation-study-weight-functions-true-pdfs} considering the true centered \gls*{pdf} $\untranslatedpdf$.
The ``optimal parameters'' in the remaining subfigures are an average of the $\numrepetitions$ optimal parameter vectors obtained considering an $\samplesize$-size sample estimate $\estimateofuntranslatedpdf$.

\begin{table}
    \centering
    \footnotesize
    \begin{tabular}{llllll}
    \toprule
    Test-bed               & Optimal $\beta$   & Optimal $h$       & $n$            & Mean $\beta$      & Mean $h$          \\
    \midrule
    \texttt{student\_t\_1} & \texttt{9.69E-02} & \texttt{3.04E+01} & \texttt{100}   & \texttt{1.93E+00} & \texttt{1.91E+01} \\

                           &                   &                   & \texttt{1000}  & \texttt{2.19E-01} & \texttt{2.25E+01} \\

                           &                   &                   & \texttt{10000} & \texttt{1.27E-01} & \texttt{2.47E+01} \\
    \addlinespace
    \texttt{student\_t\_2} & \texttt{1.57E-01} & \texttt{2.16E+01} & \texttt{100}   & \texttt{5.30E+00} & \texttt{5.43E+01} \\

                           &                   &                   & \texttt{1000}  & \texttt{2.90E-01} & \texttt{2.23E+01} \\

                           &                   &                   & \texttt{10000} & \texttt{1.87E-01} & \texttt{2.01E+01} \\
    \addlinespace
    \texttt{student\_t\_3} & \texttt{2.23E-01} & \texttt{1.73E+01} & \texttt{100}   & \texttt{8.23E+00} & \texttt{1.25E+02} \\

                           &                   &                   & \texttt{1000}  & \texttt{3.89E-01} & \texttt{2.07E+01} \\

                           &                   &                   & \texttt{10000} & \texttt{2.63E-01} & \texttt{1.64E+01} \\
    \addlinespace
    \texttt{student\_t\_4} & \texttt{2.91E-01} & \texttt{1.49E+01} & \texttt{100}   & \texttt{8.86E+00} & \texttt{1.86E+02} \\

                           &                   &                   & \texttt{1000}  & \texttt{4.74E-01} & \texttt{2.47E+01} \\

                           &                   &                   & \texttt{10000} & \texttt{3.34E-01} & \texttt{1.44E+01} \\
    \addlinespace
    \texttt{student\_t\_5} & \texttt{3.60E-01} & \texttt{1.33E+01} & \texttt{100}   & \texttt{7.97E+00} & \texttt{2.51E+02} \\

                           &                   &                   & \texttt{1000}  & \texttt{5.55E-01} & \texttt{2.20E+01} \\

                           &                   &                   & \texttt{10000} & \texttt{4.05E-01} & \texttt{1.33E+01} \\
    \addlinespace
    \texttt{logistic}      & \texttt{3.54E-01} & \texttt{2.54E+01} & \texttt{100}   & \texttt{9.14E+00} & \texttt{5.57E+02} \\

                           &                   &                   & \texttt{1000}  & \texttt{8.53E-01} & \texttt{2.70E+02} \\

                           &                   &                   & \texttt{10000} & \texttt{4.27E-01} & \texttt{2.69E+01} \\
    \addlinespace
    \texttt{outlier}       & \texttt{6.34E+00} & \texttt{4.31E+00} & \texttt{100}   & \texttt{1.39E+01} & \texttt{1.75E+01} \\

                           &                   &                   & \texttt{1000}  & \texttt{3.45E+01} & \texttt{5.19E+00} \\

                           &                   &                   & \texttt{10000} & \texttt{1.88E+01} & \texttt{4.55E+00} \\
    \addlinespace
    \texttt{normal}        & \texttt{5.34E+00} & \texttt{2.29E+01} & \texttt{100}   & \texttt{5.62E+00} & \texttt{2.13E+02} \\

                           &                   &                   & \texttt{1000}  & \texttt{5.45E+00} & \texttt{6.00E+01} \\

                           &                   &                   & \texttt{10000} & \texttt{4.72E+00} & \texttt{1.94E+01} \\
    \addlinespace
    \texttt{laplace}       & \texttt{7.07E-02} & \texttt{3.53E+01} & \texttt{100}   & \texttt{3.58E+00} & \texttt{3.31E+02} \\

                           &                   &                   & \texttt{1000}  & \texttt{5.50E-02} & \texttt{3.44E+02} \\

                           &                   &                   & \texttt{10000} & \texttt{3.33E-02} & \texttt{9.99E+02} \\
    \bottomrule
\end{tabular}

    \caption{%
        Optimal parameters $\parameters$ for each sampling configuration of test-bed and sample size $\samplesize$.
        The ``Optimal'' columns represent the optimal values of $\chaconserranokernelparam$ and $\bandwidth$ using the true $\untranslatedpdf$ in the optimization process.
        The ``Mean'' columns are averages of the optimal values of $\chaconserranokernelparam$ and $\bandwidth$ obtained for each random sample using the estimate $\estimateofuntranslatedpdf$ in the optimization.
    }
    \label{tab:simulation-study-parameter-optimization}
\end{table}

\begin{figure}
    \centering
    \begin{subfigure}[b]{\figurewidth}
        \centering
        \includegraphics[width=\subfigurewidth]{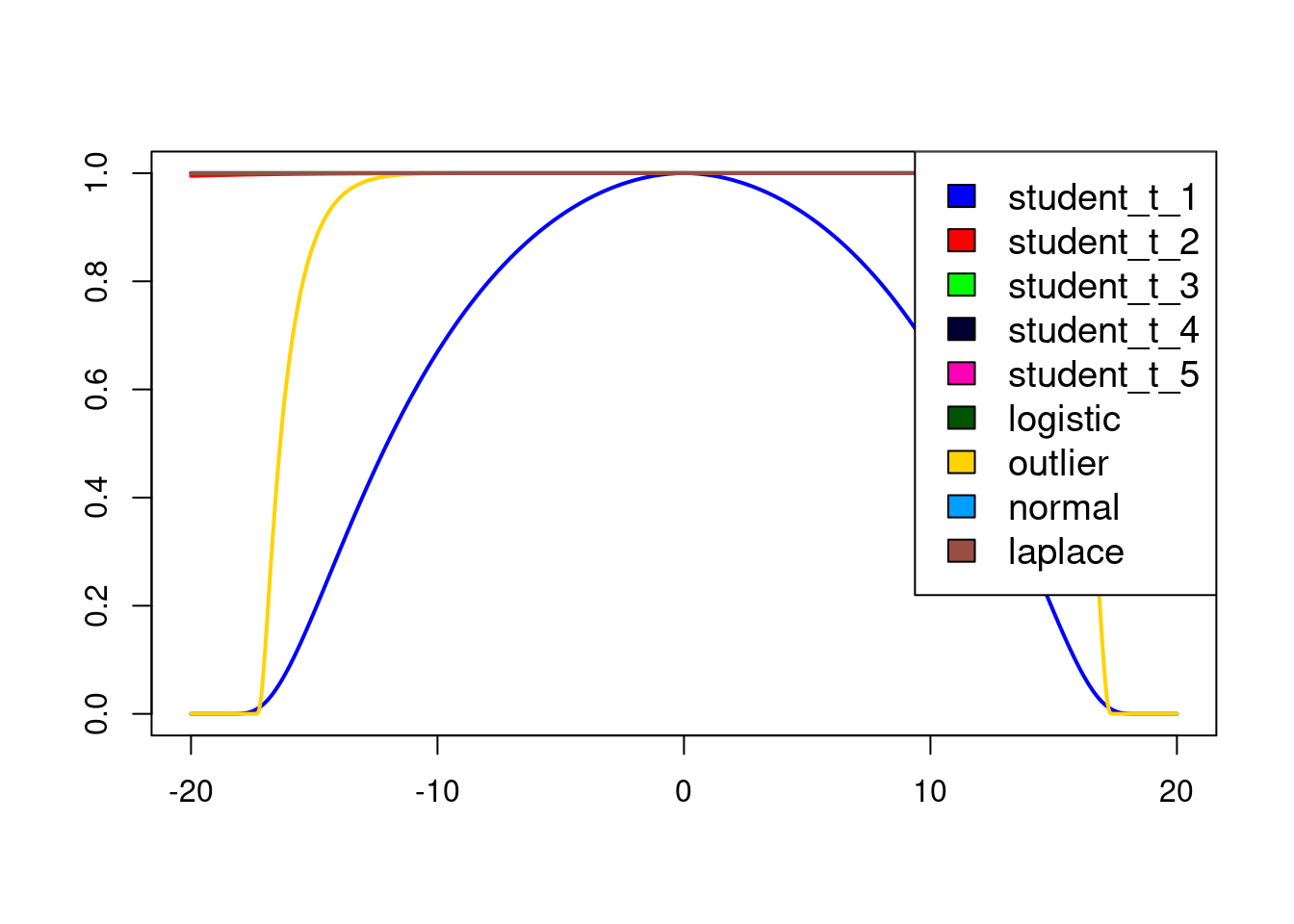}
        \caption{$\samplesize = 100$}
        \label{fig:simulation-study-weight-functions-100}
    \end{subfigure}
    \begin{subfigure}[b]{\figurewidth}
        \centering
        \includegraphics[width=\subfigurewidth]{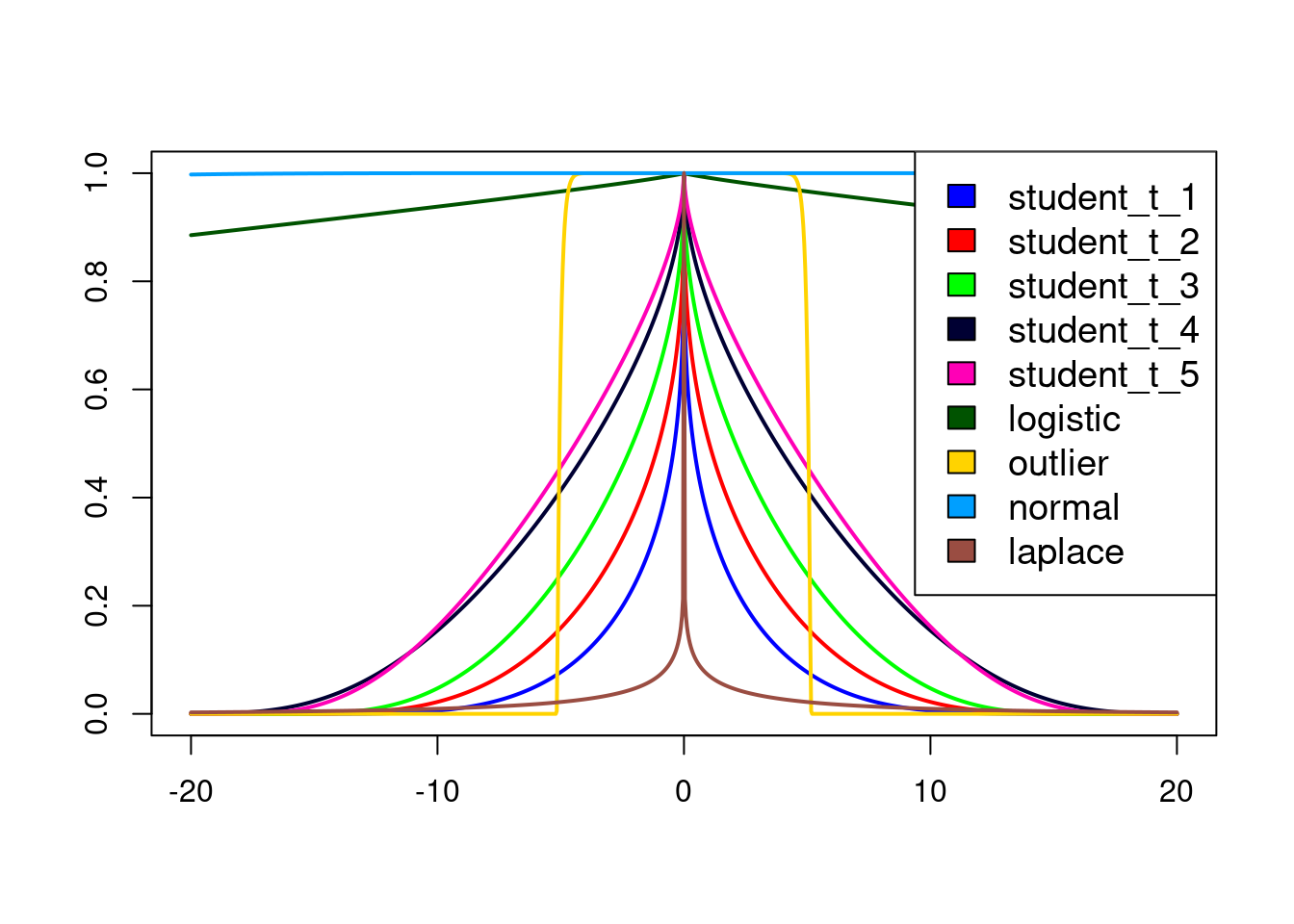}
        \caption{$\samplesize = 1,000$}
        \label{fig:simulation-study-weight-functions-1000}
    \end{subfigure}
    \begin{subfigure}[b]{\figurewidth}
        \centering
        \includegraphics[width=\subfigurewidth]{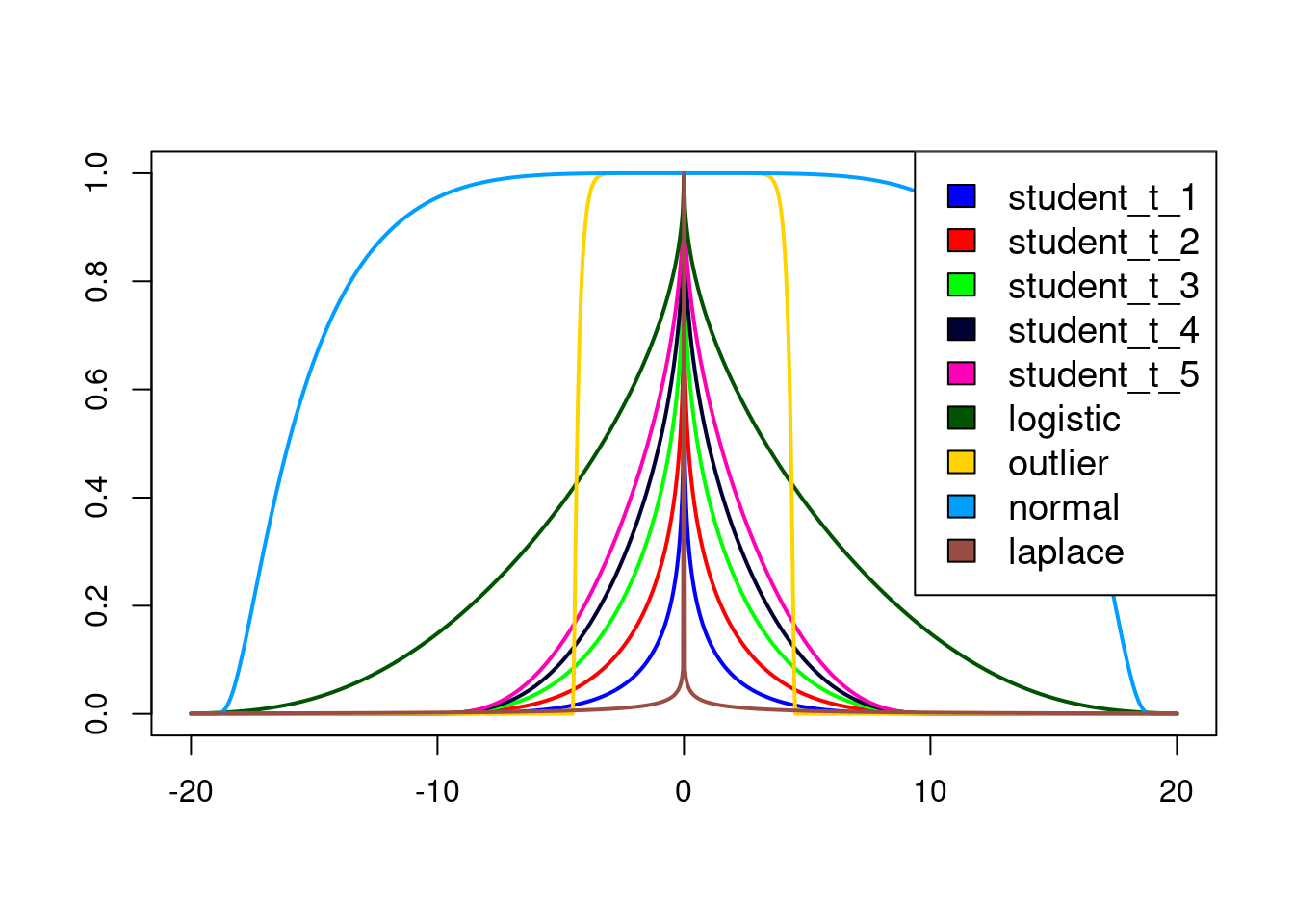}
        \caption{$\samplesize = 10,000$}
        \label{fig:simulation-study-weight-functions-10000}
    \end{subfigure}
    \begin{subfigure}[b]{\figurewidth}
        \centering
        \includegraphics[width=\subfigurewidth]{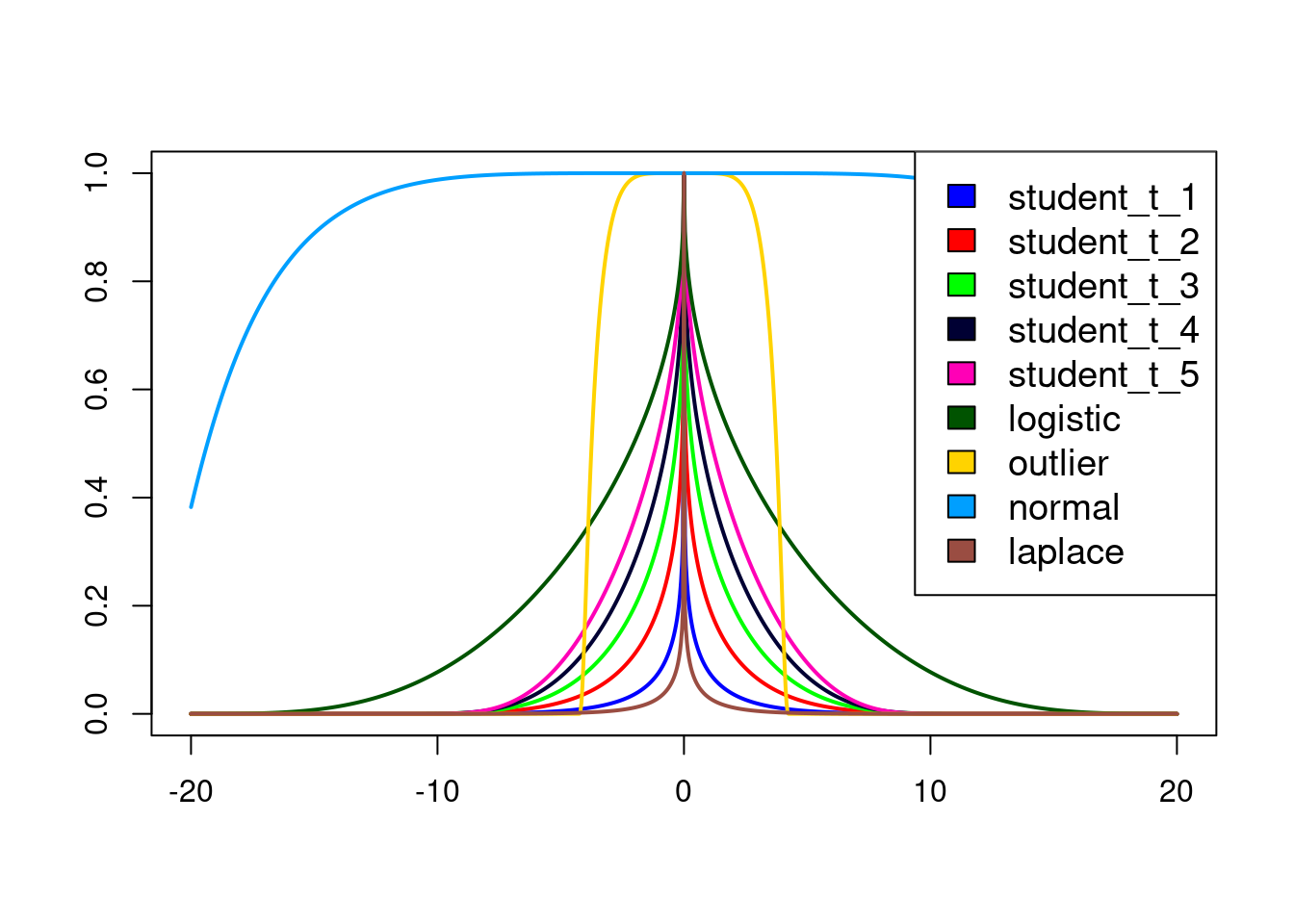}
        \caption{True \glspl*{pdf}}
        \label{fig:simulation-study-weight-functions-true-pdfs}
    \end{subfigure}
    \caption{%
        \gls*{irw} weight functions~\eqref{eq:weight-function}.
        For the bottom right subfigure, the underlying parameters $\parameters$ are the optimal ones $\optimalparameters$, considering the true centered \gls*{pdf} $\untranslatedpdf$.
        For the remaining subfigures, we computed an average of the optimal parameters for each $\samplesize$-size sample, considering the estimate $\estimateofuntranslatedpdf$.
    }
    \label{fig:simulation-study-weight-functions}
\end{figure}

The first aspect worth noticing in \figurename~\ref{fig:simulation-study-weight-functions} is the natural progression of the weight function estimates as $\samplesize$ grows, from \figurename~\ref{fig:simulation-study-weight-functions-100} to \figurename~\ref{fig:simulation-study-weight-functions-10000}, towards the true optimal weight functions in \figurename~\ref{fig:simulation-study-weight-functions-true-pdfs}.
In \figurename~\ref{fig:simulation-study-weight-functions-100}, our \gls*{kme} produces nearly \textit{flat} weight functions in most sampling configurations, effectively behaving like a sample mean.
Only the three most heavy-tailed test-beds (\texttt{student\_t\_1}, \texttt{student\_t\_2}, and \texttt{outlier}) make the estimates slightly \textit{bend}.
The situation starts changing in \figurename~\ref{fig:simulation-study-weight-functions-1000} and finally almost settles in \figurename~\ref{fig:simulation-study-weight-functions-10000}.
The second remarkable point about \figurename~\ref{fig:simulation-study-weight-functions} is that, except for the \texttt{normal} and the \texttt{outlier}, all test-beds \textit{prefer} a rapid weight decay ($\chaconserranokernelparam < 1$), being the most extreme case that of the \texttt{laplace}.

Last but not least, the optimal values $\chaconserranokernelparam$ in Table~\ref{tab:simulation-study-parameter-optimization} for the Student's t test-beds show that the condition imposed by Theorem~\ref{th:chacon-serrano-kernel} on $\chaconserranokernelparam$ is far stronger than necessary in practice.
Even though $\chaconserranokernelparam$ grows with $\studenttdegreesoffreedom$, we do not have $\chaconserranokernelparam > \studenttdegreesoffreedom$.
Indeed, Theorem~\ref{th:chacon-serrano-kernel} establishes a very stringent hypothesis on the kernel so that the sample mean is outperformed regardless of the bandwidth, providing that $\bandwidth$ is sufficiently large.
In our optimization scenario, however, we neither rely on a fixed kernel nor need an unbounded subset of bandwidths to outperform the sample mean.

\section{Discussion}

\label{sec:discussion}

This paper provides theoretical and practical results about estimating the center of symmetry from a modal point of view, covering a clear \textit{gap} between the nonparametrics and robust statistics communities.
Specifically, we adopt a purely nonparametric approach to a classic theme in robust statistics such as location estimation, assuming only symmetry and unimodality.
Along the way, interesting connections with \glspl*{kde} are enabled.

We have studied the efficiency of \glspl*{kme} in terms of the bandwidth $\bandwidth$, formalizing results without any parametric assumption, contrary to robust statistics.
On the one hand, a bandwidth too small dangerously increases the variance, supporting the \textit{empirical} caveats of~\citet{Huber2009} against wrong ``scaling''.
Conversely, increasingly large bandwidths make the \gls*{kme} behave like the sample mean.
However, it turns out that the \gls*{kme} can asymptotically outperform the sample mean with heavy-tailed, non-Gaussian data.
Namely, considering regularly varying \glspl*{pdf}, some kernels $\kernel$ allow finding an optimal bandwidth that minimizes the variance of the asymptotic distribution beyond that of the sample mean M-estimator.
In that sense, we also provide a novel parametric family of kernels $\chaconserranokernel$ whose parameter $\chaconserranokernelparam$, connected to the regular variation index $\regularvariationindex$, can be tuned to achieve that goal.

The theoretical efficiency calculations in \figurename~\ref{fig:kme-variance} demonstrate that, contrary to the widespread \textit{belief} shared by~\citet[p. 99]{Huber2009}, the shape of the kernel $\chaconserranokernel$, i.e., correctly choosing $\chaconserranokernelparam$, might be critical with specific data, to the point of rendering useless any optimization of $\bandwidth$ if the kernel is not suitable.
These considerations lead to a natural two-step \gls*{kme} procedure consisting of an \gls*{irw} run preceded by a double joint optimization of the shape parameter $\chaconserranokernelparam$ and the bandwidth $\bandwidth$, based on a \textit{plug-in} estimate of the \gls*{kme} variance.
Technical details on obtaining an estimate $\estimateofuntranslatedpdf$ and performing the two-dimensional optimization were also provided.

In addition to the above theoretical guarantees, the simulation study shed some light on the actual performance of our proposal compared to the sample mean and other more competitive methods.
The results are very favorable for our proposal.
The sample mean and median were easily outperformed whenever possible.
Then, two more sophisticated methods, the trimmed and winsorized means, though having very few weak points, were outperformed more often than not by our \gls*{kme},
especially with heavy-tailed \glspl*{pdf}.
Finally, the two redescending M-estimators, also \glspl*{kme}, were not nearly as versatile as our proposal, showing that tuning both the scale and shape parameters makes the difference.

The simulation and case studies revealed the compelling flexibility of the kernel family $\chaconserranokernel$.
For the more Gaussian-like \glspl*{pdf} and data, values $\chaconserranokernelparam > 1$ were chosen, producing smooth, nearly flat weight functions.
By contrast, for most test-beds, including all the regularly varying \glspl*{pdf}, values $\chaconserranokernelparam < 1$ were \textit{preferred}, producing a \textit{steep} slope near zero for the kernel derivative that resembles the $\mestimatorlossderivative$ of the sample median.
The latter shape, pervasive in our results, is \textit{rare} among usual redescending $\mestimatorlossderivative$ functions such as those in \figurename~\ref{fig:redescending-m-estimators}, designed with the typical robust statistics assumption of Gaussianity in mind.
However, in a fully nonparametric context, we see that those prefixed designs need not be optimal anymore.
All in all, the novel kernel family allows the \gls*{kme} to display a \textit{mean-like} or a \textit{median-like} behavior as needed.
These pieces of evidence deserve further investigation, reopening the quest for flexible redescending M-estimators.

Our results by no means invalidate those of robust statistics.
The latter framework is theoretically appealing and well-founded for many applications, where a default parametric model, typically Gaussian, can be assumed.
However, the robust statistics community has virtually ignored the relationship between redescending M-estimators and \glspl*{kme}.
On the other hand, the nonparametrics community has bypassed the symmetric, unimodal case so far, which is an interesting situation where asymptotically small bandwidths are paradoxically counterproductive.
By focusing on efficiency, beyond the robustness gained by employing a kernel with compact support (always welcomed), our research demonstrates that \glspl*{kme} provide an effective alternative as estimators of the center of symmetry.

\section*{Acknowledgements}

The research of the first author has been supported by the MICINN grant PID2021-124051NB-I00.
The second author would like to thank Professor Amparo Ba\'{i}llo Moreno for her advice as a doctoral counselor at the Autonomous University of Madrid.

\appendix

\section{Proofs}

\label{sec:proofs}

This appendix provides proof of the results in Section~\ref{sec:theoretical-results}.

\begin{proof}[Proof of Theorem~\ref{th:symmetric-unimodal-smoothed-pdf}]
    Let us first check the symmetry of $\smoothedpdf$ about $\symmetrycenter$.
    Using the symmetry of $\pdf$ and $\kernel$, we can easily see that if $\samplerandomvar \followsdistr \pdf$,
    \begin{equation*}
        \begin{split}
            \untranslatedsmoothedpdf
            &=
            \expectedvalueof{\kernelh(\symmetrycenter + \x - \samplerandomvar)}
            =
            \expectedvalueof{\kernelh(\x + \samplerandomvar - \symmetrycenter)}
            \\
            &=
            \expectedvalueof{\kernelh(\symmetrycenter - \x - \samplerandomvar)}
            =
            \smoothedpdf(\symmetrycenter - \x)
            \,.
        \end{split}
    \end{equation*}

    Now, to see the unimodality of $\smoothedpdf$, we will check that $\lambdafunction{\x}{\untranslatedsmoothedpdf}$ is a strictly decreasing function of $\absvalof{\x}$ over its support by calculating its derivative.
    Without loss of generality, we will assume that $\x \geq 0$ since the symmetry proved above will automatically cover the case of a negative argument.
    Therefore, straightforward calculations yield
    \begin{equation}
        \label{eq:smoothed-pdf}
        \begin{aligned}[b]
            \untranslatedsmoothedpdf
             & =
            \intoverreals
            \kernelh(\symmetrycenter + \x - \y)
            \pdf(\y)
            \ \dy
            =
            \intoverreals
            \kernelh(\x - \y)
            \pdf(\symmetrycenter + \y)
            \ \dy
            \\
             & =
            \intoverreals
            \kernelh(\x - \y)
            \untranslatedpdf(\y)
            \ \dy
            \,,
        \end{aligned}
    \end{equation}
    and, subsequently, since $\derivativeofkernelh$ being bounded allows differentiation under the integral,
    \begin{equation}
        \label{eq:smoothed-pdf-derivative}
        \begin{aligned}[b]
            \derivativesmoothedpdf
             & =
            \intoverreals
            \derivativeofkernelh(\x - \y)
            \untranslatedpdf(\y)
            \ \dy
            =
            \intoverreals
            \untranslatedpdf(\x - \y)
            \derivativeofkernelh(\y)
            \ \dy
            \\
             & =
            \intoverpositivereals
            [\untranslatedpdf(\x - \y) - \untranslatedpdf(\x + \y)]
            \derivativeofkernelh(\y)
            \ \dy
            \,.
        \end{aligned}
    \end{equation}
    If $\untranslatedpdf$ and $\kernelh$ have support $[-\pdfsupportedge, \pdfsupportedge]$ and $[-\kernelsupportedge, \kernelsupportedge]$, respectively, for $\pdfsupportedge, \kernelsupportedge \in (0, \infty]$, one can easily see from~\eqref{eq:smoothed-pdf} that the support of $\lambdafunction{\x}{\untranslatedsmoothedpdf}$ is $[-\pdfsupportedge - \kernelsupportedge, \pdfsupportedge + \kernelsupportedge]$.
    Hence, we shall check that~\eqref{eq:smoothed-pdf-derivative} is zero for $\x = 0$ and negative for $\x \in (0, \pdfsupportedge + \kernelsupportedge)$.

    When $\x = 0$, since $\untranslatedpdf$ is symmetric about zero, the derivative is zero, and thus the maximum is reached.
    For $\x > 0$, we will split the proof into two separate cases: $\pdfsupportedge = \infty$, on the one hand, and $\pdfsupportedge < \infty$, on the other.

    First, let us assume that $\pdfsupportedge = \infty$.
    If $\x > 0$, then $x$ and $y$ in~\eqref{eq:smoothed-pdf-derivative} are positive, and we have $\absvalof{\x - \y} < x + y$.
    Hence, $\untranslatedpdf(\x - \y) > \untranslatedpdf(\x + \y)$ in~\eqref{eq:smoothed-pdf-derivative} for all $\y > 0$ since both arguments lie in the support of $\untranslatedpdf$.
    Finally, the net sign of~\eqref{eq:smoothed-pdf-derivative} is negative because $\derivativeofkernelh$ is negative over $(0, \kernelsupportedge)$.

    Secondly, let us assume that $\pdfsupportedge < \infty$.
    The conclusion easily follows if $\x < \pdfsupportedge$, similarly to the case $\pdfsupportedge = \infty$ above, so we shall suppose $\x \geq \pdfsupportedge$.
    In such case, we have $\untranslatedpdf(\x + \y) = 0$ in~\eqref{eq:smoothed-pdf-derivative} for all $\y > 0$, and
    \begin{equation*}
        \begin{split}
            \derivativesmoothedpdf
            & =
            \intoverpositivereals
            \untranslatedpdf(\x - \y)
            \derivativeofkernelh(\y)
            \ \dy
            =
            \int_{-\infty}^{\x}
            \derivativeofkernelh(\x - \y)
            \untranslatedpdf(\y)
            \ \dy
            \\
            &=
            \int_{\max\{\x - \kernelsupportedge, -\pdfsupportedge\}}^{\pdfsupportedge}
            \derivativeofkernelh(\x - \y)
            \untranslatedpdf(\y)
            \ \dy
            \,,
        \end{split}
    \end{equation*}
    where the last equality is obtained by intersecting the supports of $\lambdafunction{\y}{\derivativeofkernelh(\x - \y)}$ and $\untranslatedpdf$ with $(-\infty, \x)$.
    Finally, the result follows after observing that $
        \lambdafunction{\y}
        {\derivativeofkernelh(\x - \y) \untranslatedpdf(\y)}
    $
    is negative over $
        (\max\{\x - \kernelsupportedge, -\pdfsupportedge\}, \pdfsupportedge)
    $,
    while the latter interval is not empty because $\x < \pdfsupportedge + \kernelsupportedge$.
\end{proof}

\begin{proof}[Proof of Theorem~\ref{th:symmetric-kme}]
    Let $\randomvarx \samedistribution \randomvary$ denote equality in distribution between two random variables $\randomvarx$ and $\randomvary$.
    We have to verify that $\centeredkme \samedistribution \symmetrycenter - \samplemode$, which is equivalent to $\samplemode \samedistribution 2\symmetrycenter - \samplemode$.
    Note that the right-hand side of the last equality in distribution is the \gls*{kme} reflected about $\symmetrycenter$, which is equal to the \gls*{kme} corresponding to the \gls*{kde} based on the reflected sample $(2\symmetrycenter - \samplerandomvarindex{1}, \dots, 2\symmetrycenter - \samplerandomvarindex{\samplesize})$.
    Since (i) the latter \gls*{iid} sample follows the same distribution as the original one, (ii) the kernel $\kernel$ is symmetric, and (iii) ties among candidate modes are broken uniformly at random in a symmetric way, we necessarily have that both \glspl*{kme} have the same distribution.
    Finally, the unbiasedness of $\samplemode$ is derived by taking expectations at both sides of the equality in distribution condition of symmetry.
\end{proof}

\begin{proof}[Proof of Theorem~\ref{th:sample-mode-asymptotic-normality}]
    Most of the proof is reduced to applying~\citet[Theorem 10.7]{Maronna2019}.
    Indeed,~\eqref{eq:sample-mode-asymptotic-variance} corresponds to Equation (2.24) in that same reference, with $\mestimatorlossderivative(\x) = -\derivativeofkernelh(\x)$.
    There only remains to check that the numerator and denominator in~\eqref{eq:sample-mode-asymptotic-variance} are finite, and that the latter is non-null.
    Since $\derivativeofkernelh$ and $\secondderivativeof{\kernelh}$ are bounded, the numerator and denominator are finite.
    Also due to the boundedness of $\secondderivativeofkernelh$, differentiation can go under the integral sign in the denominator as in~\citet[Theorem 10.7]{Maronna2019}.

    Finally, let us see that the denominator is not zero.
    Let $\inflectionpoint$ be the positive inflection point of $\kernelh$.
    Note that
    \begin{equation*}
        \begin{split}
            \int_0^{\inflectionpoint}
            \absvalof{\secondderivativeof{\kernelh}(\x)}
            \untranslatedpdf(\x)
            \ \dx
            &>
            \untranslatedpdf(\inflectionpoint)
            \int_0^{\inflectionpoint}
            \absvalof{\secondderivativeof{\kernelh}(\x)}
            \ \dx
            =
            \untranslatedpdf(\inflectionpoint)
            \int_{\inflectionpoint}^{\infty}
            \absvalof{\secondderivativeof{\kernelh}(\x)}
            \ \dx
            \\
            &\geq
            \int_{\inflectionpoint}^{\infty}
            \absvalof{\secondderivativeof{\kernelh}(\x)}
            \untranslatedpdf(\x)
            \ \dx
            \,,
        \end{split}
    \end{equation*}
    where we have used that $\untranslatedpdf$ is decreasing for positive $\x$ and that, since $\kernelh$ is bell-shaped,
    $
        \int_0^{\inflectionpoint}
        \absvalof{\secondderivativeof{\kernelh}(\x)}
        \ \dx
        =
        \int_{\inflectionpoint}^{\infty}
        \absvalof{\secondderivativeof{\kernelh}(\x)}
        \ \dx
        >
        0
    $.
    Therefore, we finally have
    \begin{equation*}
        \begin{split}
            \intoverreals
            \secondderivativeof{\kernelh}(\symmetrycenter - \x)
            \pdfatx
            \ \dx
            &=
            \intoverreals
            \secondderivativeof{\kernelh}(\x)
            \untranslatedpdf(\x)
            \ \dx
            =
            2
            \intoverpositivereals
            \secondderivativeof{\kernelh}(\x)
            \untranslatedpdf(\x)
            \ \dx
            \\
            &=
            2
            \left(
            \int_{\inflectionpoint}^{\infty}
            \absvalof{\secondderivativeof{\kernelh}(\x)}
            \untranslatedpdf(\x)
            \ \dx
            -
            \int_0^{\inflectionpoint}
            \absvalof{\secondderivativeof{\kernelh}(\x)}
            \untranslatedpdf(\x)
            \ \dx
            \right)
            <
            0
            \,.
        \end{split}
    \end{equation*}
\end{proof}

\begin{proof}[Proof of Theorem~\ref{th:limit-variance-h}]
    To prove part~\ref{it:limit-variance-h-tends-to-zero}, we note that, given the smoothness of $\pdf$, and the fact that $\derivativeofkernel$ and $\derivativeofpdf$ are bounded, we have $\convolutionvariancedenominator = \convolvedpdfsecondderivative$~\citep[Exercise 2.25]{Wand1995}.
    Then, $\limhzero \convolvedpdfsecondderivative = \curvatureatsymmetrycenter$ since $\kernelh$ is an approximation to the identity for the convolution operation~\citep[Theorem 1A]{Parzen1962}, given the remaining assumptions on $\pdf$ and $\kernel$.
    Therefore, using that $\pdfatsymmetrycenter$ and $\curvatureatsymmetrycenter$ are finite, and $\pdf$ is bounded and continuous, calculations yield
    \begin{equation*}
        \begin{split}
            \limhzero
            \squaredsmoothedstddevmode
            &=
            \abscurvatureatsymmetrycenter^{-2}
            \limhzero
            \convolutionvariancenumerator
            \\
            &=
            \abscurvatureatsymmetrycenter^{-2}
            \limhzero
            \intoverreals \pdfatmuminusx \derivativeofkernelh(\x)^2 \ \dx
            \\
            &=
            \abscurvatureatsymmetrycenter^{-2}
            \limhzero
            \bandwidth^{-3}
            \intoverreals \pdfatmuminushx \derivativeofkernel(\x)^2 \ \dx
            \\
            &=
            \squarestddevmodeexpanded
            \limhzero
            \bandwidth^{-3}
            =
            \infty
            \,,
        \end{split}
    \end{equation*}
    regardless of $\curvatureatsymmetrycenter$ being zero or not.
    Finally, if $\curvatureatsymmetrycenter < 0$, nearly identical steps show that $\limhzero \bandwidth^3 \squaredsmoothedstddevmode / \squarestddevmode = 1$.

    To prove part~\ref{it:limit-variance-h-tends-to-infty}, first note that
    \begin{equation*}
        \bandwidth \derivativeofkernel(\inversebandwidth \x)
        =
        \bandwidth
        [\derivativeofkernel(\inversebandwidth \x) - \derivativeofkernel(0)]
        =
        \bandwidth \int_0^{\x/\bandwidth} \secondderivativeofkernel(\y) \ \dy
        =
        \int_0^{\x} \secondderivativeofkernel(\inversebandwidth\y) \ \dy
        \,.
    \end{equation*}
    From last equation, since $\secondderivativeofkernel$ is bounded, we get $
        \limhinf
        \bandwidth \derivativeofkernel(\inversebandwidth \x)
        =
        \x
        \secondderivativeofkernel(0)
    $.
    Similarly, there exists some $\constant > 0$ such that $\smallabsvalof{\bandwidth \derivativeofkernel(\inversebandwidth \x)} \leq \constant \smallabsvalof{\x}$ for all $\x \in \reals$ and all $\bandwidth > 0$.
    Then, calculations using the dominated convergence theorem yield
    \begin{equation*}
        \begin{split}
            \limhinf
            \squaredsmoothedstddevmode
            &=
            \limhinf
            \frac
            {
                \bandwidth^{-6}
                \intoverreals
                \pdfatmuminusx
                [\bandwidth \derivativeofkernel(\inversebandwidth \x)]^2
                \ \dx
            }
            {
                \bandwidth^{-6}
                \left(
                \intoverreals
                \pdfatmuminusx
                \secondderivativeofkernel(\inversebandwidth \x)
                \ \dx
                \right)^2
            }
            \\
            &=
            \frac
            {
                \secondderivativeofkernel(0)^2
                \intoverreals
                \x^2 \pdfatmuminusx
                \ \dx
            }
            {
                \left(
                \secondderivativeofkernel(0)
                \intoverreals
                \pdfatmuminusx
                \ \dx
                \right)^2
            }
            \\
            &=
            \frac
            {
                \intoverreals
                (\centeredx)^2 \pdfatx
                \ \dx
            }
            {
                \left(
                \intoverreals
                \pdfatx
                \ \dx
                \right)^2
            }
            =
            \squarestddev
            \,.
        \end{split}
    \end{equation*}
\end{proof}

\begin{proof}[Proof of Theorem~\ref{th:regular-variation}]
    Without loss of generality, let us assume that $\absvalof{\secondderivativeofkernel(0)} = 1$.
    Otherwise, simply consider the normalized kernel $\x \mapsto \kernelatx / \absvalof{\secondderivativeofkernel(0)}$.
    Defining
    \begin{equation*}
        \asymptoticsmoothedvariance
        =
        \bandwidth^2 \intoverreals \pdfatmuminusx \derivativeofkernel(\inversebandwidth \x)^2 \ \dx
        \,,
    \end{equation*}
    we shall prove that, on the one hand, $\limhinf \bandwidth^{-(\regularvariationindexplusthree)} (\asymptoticsmoothedvariance - \squarestddev) = \negativelimit$, and, on the other hand, $\limhinf \bandwidth^{-(\regularvariationindexplusthree)} (\squaredsmoothedstddevmode - \asymptoticsmoothedvariance) = 0$.

    To check the first limit, we note that
    \begin{equation}
        \label{eq:asymptotic-variance-numerator}
        \begin{split}
            \asymptoticsmoothedvariance - \squarestddev
            &=
            \squarebandwidth
            \intoverreals
            \pdfatmuminusx
            [\derivativeofkernel(\inversebandwidth \x)^2 - (\inversebandwidth \x)^2]
            \
            \dx
            \\
            &=
            2\bandwidth^3
            \intoverpositivereals
            \pdfatmuplushx
            [\derivativeofkernel(\x)^2 - \x^2]
            \
            \dx
            \\
            &=
            2\bandwidth^3
            \pdfatmuplush
            \intoverpositivereals
            \frac{\pdfatmuplushx}{\pdfatmuplush}
            [\derivativeofkernel(\x)^2 - \x^2]
            \
            \dx
            \\
            &=
            2\bandwidth^{\regularvariationindexplusthree}
            \slowlyvaryingcomponent(\bandwidth)
            \intoverpositivereals
            \xtoregularvariationindex
            \frac{\slowlyvaryingcomponent(\bandwidth \x)}{\slowlyvaryingcomponent(\bandwidth)}
            [\derivativeofkernel(\x)^2 - \x^2]
            \
            \dx
            \\
            &\asymptoticallyequivalent
            2\bandwidth^{\regularvariationindexplusthree}
            \slowlyvaryingcomponent(\bandwidth)
            \intoverpositivereals
            \xtoregularvariationindex
            [\derivativeofkernel(\x)^2 - \x^2]
            \
            \dx
            \,,
        \end{split}
    \end{equation}
    where the boundedness of $\slowlyvaryingcomponent$ is used to bring the limit inside the integral.

    To check the second limit, let us first define
    \begin{equation*}
        \smoothedvariancedenominator
        =
        \intoverreals \pdfatmuminusx \secondderivativeofkernel(\inversebandwidth \x) \ \dx
        \,.
    \end{equation*}
    Then, we note that
    \begin{equation*}
        \squaredsmoothedstddevmode - \asymptoticsmoothedvariance
        =
        \squaredsmoothedstddevmode
        (1 + \smoothedvariancedenominator)
        (1 - \smoothedvariancedenominator)
        \asymptoticallyequivalent
        2\squarestddev
        (1 + \smoothedvariancedenominator)
        \,.
    \end{equation*}
    Finally, similarly to~\eqref{eq:asymptotic-variance-numerator},
    \begin{equation*}
        \begin{split}
            1 + \smoothedvariancedenominator
            &=
            \intoverreals
            \pdfatmuminusx
            [1 + \secondderivativeofkernel(\inversebandwidth \x)]
            \ \dx
            \\
            &=
            2\bandwidth
            \intoverpositivereals
            \pdfatmuplushx
            [1 + \secondderivativeofkernel(\x)]
            \
            \dx
            \\
            &=
            2\bandwidth
            \pdfatmuplush
            \intoverpositivereals
            \frac{\pdfatmuplushx}{\pdfatmuplush}
            [1 + \secondderivativeofkernel(\x)]
            \
            \dx
            \\
            &\asymptoticallyequivalent
            2\bandwidth^{\regularvariationindex + 1}
            \slowlyvaryingcomponent(\bandwidth)
            \intoverpositivereals
            \xtoregularvariationindex
            [1 + \secondderivativeofkernel(\x)]
            \
            \dx
            \,,
        \end{split}
    \end{equation*}
    and the result follows after $\limhinf \bandwidth^{-(\regularvariationindexplusthree)} \bandwidth^{\regularvariationindex + 1} = \limhinf \bandwidth^{-2} = 0$.
\end{proof}

\begin{proof}[Proof of Corollary~\ref{th:corollary-optimal-h}]
    It is a direct consequence of Theorem~\ref{th:limit-variance-h} and Theorem~\ref{th:regular-variation}, with analogous reasoning to the proof of~\citet[Theorem 1]{Chacon2007}.
    The result follows from the following facts: (i) $\variancefunction$ is continuous, (ii) $\limhzero \variancefunction(\bandwidth) = \infty$, (iii) $\limhinf \variancefunction(\bandwidth) = \squarestddev$, and (iv) $\variancefunction(\bandwidth) < \squarestddev$ for all sufficiently large $\bandwidth$.
\end{proof}

The proof of Theorem~\ref{th:chacon-serrano-kernel} relies on the following lemma.

\begin{lemma}
    \label{th:lemma-integrability}
    Let $\functiondef{\genericfunction}{[0, 1]}{\reals}$ be strictly increasing, continuous, and such that $\genericfunction(0) = 0$ and $\genericfunction(1) = \infty$.
    Given $\genericexponent < 0$, we have $\finiteintegral < \infty$ whenever $\lambdafunction{\x}{\integrablefunction}$ is integrable near zero.
\end{lemma}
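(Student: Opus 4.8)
The plan is to remove the absolute value, split the integral at a point bounded away from both endpoints, and control each piece by a different mechanism: the hypothesis near zero and sheer boundedness near one. The single analytic ingredient is the elementary inequality $1 - e^{-t} \le t$, valid for all $t \ge 0$.

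First I would record that, because $\genericfunction$ is strictly increasing with $\genericfunction(0) = 0$, one has $\genericfunction(\x) > 0$ for every $\x \in (0, 1]$, hence $0 < \exp[-\genericfunction(\x)] < 1$ and therefore $\absexponentialminusone = 1 - \exp[-\genericfunction(\x)]$ on all of $(0, 1)$. This makes the integrand nonnegative and frees it of the modulus.

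Next I would fix any $\cutpoint \in (0, 1)$ and write $\finiteintegral$ as the sum of the integral over $(0, \cutpoint)$ and the integral over $(\cutpoint, 1)$. On the tail $(\cutpoint, 1)$ the factor $\x^{\genericexponent}$ is bounded by $\cutpoint^{\genericexponent}$ (recall $\genericexponent < 0$) while $0 < 1 - \exp[-\genericfunction(\x)] < 1$, so the integrand is dominated by the constant $\cutpoint^{\genericexponent}$ and the piece is finite. On $(0, \cutpoint)$ I would invoke $1 - e^{-t} \le t$ with $t = \genericfunction(\x)$ to obtain
\[
    \x^{\genericexponent} \absexponentialminusone
    =
    \x^{\genericexponent}\,(1 - \exp[-\genericfunction(\x)])
    \le
    \integrablefunction \,,
\]
whose right-hand side is integrable over $(0, \cutpoint)$ by the standing assumption that $\lambdafunction{\x}{\integrablefunction}$ is integrable near zero. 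Summing the two bounds yields $\finiteintegral < \infty$.

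The point worth flagging --- and the only thing that prevents a one-line argument --- is that $\integrablefunction$ is \emph{not} a global majorant: since $\genericfunction(1) = \infty$, the candidate dominating function $\integrablefunction$ blows up at $\x = 1$, so it cannot control the integrand on the whole of $(0, 1)$. This is precisely why the split at $\cutpoint$ is needed and why the tail must be handled by the boundedness of $1 - \exp[-\genericfunction(\x)]$ rather than by the hypothesis. There is otherwise no genuine difficulty.
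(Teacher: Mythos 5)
Your proof is correct and follows essentially the same route as the paper's: split the integral at a cut point, dominate the near-zero piece by $\integrablefunction$ using a linear bound on $\absexponentialminusone$, and handle the tail piece by boundedness of the integrand together with $\x^{\genericexponent} \leq \cutpoint^{\genericexponent}$. The only differences are cosmetic: you use the elementary inequality $1 - e^{-t} \leq t$ (constant $1$, arbitrary cut point) where the paper uses a Taylor expansion with Lagrange remainder (constant $3/2$, cut point defined by $\genericfunction(\cutpoint) = 1$), and if you insist on an \emph{arbitrary} $\cutpoint$ you should add the one-line remark that $\integrablefunction$ is continuous, hence bounded, on any $[\delta, \cutpoint] \subset (0,1)$, or simply choose $\cutpoint$ inside the neighborhood of zero where integrability is assumed.
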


\begin{proof}
    Using a Taylor expansion, for every $\x \in (0, 1)$, we have
    \begin{equation*}
        \exponentialminusone
        =
        -\genericfunction(\x)
        +
        \frac{e^{\taylorintermediatevalue}}{2}
        \genericfunction(\x)^2
        \,,
    \end{equation*}
    for some $\taylorintermediatevalue \in (-\genericfunction(\x), 0)$.
    Let $\cutpoint \in (0, 1)$ be the unique point such that $\genericfunction(\cutpoint) = 1$.
    Then, for every $\x \in (0, \cutpoint)$,
    \begin{equation*}
        \absexponentialminusone
        \leq
        \genericfunction(\x)
        +
        \frac{1}{2}
        \genericfunction(\x)^2
        \leq
        \frac{3}{2}
        \genericfunction(\x)
        \,.
    \end{equation*}
    Consequently, given $\genericexponent < 0$, it follows that
    \begin{equation*}
        \finiteintegral
        \leq
        \frac{3}{2}
        \int_0^{\cutpoint}
        \integrablefunction
        \ \dx
        +
        \int_{\cutpoint}^1
        \x^{\genericexponent}
        \absexponentialminusone
        \ \dx
        \,,
    \end{equation*}
    and the integrability of the left-hand side holds if $\integrablefunction$ is integrable near zero.
\end{proof}

\begin{proof}[Proof of Theorem~\ref{th:chacon-serrano-kernel}]
    Let us first check the hypotheses of Theorem~\ref{th:limit-variance-h}.
    The second derivative of the kernel is
    \begin{equation}
        \label{eq:chacon-serrano-kernel-second-derivative}
        \secondderivativeofchaconserranokernel(\x)
        \propto
        \left[
            \frac
            {\chaconserranokernelparam
                \absvalof{\x}^{\chaconserranokernelparam}}
            {(1 - \absvalof{\x}^{\chaconserranokernelparam})^2}
            -
            1
            \right]
        \bumplikefunction(\x)
        \,.
    \end{equation}
    From~\eqref{eq:chacon-serrano-kernel-second-derivative}, it is not difficult to see that $\chaconserranokernel$ is bell-shaped, having as positive inflection point $
        \inflectionpoint
        =
        [
        (1 + \chaconserranokernelparam / 2)
        -
        \sqrt{%
            (1 + \chaconserranokernelparam / 2)^2
            -
            1
        }
        ]^{1/\chaconserranokernelparam}
        \in
        (0, 1)
    $.
    Also, $\secondderivativeofchaconserranokernel(0) \propto -1 / e < 0$.
    The remaining assumptions in Theorem~\ref{th:limit-variance-h}, part~\ref{it:limit-variance-h-tends-to-infty}, follow from $\chaconserranokernel$ having compact support.

    Now, let us check the two integral conditions in Theorem~\ref{th:regular-variation}.
    Remark~\ref{th:remark-integral-conditions} applies since $\chaconserranokernel$ has compact support $[-1, 1]$.
    Therefore, for the first integral condition, we have
    \begin{equation}
        \label{eq:first-integral-condition-chacon-serrano-kernel}
        \xtoregularvariationindex
        \left[
            \frac
            {\derivativeofchaconserranokernel(\x)^2}
            {\secondderivativeofchaconserranokernel(0)^2}
            -
            \x^2
            \right]
        =
        \x^{\regularvariationindex + 2}
        \left[
            \exp
            \left(
            \frac
            {-2 \x^{\chaconserranokernelparam}}
            {1 - \x^{\chaconserranokernelparam}}
            \right)
            -
            1
            \right]
        \,,
    \end{equation}
    which is non-positive, and for the second one,
    \begin{equation}
        \label{eq:second-integral-condition-chacon-serrano-kernel}
        \xtoregularvariationindex
        \left[
            1
            +
            \frac
            {\secondderivativeofchaconserranokernel(\x)}
            {\smallabsvalof{\secondderivativeofchaconserranokernel(0)}}
            \right]
        =
        \frac
        {%
            \chaconserranokernelparam
            \x^{\regularvariationindex + \chaconserranokernelparam}
        }
        {(1 - \x^{\chaconserranokernelparam})^2}
        \exp
        \left(
        \frac
        {-\x^{\chaconserranokernelparam}}
        {1 - \x^{\chaconserranokernelparam}}
        \right)
        -
        \xtoregularvariationindex
        \left[
            \exp
            \left(
            \frac
            {-\x^{\chaconserranokernelparam}}
            {1 - \x^{\chaconserranokernelparam}}
            \right)
            -
            1
            \right]
        \,.
    \end{equation}
    We shall see that~\eqref{eq:first-integral-condition-chacon-serrano-kernel} and~\eqref{eq:second-integral-condition-chacon-serrano-kernel} are integrable over $[0, 1]$.
    In both cases, we can use Lemma~\ref{th:lemma-integrability}.
    For~\eqref{eq:first-integral-condition-chacon-serrano-kernel}, integrability holds if $\absvalof{\regularvariationindexplusthree} < \chaconserranokernelparam$; for~\eqref{eq:second-integral-condition-chacon-serrano-kernel}, if $\absvalof{\regularvariationindex + 1} < \chaconserranokernelparam$.
    Finally, since $\absvalof{\regularvariationindexplusthree} < \absvalof{\regularvariationindex + 1}$, the result follows.
\end{proof}

\ifnum\journal=0
    \renewcommand*{\mkbibcompletename}[1]{\textsc{#1}}
\fi

\printbibliography[filter=references, title={References}, sorting=nyt]

\ifnum\journal=0
    \renewcommand*{\mkbibcompletename}[1]{#1}
\fi

\end{document}